\def\@endtheorem{\endtrivlist}
\newtheorem*{rep@theorem}{\rep@title}
\newcommand{\newreptheorem}[2]{%
\newenvironment{rep#1}[1]{%
 \def\rep@title{#2 \ref{##1}}%
 \begin{rep@theorem}}%
 {\end{rep@theorem}}}
\theoremstyle{plain}
\newtheorem{thm}{Theorem}[section]
\newtheorem{lem}[thm]{Lemma}
\newtheorem{prp}[thm]{Proposition}
\newtheorem{cor}[thm]{Corollary}
\newtheorem{dfn}[thm]{Definition}
\theoremstyle{definition}
\newtheorem{rmk}[thm]{Remark}
\newtheoremstyle{note}
{3pt}
{3pt}
{\bfseries}
{\parindent}
{\bfseries\itshape}
{:}
{.5em}
{}
\theoremstyle{note}
\newtheorem*{Question}{Question}
\newtheorem*{Problem}{Problem}
\newtheorem*{Note}{Note}
\renewcommand{\eqref}[1]{\labelcref{#1}}
\crefname{thm}{Theorem}{Theorems}
\crefname{lem}{Lemma}{Lemmas}
\crefname{prp}{Proposition}{Propositions}
\crefname{cor}{Corollary}{Corollaries}
\crefname{dfn}{Definition}{Definitions}
\crefname{example}{Example}{Examples}
\crefname{rmk}{Remark}{Remarks}
\newcommand{\Ad}{\mathrm{Ad}}
\newcommand{\sD}{\slashed{D}}
\DeclareMathOperator{\Tr}{Tr}
\DeclareMathOperator{\ad}{ad}
\DeclareMathOperator{\Id}{Id}
\DeclareMathOperator{\End}{End}
\newcommand{\ol}[1]{\overline{#1}}
\title[Quantization of the cotangent bundle of a compact Lie group]{Quantization commutes with singular reduction:  cotangent bundles of compact Lie groups}
\author{Jord Boeijink, Klaas Landsman, and Walter  van Suijlekom}
\address{Institute for Mathematics, Astrophysics and Particle Physics, Faculty of Science, Radboud University Nijmegen, Heyendaalseweg 135, 6525AJ Nijmegen, The Netherlands}
\email{mail@jordboeijink.nl; landsman@math.ru.nl; waltervs@math.ru.nl}
\begin{document}
\begin{abstract}
We analyze the `quantization commutes with reduction' problem (first studied in physics by Dirac, and known in the mathematical  literature also as the \emph{Guillemin--Sternberg Conjecture})
 for the conjugate action of a compact connected Lie group $G$ on its own cotangent bundle $T^*G$.
  This example is interesting because the momentum map is not proper and the ensuing symplectic (or Marsden--Weinstein quotient) $T^*G /\!/ \Ad G$ 
   is typically singular.
   
  In the spirit of (modern) geometric quantization, our quantization of $T^*G$ (with its standard K\"ahler structure)  is defined as the kernel of  the Dolbeault--Dirac operator (or, equivalently, the spin$\mbox{}^{\mathbb{C}}$--Dirac operator) twisted by the pre-quantum line bundle. We show that this quantization of $T^*G$
  reproduces the Hilbert space found earlier by Hall (2002) using  geometric quantization based on a holomorphic polarization. 
  We then define the quantization of the singular quotient $T^*G /\!/ \Ad G$ as the kernel of the twisted Dolbeault--Dirac operator on the principal stratum, and show that quantization commutes with reduction in the sense that either way one obtains the same Hilbert space $L^2(\textbf{T})^{W(G,\textbf{T})}$. 
\end{abstract}
\maketitle
 \noindent \emph{Keywords}: Guillemin--Sternberg Conjecture, singular quantization,  Dirac operators

\vspace{\baselineskip}\noindent
\emph{Mathematics Subject Classification 2010}: \textbf{53D50, 5802, 81S10}
\tableofcontents

\section{Introduction}\label{sct:introduction} 
Since this paper will soon become quite technical, we start with a conceptual introduction meant to provide the appropriate context for our mathematical results.

The two great fundamental theories of physics, namely General Relativity and the Standard Model (of elementary particle physics),
involve specific (gauge) field theories that are examples of constrained dynamical systems. This means that the variables (or degrees of freedom) in which the theory is \emph{a priori} defined are not the physical ones, both because there are (initial value) constraints on these variables and because 
some of them  are actually physically equivalent; this redundancy has to be removed by a reduction procedure that identifies such variables.
 These two aspects of constrained systems---i.e., constraints and reduction---turn out to be intimately linked, and both lead to problems in the quantization of such theories. These problems partly (but by no means only) arise  because the space of (duly constrained) physical degrees of freedom is typically \emph{singular}, i.e., not smooth. 

Constrained systems and their potential quantization were first systematically analyzed by Dirac \cite{Dirac}, whose work was subsequently rewritten in the language of symplectic geometry \cite{AM78,BSF,Sundermeyer} (we do not discuss the alternative algebraic treatment of constrained systems through the so-called BRST- or BV-formalisms here \cite{HT}). Although the theories just mentioned have infinitely many degrees of freedom (even after reduction), it is worth studying constrained systems with finitely many variables, both as an exercise for field theory and because these are of interest to mechanics in their own right
\cite{MarsdenRatiu,MarsdenMany}.

 Among finite-dynamical constrained systems, the case of Marsden--Weinstein reduction stands out  because of the clean definition  of the two steps of constraining and reducing  in terms of group actions on symplectic manifolds
 \cite{AM78, MarsdenRatiu,MW1974}. Indeed, let $M$ be a symplectic manifold and let some Lie group $G$ act on $M$ in strongly Hamiltonian fashion, so that it has an associated momentum map $j:M\rightarrow \mathfrak{g}^*$ (where $\mathfrak{g}^*$ is the dual of the Lie algebra $\mathfrak{g}$ of $G$). 
Then $j^{-1}(0)$ is the subspace of $M$ on which the constraints hold, and $M/\!/ G=j^{-1}(0)/G$, called the Marsden--Weinstein  (or symplectic)
 quotient (of $M$ by the given group action), is the reduced phase space in which all redundancies have been removed. Two regularity assumptions guarantee that $M/\!/ G$ is a manifold (which then is symplectic in a natural way): 
 \begin{enumerate}
\item  Zero must be a regular value of the momentum map $j$;
\item  The $G$-action on $j^{-1}(0)$ must be free. 
\end{enumerate}
In order to pose the `quantization commutes with reduction' problem, one needs:
\begin{itemize}
\item  Some quantization prescription $Q$ that 
maps $M$ and $M/\!/ G$ to certain `quantum data' $Q(M)$ and $Q(M/\!/ G)$  (involving Hilbert spaces and operators);
\item A `quantum reduction procedure' that transforms $Q(M)$ into quantum data $Q(M)/\!/ G$ by somehow mimicking classical Marsden--Weinstein reduction.
\end{itemize}
The  `quantization commutes with reduction'  problem is considered solved, then, if 
\begin{equation}
Q(M)/\!/ G\cong Q(M/\!/ G), \label{QMQMG}
\end{equation}
 where `$\cong$' denotes an appropriate  isomorphism whose nature (e.g.\ unitary) depends on the precise mathematical setting.
Such a solution was first achieved in the regular case by Guillemin and Sternberg \cite{GS82a,GS82b} for compact Lie group actions on compact K\"{a}hler manfolds, where the quantization procedure consisted of geometric quantization in the holomorphic polarization, the quantum data was a finite-dimensional Hilbert space $H(M)$ carrying a unitary $G$-action,  quantum reduction  $H(M)/\!/ G$ was defined by taking the $G$-invariant subspace of $H(M)$, and isomorphism just meant  equality of dimension of Hilbert spaces. This result  spurred a considerable mathematical literature in which `quantization commutes with reduction' was proved under increasingly relaxed assumptions (but still assuming regularity) and ensuing variations in the definition of quantization and quantum reduction; a sample of this literature, still under compactness assumptions, is \cite{Mei98,JefKir,TianZhang}.

The problem was subsequently generalized in two directions. First, one may allow the underlying spaces and groups to be non-compact, see, e.g., 
 \cite{Lan05,HL08,HM15, MZ09,MZ10,Paradan,Par11}, in all of which non-compactness was  tempered by requiring properness of the momentum map.
 Second, one may drop the regularity assumptions, so that singularities in $M/\!/ G$ may arise
 (typically maintaining compactness). 
The `quantization commutes with reduction' problem is much trickier in this case, if only because even the \emph{definition} of the quantization of the reduced space is at stake (more precisely: is even more ambiguous than it already was in the regular case).  Important guidance in this respect comes from  the fundamental paper \cite{SL91},  in which 
Sjamaar and Lerman proved that in the singular  case $M/\!/ G$ is stratified by symplectic manifolds, among which a `principal'  open dense stratum stands out;
see also the monographs \cite{OR04,Sny}.
On this basis, the first results on the problem more or less
in the tradition of the original work of Guillemin and Sternberg were given by Meinrenken and Sjamaar, who desingularized the reduced space 
\cite{MeinrenkenSjamaar};  see also \cite{Zhang,Teleman,Hue2006}. For rather different approaches, see  \cite{BCHS, HRS09,Hue11}, of which the latter two also analyze the quantization of $T^*G/\!/ \Ad G$.  See also \cite{LPS01}
for a survey of the field until 2000. 
In the present paper we perform a case study in which:
\begin{itemize}
\item  the original phase space $M=T^*G$ is non-compact (although $G$ is compact);
\item  the momentum map (defined by the pull-back of the adjoint $G$-action on itself) fails to be proper;
\item  the reduced phase space from these data is non-compact  as well as singular.
\end{itemize}
 Thus we have all possible kinds of trouble, albeit in a  relatively neat class of examples that can be completely worked out. As it turns out, reasonable notions of quantization and quantum reduction---still in the spirit of the original work of  Guillemin and Sternberg---exist also under these circumstances, and
  (\ref{QMQMG}) duly holds. \emph{Conceptually}, following \cite{Mei98} (who, in turn, as acknowledged in \cite{Sja96}, 
  followed an idea of Bott), we use an \emph{index-theoretic} definition of geometric quantization, as opposed to the original setting introduced by Kirillov, Kostant, and Souriau, as  reviewed in e.g. \cite{Hal2013,Woodhouse}. Our reasons for doing so lie precisely in the fact that the trouble alluded to above calls for a more flexible approach than the original one: see \cite{MeinrenkenSjamaar} for singular reduction and 
 \cite{HL08,Lan05,HM14,HM15,HS17a,HS17b} for the non-compact case. \emph{Technically}, for reasons explained before our key Definition \ref{dfn:quantization}, the interpretation of our notion of quantization (which \emph{a priori} is defined of the kernel of some operator) as an index can only be given \emph{a posteriori} from 
our Kodaira-style `vanishing' Theorem \ref{thm:kvt}.
 
 The price one pays for this increasingly abstract approach to quantization is a certain lack of feedback to the  physics problems that originally motivated the entire  `quantization commutes with reduction' enterprise, see above. Perhaps, however, these original problems were not formulated quite correctly  in the first place (the difficulty in quantizing General Relativity even heuristically, or Yang--Mills theory rigorously, seems to confirm this), their current mathematical formulation guiding the way to a future redefinition of quantization also in the context of physics. 
 
 We now explain our own approach in some detail, preceded by some conventions. 
 \subsection{Conventions}
Let $T^*M$ be the cotangent bundle of a smooth manifold $M$, and $\theta$ the canonical $1$-form on $T^*M$. The canonical symplectic structure on $T^*M$ is defined as  $\omega = d\theta$. 
The corresponding Liouville measure is
$
 \varepsilon =  \frac{(-1)^n}{n!} \omega^n,
$
provided the dimension of $M$ is equal to $n$.

If $J$ is an almost complex structure on $M$, then $J$ and $\omega$ are said to be \emph{compatible} if the symmetric $2$-form
\begin{align*}
 (X,Y) \mapsto \omega(JX,Y) =: g(X,Y), \quad (X,Y \in TM \times_M TM)
\end{align*}
defines a Riemannian metric on $M$.  The action of the almost complex structure on forms is defined as 
\begin{align*}
(J\alpha)(X) = -\alpha(JX), \quad (\alpha \in T_x^*M, X \in T_xM).
\end{align*}
Sesquilinear forms are always supposed to be anti-linear in the \emph{first} entry.
\subsection{Technical outline}
\label{sct:outline}
The canonical symplectic structure on the cotangent bundle $T^*G$ and the complex structure on $T^*G$ obtained by  identifying $T^*G$ with $G^\mathbb{C}$ (see \S\ref{sct:cot_bun} or \cite[Proof of Lemma 12]{Hal94}) combine into a K\"ahler structure \cite{Hal02}, which we we will refer to  as the \emph{standard K\"ahler structure} on $T^*G$. This is crucial for what follows. 
Our quantization of $T^*G$ is defined as the kernel of  the Dolbeault--Dirac operator on $T^*G$ \cite{Dui96,Fri00}
twisted by the pre-quantum line bundle \cite{Hal2013,Woodhouse}
\begin{align*}
(L,\nabla^L) := (T^*G \times \mathbb{C}, d + 2\pi i \theta),
\end{align*}
where $T^*G \times \mathbb{C}$ denotes the trivial hermitian complex line bundle over $T^*G$, and $\theta$ is the fundamental $1$-form on $T^*G$. Clearly, $(\nabla^L)^2 = 2\pi i \omega$, where $\omega = d\theta$ is the symplectic structure on $T^*G$. The Dolbeault--Dirac operator is defined as follows: 
\begin{dfn}
Let $(M,\omega)$ be a K\"ahler manifold. The \emph{Dolbeault--Dirac operator} on $\Gamma^\infty_c(M,\Lambda^{(0,\bullet)}T^*M)$ is the symmetric first-order differential operator given by
\begin{align*}
D= \sqrt{2}\left( \overline{\partial} + \overline{\partial}^* \right),
\end{align*} 
where the adjoint is taken with respect to the inner product 
\begin{align*}
 \langle s_1, s_2 \rangle = \int_M \langle s_1(x), s_2(x) \rangle \varepsilon, \quad (s_1,s_2 \in \Gamma^\infty_c(M,\Lambda^{(0,\bullet)}T^*M)).
\end{align*}
Here $\varepsilon$ denotes Liouville measure on $M$, and the hermitian structure on $\Lambda^{(0,\bullet)}T^*M$ is obtained by extending the Riemannian metric to a hermitian form on $T_\mathbb{C}M$, and then normalising it by dividing it by $k!$ on $\Lambda^{(0,k)} T^*M$ for each $0 \leq k \leq n$.

If $L$ is a hermitian line bundle with hermitian connection $\nabla^L$, then the \emph{twisted Dolbeault--Dirac operator} $D^L$ is defined as the symmetric first-order differential operator on $\Gamma^\infty_c(M,\Lambda^{(0,\bullet)}T^*M \otimes L)$ given by
\begin{align*}
 D^L =  \sqrt{2} \left( \overline{\partial}^L + (\overline{\partial}^L)^* \right),
\end{align*}
where $\ol{\partial}^L$ is the first-order differential operator on $\Lambda^{(0,\bullet)}T^*M \otimes L$ defined by
\begin{equation}
\ol{\partial}^L(\alpha \otimes s) := \ol{\partial}\alpha \otimes s + (-1)^{|\alpha|} \alpha \otimes (\nabla^L)^{(0,1)}s, \label{twist}
\end{equation}
where $\alpha$ is an $|\alpha|$-form and $s \in \Gamma^\infty(M,L)$.
\end{dfn}\begin{rmk}\label{5r}
\begin{enumerate}
\item Because the symplectic structure $\omega$ and the Riemannian metric $g$ are related through a complex structure, the Liouville measure defined by $\omega$ is equal to the Riemannian measure defined by $g$.
\item The hermitian connection $\nabla^L$ determines a holomorphic structure on $L$ as follows: A local section $s \in \Gamma^\infty(U,L)$ is holomorphic if and only if $\nabla^{(0,1)}s =0$ (\textit{cf.} \cite[Proposition 6.30]{GGK02}). With respect to this holomorphic structure,  $\nabla^L$ is the Chern connection. The untwisted or ordinary Dolbeault--Dirac operator is a special case of a twisted Dolbeault--Dirac operator, where $L$ is the trivial hermitian holomorphic vector bundle $L=M\times\mathbb{C}$.
\item In our case where $M$ is K\"ahler, the (twisted) Dolbeault--Dirac operator coincides with the (twisted) Spin$\mbox{}^{\mathbb{C}}$--Dirac operator,
 \textit{cf.}\  \cite[Prop.\ on p.\ 81]{Fri00} or \cite[Prop.\ 6.1]{Dui96}. For spin structures see  \emph{Note added in proof} at the end.
 \item Unless specified otherwise, the initial domain 
of a Dolbeault--Dirac operator (or of any other differential operator on a manifold $M$) is  taken to be $\Gamma^\infty_c(M,E)$, where $E$ is the (complex) vector bundle on whose sections the differential operator acts. 
If $M$ is geodesically complete, which by Theorem \ref{thm:TGgeodesicallycomplete} is the case for the unreduced space 
$M=T^*G$, then the (twisted) Dolbeault--Dirac operator is essentially self-adjoint on this initial domain by standard results \cite{Che73,Fri00,Wolf}, and hence has a unique self-adjoint extension.
The situation is more complicated for the  Dolbeault--Dirac operator on the reduced space  $T^*G /\!/\Ad G$, which is a 
 singular quotient whose principal stratum (on which our quantization will be defined) may not be geodesically complete. Fortunately,  in that case 
essential self-adjointness can be proved directly (albeit by a pretty elaborate argument), \textit{cf.}\ Proposition
\ref{prp:princstratess}.
\end{enumerate}\end{rmk}

The bundle $\Lambda^{(0,\bullet)}T^*M$ on $M$ decomposes into an even and an odd part as 
\begin{align*}
\Lambda^{(0,\bullet)}T^*M = \Lambda^{(0,even)} T^*M \oplus \Lambda^{(0,odd)} T^*M.
\end{align*}
With respect to this decomposition, the twisted Dolbeault--Dirac operator is an odd operator
\begin{align*}
 D^L = \left( \begin{array}{cc} 0 & D^L_-\\ D^L_+ & 0  \end{array} \right), 
\end{align*}
where $D_+$ maps $\Lambda^{(0,even)} T^*M$ into $\Lambda^{(0,odd)} T^*M$. If $M$ is compact, then its \emph{Dolbeault--Dirac quantization} is defined as
 \cite{Mei98}
$$\mathrm{index}(\ol{D}^L)=\dim(\mathrm{ker}(\ol{D}^L_+))-\dim(\mathrm{ker}(\ol{D}^L_-)),$$
where the bar denotes closure of the operator.
 On non-compact manifolds, however, the kernels of $\ol{D}^L_+$ and $\ol{D}^L_-$ may fail to be finite-dimensional, so that the naive Fredholm index cannot be defined. Indeed, the kernels of the twisted Dolbeault--Dirac operators on (non-compact) cotangent bundles of compact connect Lie groups studied in this paper \emph{are} infinite-dimensional, so this problem does arise. In these cases, one therefore needs to work with a different definition of Dolbeault--Dirac quantization. For the K\"ahler manifolds $M$ of interest in this paper, whose symplectic form is denoted by $\omega$  and whose associated  twisted Dolbeault--Dirac operator is called  $D^L$,  as before, encouraged by our later goal Theorem \ref{thm:qr0}
  we stipulate:
\begin{dfn}
\label{dfn:quantization}
Let $L\rightarrow M$ be a holomorphic prequantization line bundle on $M$, i.e., a 
hermitian holomorphic line bundle with Chern connection $\nabla^L$ such that $$(\nabla^L)^2 = 2\pi i \omega.$$ 
Then the \emph{Dolbeault--Dirac quantization} of $M$ is defined as the \emph{Hilbert space}
\begin{align}
\label{eq:DDq}
 \mathcal{Q}^L_{DD}(M) := \ker (\ol{D}_+^L).
\end{align}
\end{dfn}
\begin{rmk}\label{rmk:quantization}
  The kernel $\ker (\ol{D}_+^L)$ is an infinite-dimensional Hilbert space--that before reduction is equipped by construction with a unitary $G$-action--so that \eqref{eq:DDq} does not lead to an \emph{immediate} interpretation of quantization as an index. However, for those manifolds $M=T^*G$ and $M=T^*G/\!/ \Ad G$ whose Dolbeault--Dirac quantization we actually determine, the pertinent cokernel $\ker(\ol{D}_-^L)$ turns out to be trivial (\textit{cf.}\ Theorem \ref{thm:kvt}), so that in these cases
Definition \ref{dfn:quantization} is quite close to the rather more abstract  definition of quantization as an index, \textit{cf.}\ the Introduction. Thus we \emph{a priori} define quantization so as to result in a ($G$-) Hilbert space, whose conceptual index-theoretical abstraction only comes \emph{a posteriori}. See also  \S\ref{Outlook}.
\end{rmk}

The quantization of  the cotangent bundle $M=T^*G$ of a  compact connected Lie group $G$  with its standard K\"ahler structure was already studied by Hall \cite{Hal02}, whose Hilbert space was defined traditionally as the space of
all holomorphic sections of the trivial holomorphic (pre-quantum) line bundle with connection
\begin{equation}
(L,\nabla^L) := (T^*G \times \mathbb{C}, d + 2\pi i \theta).\label{linebundle}
\end{equation}
In that approach, the quantization of $T^*G$ is equal to the Hilbert space
\begin{align}
\label{eq:Hallq}
\mathcal{H}L^2(T^*G, e^{-2\pi|Y|^2}\varepsilon),
\end{align}
where $\varepsilon$ denotes the Liouville measure and where $(x,Y) \mapsto e^{-2\pi |Y|^2}$ is viewed as a function on $G \times \mathfrak{g} \cong T^*G$, the norm $|\cdot|$ coming from an $\Ad G$-invariant inner product on $\mathfrak{g}$ (see 
\S\ref{sct:cot_bun} for more details). Moreover, if the usual half-form correction of geometric quantization is taken into account, this Hilbert space is modified to 
\begin{align}
\label{eq:Hallqhalf}
\mathcal{H}L^2(T^*G, e^{-2\pi|Y|^2} \eta \varepsilon),
\end{align}
where $\eta$ is the $G\times G$-invariant function on $T^*G \cong G \times \mathfrak{g}$ determined by
\begin{align}
\label{eq:eta}
 \eta(Y) = \prod_{\alpha \in R^+} \frac{\sinh\left( \alpha(Y) \right)}{\alpha(Y)}, \quad (Y \in \mathfrak{t}).
\end{align}
Here $\mathfrak{t}$ is some maximal abelian subalgebra of $\mathfrak{g}$, and $R^+$ is a set of positive real roots. 
In his earlier work Hall had already constructed explicit unitary isomorphisms between the Hilbert spaces $\mathcal{H}L^2(T^*G, e^{-2\pi|Y|^2}\varepsilon)$ and $L^2(G)$  \cite[Theorem 10]{Hal94}, as well as  between $\mathcal{H}L^2(T^*G, e^{-2\pi|Y|^2} \eta \varepsilon)$ and $L^2(G)$ \cite[Theorem 2.6]{Hal02}). The second isomorphism is just a constant times the inverse Segal--Bargmann transform for $G$ \cite{Hal97a}. The first isomorphism is 
written down explicitly in the appendix of the present paper, where we show that it is $G \times G$-equivariant when $L^2(G)$ is endowed with the natural $G\times G$-action.
Either way, there are natural identifications of the above quantizations with $L^2(G)$. Towards our final result (i.e.\ Theorem \ref{thm:qr0}),  an important intermediate step lies in the following connection to Hall's work:
 \begin{repthm}{thm:quantspin}
 Let $G$ be a compact connected Lie group and endow $T^*G$ with its standard K\"ahler structure. Let $(L,\nabla^L)$ be the asociated 
 $G \times G$-equivariant pre-quantization line bundle (\ref{linebundle}). 
  Then the Dolbeault--Dirac quantization  (\ref{eq:DDq}) of $T^*G$ is 
   $G \times G$-equivariantly 
  equal to Hall's Hilbert space (\ref{eq:Hallq}),
equipped with the natural $G \times G$-action. 
Consequently, by  \cite[Theorem 2.6]{Hal02}, our quantization of $T^*G$ is $G \times G$-equivariantly isomorphic to $L^2(G)$.
\end{repthm}
 
  Along the way, we prove that the canonical line bundle $K$ on $T^*G$ is semi-negative, which is an interesting result on its own:
\begin{repthm}{thm:can_semineg}
 The canonical line bundle $K$ on $T^*G$ (with its standard K\"ahler structure) is semi-negative. 
\end{repthm}
Recall that a hermitian holomorphic line bundle $L$ over a K\"{a}hler manifold $M$ is called \emph{semi-negative} if the matrix $R_{ij}$ in the expansion
$R=\sum_{i,j} R_{ij}dz^id\overline{z}^j$ is semi-negative (in that $-R_{ij}$ is positive semidefinite). 

In \S\ref{sct:quantization_cot_max_torus} we deal with the Dolbeault--Dirac quantization of the \emph{singular} Marsden-Weinstein quotient 
\begin{equation}
T^*G /\!/\Ad G := j^{-1}(0) /\Ad G, \label{defMWQ}
\end{equation}
where $j$ is the momentum map as defined in Lemma \eqref{lem:moment_map}.
Although the above Marsden-Weinstein quotient is not a smooth manifold, it is still a symplectic stratified space \cite{Sja90,SL91} (see also \cite{OR04} for an extensive account on symplectic stratified spaces). In particular, there exists a (unique) \emph{principal stratum}, which is an open and dense subset of $T^*G /\!/ \Ad G$ carrying a natural symplectic structure and, in our case, even a (natural) K\"ahler structure. The third main result of this paper is:

\begin{repthm}{thm:princstrateq}
If the Dolbeault--Dirac quantization of $j^{-1}(0) /\Ad G$ is defined to be the Dolbeault--Dirac quantization of its principal stratum, then
\begin{align*}
 \mathcal{Q}_{DD}(j^{-1}(0) /\Ad G) \cong \mathcal{Q}_{DD}(T^*\textbf{T})^{W(G,\textbf{T})}.
\end{align*}
Here $\textbf{T}$ denotes a maximal torus in $G$ and $W(G,\textbf{T})=N_G(\textbf{T})/\textbf{T}$ is the associated  Weyl group.
\end{repthm}
It might seem awkward at first glance to ignore the singular strata, so let us elaborate on this a bit. As will be shown in Lemma \ref{lem:mwred}, the Marsden-Weinstein quotient $T^*G /\!/\Ad G$ is homeomorphic to $T^*\textbf{T} / W(G,\textbf{T})$. The pre-image of the principal stratum of $T^*G /\!/ \Ad G$ under the projection map $T^* \textbf{T} \rightarrow T^*\textbf{T} / W(G,\textbf{T})$ is an open and dense submanifold of $T^*\textbf{T}$. Because the stratification is by symplectic (and hence even-dimensional) manifolds, pre-images of the \emph{singular strata} all have codimension at least $2$ in $\textbf{T} \times \mathfrak{t}$.
From this, one can show that each compactly supported section of $\Lambda^{(0,\bullet)}T^*(T^*\textbf{T}) \otimes L$ can be approximated in the graph norm of $D^L$ by sections with compact support in the pre-image of the principal stratum. Therefore, the closure of the Dolbeault--Dirac operator on $L^2(T^*\textbf{T})$ is `insensitive' to the  removal of the pre-images of the singular strata, and, consequently, it is sufficient to consider the Dolbeault--Dirac operator on the pre-image of the principal stratum. By discreteness of the Weyl group, it is then also sufficient to quantize the principal stratum of the singular Marsden-Weinstein quotient. See \S \ref{sct:rq} for details.

On the other hand, the Weyl integration formula and Theorem \ref{thm:quantspin} imply that the quantum reduction of $L^2(G)$ at zero, which is defined as the $G$-invariant part of the Hilbert space $\mathcal{Q}^L_{DD}(T^*G) \cong L^2(G)$ \cite{GS82b,Lan05}, is isomorphic to $L^2(\textbf{T})^{W(G,\textbf{T})}$. We thus arrive at the main conclusion of this paper:
\begin{repthm}{thm:qr0}On Definition \ref{dfn:quantization} of Dolbeault--Dirac quantization of $T^*G$ and $T^*G /\!/ \Ad G$, quantization after reduction and reduction after quantization are both canonically isomorphic to $L^2(\textbf{T})^{W(G,\textbf{T})}$.
\end{repthm}
The isomorphisms in question, as well as their `canonical' nature, will be explained in due course. 
\section{The K\"ahler structure on the cotangent bundle}
\label{sct:cot_bun}
Let $G$ be a compact connected Lie group and consider its cotangent bundle $T^*G$. The following considerations show that $T^*G$ is canonically a K\"ahler manifold \cite{Hal94}.

 Using left-trivialisation we can identify  the cotangent bundle $T^*G$ with $G \times \mathfrak{g}^*$, where $\mathfrak{g}$ denotes the Lie algebra of $G$. Fix an $\Ad G$-invariant inner product $\langle \cdot, \cdot \rangle$ on $\mathfrak{g}$, which always exists by compactness of the Lie group $G$, and use this inner product to identify $\mathfrak{g}$ with its dual $\mathfrak{g}^*$, and
hence  $G \times \mathfrak{g}^*$ with $G \times \mathfrak{g}$. Using left-trivialisation we always identify the tangent spaces $T_{(g,Y)}(G \times \mathfrak{g})$ and cotangent spaces $T^*_{(g,Y)}(G \times \mathfrak{g})$, where $(g,Y) \in G \times \mathfrak{g}$, with $\mathfrak{g} \times \mathfrak{g}$ and $\mathfrak{g}^* \times \mathfrak{g}^*$, respectively.

For given $G$ as above, there exists a  connected complex Lie group $G^\mathbb{C}$ such that  every homomorphism of $G$ into a complex Lie group $H$ extends to a holomorphic homomorphism from $G^\mathbb{C}$ into $H$, and this group is unique (up to isomorphism of Lie groups) provided it
it contains $G$ as a closed subgroup and has
Lie algebra $\mathfrak{g} \oplus i \mathfrak{g}$  \cite[\S3]{Hal94}. The associated complex structure on $G^\mathbb{C}$ is determined by the property that for each $x\in G$ and $Y\in \mathfrak{g}$, the map $z\mapsto xe^{zY}$ is holomorphic from $\mathbb{C}$ to $G^\mathbb{C}$. 
Identifying the Lie algebra 
$\mathfrak{g} \oplus i \mathfrak{g}$ of $G^\mathbb{C}$  with
$\mathfrak{g} \times \mathfrak{g}\equiv \mathfrak{g}^2$ (in the same way that $\mathbb{R}\oplus i\mathbb{R}=\mathbb{C}$ is identified with $\mathbb{R}\times \mathbb{R}$),
the map $J_e: \mathfrak{g}^2\rightarrow \mathfrak{g}^2$ defining this complex structure by translation is then simply given by the familiar expression
\begin{equation}
J_e= \left( \begin{array}{cc} 0 & -1 \\ 1 & 0 \end{array} \right). \label{Je}
\end{equation}
Using left translation, for each point $t \in G^\mathbb{C}$, the tangent space $T_t G^\mathbb{C}$ may be identified with $\mathfrak{g} \times \mathfrak{g}$, where the first copy of $\mathfrak{g}$ consists of the vectors  tangent to $G \subset G^\mathbb{C}$.
One obtains  a diffeomorphism $G^\mathbb{C}\cong T^*G$ through the
 polar decomposition 
\begin{eqnarray*}
\Phi: G \times \mathfrak{g} &\rightarrow& G^\mathbb{C};\\
 (x,Y)  &\mapsto& xe^{iY};
\end{eqnarray*}
see  \cite[Proof of Lemma 12]{Hal94}. This endows $T^*G$ and $G \times \mathfrak{g}$ with a complex structure, too, in which for each
$(x,Y)\in  G \times \mathfrak{g}$ the map $z\mapsto (xe^{uY},vY)$ is holomorphic (where $z=u+iv)$. In turn, the canonical symplectic structure on $T^*G$ transfers to a symplectic structure on $G^\mathbb{C}$,  and the two combine to a K\"ahler structure \cite{Hal97,Hal02}  on the latter manifold, and hence also on
$T^*G$ and on  $G \times \mathfrak{g}$.
We will often tacitly  identify $T^*G$, $G \times \mathfrak{g}$, and $G^\mathbb{C}$, referring to the above K\"ahler structure as \emph{standard}. On $G \times \mathfrak{g}$, the standard K\"ahler structure
 has a global K\"ahler potential given by the $G\times G$-invariant (real-valued) function $ (g,Y) \mapsto |Y|^2$, i.e.,
\begin{align*}
 \omega = -i \partial \ol{\partial} |Y|^2,
\end{align*}
where $|\cdot|$ denotes the norm of $Y \in \mathfrak{g}$ w.r.t.\ the $\Ad G$-invariant inner product  \cite{Hal02}.

The cotangent bundle $T^*G$ carries a natural $G \times G$-action coming from the $G\times G$-action on $G$ given by left and (inverse) right multiplication. In what follows, we will use  some explicit formulas summarised in the following lemma.
\begin{lem}
\label{lem:formulas}
The $G \times G$-actions on $G \times \mathfrak{g}$ and on $G^\mathbb{C}$ are given by
\begin{align*}
 (h_1, h_2) \cdot (x,Y) = (h_1xh^{-1}_2,\Ad_{h_2}Y), \quad (h_1, h_2)\cdot  t   = h_1 t h_2^{-1}, 
\end{align*}   
respectively, where $x\in G$, $h_{1,2} \in G$,  $Y\in \mathfrak{g}$, and $t\in G^\mathbb{C}$. The fundamental $1$-form $\theta$ on $G \times \mathfrak{g}$ is equal to
\begin{align*}
 \theta_{(x,Y)} (X_1, X_2) =   \langle Y, X_1 \rangle,
\end{align*}
and the symplectic structure on $G \times \mathfrak{g}$ (inherited from $T^*G$) is
\begin{align}
\label{eq:symp}
\omega_{(x,Y)} ( (X_1, X_2) , (Z_1, Z_2) ) = \langle X_2, Z_1 \rangle_\mathfrak{g} - \langle X_1, Z_2 \rangle_\mathfrak{g} - \langle Y, [X_1,Z_1] \rangle_\mathfrak{g}.
\end{align}

If we identify the tangent spaces $T_{(x,Y)} (G \times \mathfrak{g})$  and $T_t G^\mathbb{C}$  with $\mathfrak{g} \times \mathfrak{g}$ as above, then 
 as an isomorphism of $\mathfrak{g} \times \mathfrak{g}$, the differential $T\Phi_{(x,Y)}: T_{(x,Y)} (G \times \mathfrak{g}) \rightarrow T_{xe^{iY}} G^\mathbb{C}$, is
\begin{align}
\label{eq:differentialPhi}
 T_{(x,Y)}\Phi = \left( \begin{array}{cc} \cos \ad Y & \frac{1 - \cos \ad Y}{\ad Y} \\ -\sin \ad Y & \frac{\sin \ad Y}{\ad Y}\end{array} \right).
\end{align}
\end{lem}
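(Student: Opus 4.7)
The four assertions of the lemma are essentially independent explicit computations, each one obtained by unpacking the left-trivialization conventions fixed just above the lemma; I would treat them in turn.

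For the $G\times G$-action I would start from the action $(h_1,h_2)\cdot x=h_1xh_2^{-1}$ on $G$ and compute the induced action on covectors. Under left trivialization a covector $(x,Y)\in G\times\mathfrak{g}^*\cong T^*G$ represents $v\mapsto\langle Y,(L_{x^{-1}})_*v\rangle$ for $v\in T_xG$; pushing forward this covector by the diffeomorphism $x\mapsto h_1xh_2^{-1}$ and re-expressing the result in left-trivialized form produces a factor of $\Ad_{h_2^{-1}}$ acting on the argument, which by $\Ad G$-invariance of $\langle\cdot,\cdot\rangle$ becomes $\Ad_{h_2}$ acting on $Y$, giving the stated formula. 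The action on $G^\mathbb{C}$ is then immediate from $G\subset G^\mathbb{C}$ and is compatible with the above via $\Phi$.

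For the fundamental $1$-form I would use its defining property $\theta_\alpha(V)=\alpha(\pi_*V)$. In left trivialization at $(x,Y)$ a tangent vector $(X_1,X_2)$ projects to the vector corresponding to $X_1\in\mathfrak{g}$ in $T_xG$, and the covector $(x,Y)$ acts by $\langle Y,\cdot\rangle$, so $\theta_{(x,Y)}(X_1,X_2)=\langle Y,X_1\rangle$. For the symplectic form I would write $\theta=\langle Y,\theta_G\rangle$ with $\theta_G$ the left Maurer--Cartan form on $G$ (and $Y$ treated as a $\mathfrak{g}$-valued function), so that
\begin{align*}
\omega=d\theta=\langle dY\wedge\theta_G\rangle+\langle Y,d\theta_G\rangle.
\end{align*}
Evaluating the first wedge on $((X_1,X_2),(Z_1,Z_2))$ gives $\langle X_2,Z_1\rangle-\langle Z_2,X_1\rangle$, while the Maurer--Cartan equation $d\theta_G=-\tfrac12[\theta_G,\theta_G]$ contributes $-\langle Y,[X_1,Z_1]\rangle$, yielding \eqref{eq:symp}.

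The main work is the formula for $T\Phi$. I would take the curve $t\mapsto(xe^{tX_1},Y+tX_2)$ through $(x,Y)$, whose left-trivialized tangent is $(X_1,X_2)$, and compute the left translate of $\frac{d}{dt}\big|_{t=0}\Phi(xe^{tX_1},Y+tX_2)$ back to the identity of $G^\mathbb{C}$:
\begin{align*}
T_{(x,Y)}\Phi\cdot(X_1,X_2)=\tfrac{d}{dt}\big|_{t=0}\,e^{-iY}e^{tX_1}e^{i(Y+tX_2)}\ \in\ \mathfrak{g}\oplus i\mathfrak{g}.
\end{align*}
The Leibniz rule splits this into two contributions: the $X_1$-term produces $\Ad_{e^{-iY}}X_1=e^{-i\ad Y}X_1$, while the $X_2$-term is handled by Duhamel's formula $\frac{d}{dt}\big|_{t=0}e^{A+tB}=e^A\,\tfrac{1-e^{-\ad A}}{\ad A}(B)$ applied with $A=iY$, $B=iX_2$, giving $\tfrac{1-e^{-i\ad Y}}{\ad Y}X_2$ after the factor of $i$ cancels. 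Decomposing $e^{\pm i\ad Y}=\cos\ad Y\pm i\sin\ad Y$ and reading off the coefficients of $1$ and $i$ under the identification $\mathfrak{g}\oplus i\mathfrak{g}\cong\mathfrak{g}\times\mathfrak{g}$ then gives precisely the matrix \eqref{eq:differentialPhi}. The only subtlety is the bookkeeping in this last step: one must carefully separate $1-e^{-i\ad Y}$ into its real and imaginary parts and divide by $\ad Y$ rather than $i\ad Y$ so that the off-diagonal entries match as stated; once the left-trivialization conventions are in place, the remaining three items of the lemma are routine.
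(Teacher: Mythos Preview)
Your proposal is correct and, in fact, more detailed than the paper's own proof, which consists of the single sentence ``Eq.\ \eqref{eq:differentialPhi} is proved in \cite{Hal97}. The other equations are straightforward.'' Your computation of $T\Phi$ via the curve $t\mapsto(xe^{tX_1},Y+tX_2)$, the adjoint action for the $X_1$-term, and Duhamel's formula for the $X_2$-term is exactly the standard derivation (and is presumably what appears in the cited reference), so there is no substantive difference in approach---you have simply written out what the paper defers to a citation and to the word ``straightforward.''
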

\begin{proof}
Eq.\  (\ref{eq:differentialPhi}) is proved in \cite{Hal97}. The other equations are straightforward.
\end{proof}

Let $\{e_k\}$ be an orthonormal basis of $\mathfrak{g}$ for the $\Ad G$-invariant inner product $\langle \cdot,\cdot \rangle_\mathfrak{g}$. All forms on $G \times \mathfrak{g}$ are $C^\infty(M)$-linear combinations of the left-invariant forms $\{\alpha_k\}$, where $\alpha_k(e_G) = e^*_k$ in $\mathfrak{g}^*$, and the forms $\{dy_k\}$, where $(y_k)_k$ are the coordinates of $Y \in \mathfrak{g}$ with respect to the orthonormal basis $\{e_1, \dots e_n \}$ of $\mathfrak{g}$. Similarly, we choose left-invariant $1$-forms $\{\eta_k\}$ on $G^\mathbb{C}$ such that $\eta_k(e_{G^\mathbb{C}}) = (e^*_k,0)$ for all $k \in \{1,\dots n\}$. Note that our definition of $J$ on forms is  such that $J\eta_k(e_{G^\mathbb{C}}) = (0,e_k^*)$.

We later need the fact that the K\"ahler structure on $T^*G$ is geodesically complete:
\begin{thm}
\label{thm:TGgeodesicallycomplete}
Let $G$ be a compact connected Lie group and endow $T^*G$ with the standard K\"ahler structure and write $g$ for the corresponding Riemannian metric. The Riemannian manifold $(T^*G,g)$ is geodesically complete.  
\end{thm}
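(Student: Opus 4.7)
The plan is to invoke Hopf--Rinow: for a connected Riemannian manifold, geodesic completeness is equivalent to completeness of the length metric $d_g$, which in turn follows from showing that every $d_g$-bounded set is relatively compact. Under $T^*G \cong G \times \mathfrak{g}$, and since $G$ is compact, it suffices to bound the $\mathfrak{g}$-component of any Cauchy sequence. I would prove the pointwise gradient estimate
\[
 |\nabla f|_g \leq 2\sqrt{f}, \qquad f(x, Y) := |Y|^2,
\]
which integrates to $|Y| \leq d_g\bigl((x_0, 0), (x, Y)\bigr)$ and therefore confines every $d_g$-bounded set to the compact subset $G \times \overline{B_R(\mathfrak{g})}$ for some $R < \infty$.

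The crucial input is that the K\"ahler data is $G \times G$-invariant. The symplectic form \eqref{eq:symp} is manifestly invariant; the complex structure $J_{(x,Y)} = (T_{(x,Y)}\Phi)^{-1} J_e\, T_{(x,Y)}\Phi$ induced by \eqref{Je} and \eqref{eq:differentialPhi} is invariant because $\Ad(G)$ commutes with $J_e$ (it preserves the splitting $\mathfrak{g} \oplus i\mathfrak{g}$). Hence $g$, $f$ and $|\nabla f|_g$ are $G \times G$-invariant, and the bound need only be checked at $(e, Y)$ with $Y \in \mathfrak{t}$. Setting $A := \ad(Y)$, a direct computation using \eqref{eq:differentialPhi}, the $\Ad$-invariance identity $\langle Y, [X, Z] \rangle = \langle AX, Z\rangle$, and the skew-symmetry of $\tan(A/2)$ yields
\[
 g_{(e, Y)}\bigl((X_1, X_2), (X_1, X_2)\bigr) = \langle A\cot(A) X_1, X_1\rangle - 2\langle \tan(A/2) X_1, X_2\rangle + \bigl\langle \tfrac{2\tan(A/2)}{A} X_2, X_2\bigr\rangle.
\]
The operator $A$ preserves the root-space decomposition $\mathfrak{g} = \mathfrak{t} \oplus \bigoplus_\alpha V_\alpha$ and vanishes on $\mathfrak{t}$, so this quadratic form splits orthogonally into a $(\mathfrak{t} \oplus \mathfrak{t})$-block---which is the standard Euclidean form, since $A\cot(A)|_{\ker A} = \id$, $\tfrac{2\tan(A/2)}{A}\bigr|_{\ker A} = \id$ and $\tan(A/2)|_{\ker A} = 0$---and non-trivial blocks on each $V_\alpha \oplus V_\alpha$. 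Since $df_{(e,Y)} = (0, 2Y)$ is supported on the flat block, so is $\nabla f$, giving $|\nabla f|_g = 2|Y| = 2\sqrt{f}$.

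The rest is routine: the gradient bound integrates to $|Y| \leq L_g(\gamma)$ for any piecewise smooth curve from $(x_0, 0)$ to $(x, Y)$, so closed $d_g$-balls sit inside the compact sets $G \times \overline{B_R(\mathfrak{g})}$, Cauchy sequences converge, and Hopf--Rinow delivers geodesic completeness. The main obstacle is the explicit computation of $g_{(e, Y)}$ above---inverting \eqref{eq:differentialPhi} and simplifying the resulting expressions in $\cos A$, $\sin A$, $\tan(A/2)$---combined with checking the decoupling of the toral sub-block from the root-space blocks; once these linear-algebraic manipulations are in hand, the conclusion is forced.
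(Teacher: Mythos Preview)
Your argument is correct and reaches the same conclusion, but the route differs from the paper's in two respects. First, the paper works with the auxiliary function $\log(1+|Y|^2)$, whose gradient is \emph{uniformly bounded} ($\|df\|_g \le 2$), and then invokes a ready-made completeness criterion: a function with bounded gradient and relatively compact sublevel sets forces geodesic completeness. You instead keep the K\"ahler potential $|Y|^2$ itself, obtain the non-uniform estimate $|\nabla f|_g = 2\sqrt{f}$, and integrate it along curves to extract the distance bound $|Y|\le d_g\bigl((x_0,0),(x,Y)\bigr)$ before appealing to Hopf--Rinow directly. This is really the same mechanism under the hood---your bound is precisely what makes $\log(1+f)$ have bounded gradient---but packaged differently. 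Second, the paper never writes down the full metric tensor: it computes $(df)^\#$ straight from $\omega$ and $J$, using only that $\ad Y$ annihilates $Y$, so the matrix \eqref{eq:differentialPhi} collapses on the relevant vector. Your approach, by contrast, derives the explicit quadratic form $g_{(e,Y)}$ in terms of $A=\ad Y$ and then exploits its block decoupling along the root-space decomposition to isolate the flat $\mathfrak{t}\oplus\mathfrak{t}$ block carrying $df$. That is more laborious but yields strictly more information (the metric itself), whereas the paper's computation is shorter because it targets the specific covector $(0,2Y)$ from the outset.
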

\begin{proof}
 By the Hopf-Rinow theorem \cite{Ber02}, a Riemannian manifold is geodesically complete if and only if there exists a real-valued function $f$ such that $\|df\|_g \leq C$ for some $C >0$ and $M_c= \{ x \in M \mid f(x) < c \}$ is relatively compact for each $c \in \mathbb{R}$. In our case, take the function $$f: (x,Y) \mapsto \log (1 + |Y|^2)$$ on $G \times \mathfrak{g}$, where $|Y|$ is the norm of $Y \in \mathfrak{g}$ with respect to the chosen Ad-invariant inner product on $\mathfrak{g}$. First of all, it is clear that
\begin{align*}
 \log (1 + |Y|^2) < c \iff |Y|^2 < e^c -1,
\end{align*}
for all $c \in \mathbb{R}$. So, $M_c = \{(x,Y) \in G \times \mathfrak{g} \mid |Y|^2 < e^c - 1 \}$ is relatively compact for all $c \in \mathbb{R}$. We now show that $\|df\|_g \leq 2$. We have
\begin{equation}
\|df\|^2_g= g(df,df) = g(df^\#, df^\#) = \omega(J(df)^\#, df^\#), \label{begindf}
\end{equation}
 where $(df)^\#$ is the unique vector field that satisfies $df(W) = g((df)^\#, W)$ for all vector fields $W$. Let $\alpha_k,dy_k$, $\eta_k$ and $J\eta_k$ be as above. Since $d(\psi \circ s)(m) = \psi'(s(m)) ds(m)$ for any $s \in C^\infty(M,\mathbb{R})$ and $\psi: \mathbb{R} \rightarrow \mathbb{R}$, we obtain
\begin{equation}
 df = d(\log(1 + |Y|^2)) = \frac{1}{1 + |Y|^2} d(1+|Y|^2) = \frac{1}{1 + |Y|^2} \sum_{k=1}^n 2y_kdy_k. \label{df2}
\end{equation}  
 We next calculate $J df$. To this end, note that $(1+|Y|^2)df = \sum_{k=1}^n 2y_kdy_k$ corresponds (at the point $(x,Y)$ in $G\times \mathfrak{g}$) to $(0,2Y)$ in the basis $\{\alpha_k,dy^k\}$. From (\ref{Je}), the complex structure $J$ on $T^*_{(x,Y)} (G \times \mathfrak{g})$ is
\begin{align*}
(T_{(x,Y)}\Phi)^* \left( \begin{array}{cc} 0 & -1 \\ 1 & 0 \end{array} \right)   (T_{(\Phi(x),\Phi(Y))}\Phi^{-1})^*.
\end{align*}
This can be computed from (\ref{eq:differentialPhi}). The computation is easy, as
$\ad Y(Y)=0$, so that 
\begin{equation}
(T\Phi)^* \left( \begin{array}{cc} 0 & -1 \\ 1 & 0 \end{array} \right)   (T\Phi^{-1})^*  \left( \begin{array}{c} 0\\ 2Y\end{array}
 \right)=
\left( \begin{array}{c} -2Y \\ 0\end{array} \right). \label{df33}
\end{equation}

Let us write $Z:=Z(x,Y) =(Z_1(x,Y),Z_2(x,Y)) =: (Z_1,Z_2)$ for a vector field on $G \times \mathfrak{g}$. On the one hand, we have
\begin{align}
\label{eq:covectorvector}
 (1 + |Y|^2)(Jdf)(Z) = -2\langle Y,Z_1 \rangle_\mathfrak{g}.
\end{align}
On the other hand, 
\begin{align}
\label{eq:metric}
 (1 + |Y|^2) g((Jdf)^\#, Z) &= (1 + |Y|^2) \omega (J(Jdf)^\#, Z)\nonumber \\ &= - (1 + |Y|^2) \omega (df^\#,Z).
\end{align}
Using (see Lemma \ref{lem:formulas}) 
\begin{align*}
 \omega_{(x,Y)}((X_1,X_2),(Z_1,Z_2)) = \langle X_2, Z_1 \rangle_{\mathfrak{g}} - \langle X_1 , Z_2 \rangle_\mathfrak{g} - \langle Y, [X_1,Z_1] \rangle_\mathfrak{g}
\end{align*}
and equating \eqref{eq:covectorvector,eq:metric}, we obtain
\begin{equation}
 2\langle Y,Z_1 \rangle_\mathfrak{g} = (1 + |Y|^2) \left( \langle (df)^\#_2, Z_1 \rangle_{\mathfrak{g}} - \langle(df)^\#_1 , Z_2 \rangle_\mathfrak{g} - \langle Y, [(df)^\#_1,Z_1]\rangle_\mathfrak{g} \right).
\end{equation}
This equality holds for all $Z$ if and only if $(df)^\#(x,Y) =\left (0,\frac{2Y}{1+|Y|^2} \right)$. An argument similar to 
(\ref{df2}) - (\ref{df33})  shows that $$(1+|Y|^2)J(df)^\# = (-2Y,0).$$ Hence from (\ref{begindf}) we obtain
\begin{align*}
\|df\|^2_g= \omega(J(df)^\#, df^\#) = (1+|Y|^2)^{-2} \omega((-2Y,0),(0,2Y))=  (1+|Y|^2)^{-2} 4|Y|^2.
\end{align*}
Hence $\|df\|^2_g\leq 4$, so $\|df\|^2_g$
 is bounded. Since we also showed that $M_c$ is relatively compact for all $c \in \mathbb{R}$, we have proved that $(T^* G,g)$ is geodesically complete.
\end{proof}

\section{Quantization of the cotangent bundle of a compact connected Lie group}
\label{sct:quantization}
In this section we prove one of the two main results of this paper, in that we recover the equivariant Hilbert space \eqref{eq:Hallq},
originally found by Hall  \cite{Hal02}, from our definition (\ref{eq:DDq}) of `geometric' quantization. We do this by showing that the kernel of the Dolbeault--Dirac operator does not contain any forms of non-zero degree, so that the kernel consists of all (square-integrable) holomorphic \emph{sections} of $L$. We are then precisely in the setting of Hall's papers mentioned in the Introduction, and we can apply his (Bargmann-type) isomorphisms to (unitarily) identify the Dolbeault--Dirac quantization with $L^2(G)$.

In order to prove that the kernel of the Dolbeault--Dirac operator does not contain forms of non-zero degree, we need a non-compact version of the Kodaira vanishing theorem. In \S\ref{sct:Kodaira} we prove a version of this theorem for geodesically  complete K\"ahler manifolds.  In \S\ref{sct:can_semi-neg} we prove that the canonical line bundle on $T^*G$ is semi-negative (see Definition \ref{dfn:positivebundle}). The proof we give there is a corrected form of an argument due to Bielawski \cite{Bie03}. 
Using the non-compact version of the Kodaira vanishing theorem, in \S\ref{sct:DD_spin_quantization} we deduce that the kernel of the Dolbeault--Dirac operator does not contain any forms of non-zero degree. 
 
\subsection{The Kodaira vanishing theorem for complete K\"ahler manifolds}
\label{sct:Kodaira}
The main result of this section is an extension of Kodaira's vanishing argument on compact K\"ahler manifolds (see e.g \cite{BGV92}) to non-compact, geodesically complete, K\"ahler manifolds. More specifically, we show that if $K^* \otimes L$ is a positive line bundle over $M$ (see \S\ref{dfn:positivebundle} below), where $K$ denotes the canonical line bundle and $L$ denotes a hermitian holomorphic line bundle, then the kernel of the closure $\ol{D}^L$ of the twisted Dolbeault--Dirac $D^L$ operator on $M$ is contained in the smooth $(0,0)$-forms, i.e., the smooth sections of $L$. 
The relation between our results and other generalizations of  Kodaira's vanishing theorem to the non-compact case, such as \cite[Thm.\ 2.8]{Ohsawa} and \cite[Thm.\ 3.5.15]{MaMa} (provided by the referee) is unclear to us, but it looks remote. 

We first collect some facts about unbounded operators $T:\mathcal{D}(T)\rightarrow H$, where $H$ is some Hilbert space and 
$\mathcal{D}(T)\subset H$ is the domain of $T$, always assumed dense in $H$. We call $T$
  \emph{essentially positive} if it is essentially self-adjoint and satisfies 
 \begin{align}
\label{eq:pos}
 \langle x,Tx \rangle \geq 0, \:\: x \in \mathcal{D}(T).
\end{align}
Similarly, $T$ is called \emph{positive} if it is self-adjoint and satisfies \eqref{eq:pos}.
\begin{prp}\label{prop1to4}
\begin{enumerate}
\item If $T$ is essentially positive, then its closure $\ol{T}$ is positive.
\item If $T$ is positive,  there exists a unique positive operator $T^\frac{1}{2}$ with $(T^{\frac{1}{2}})^2 = T$. 
\item  If $S$ and  $T$ are essentially positive operators, defined on a common dense domain $\mathcal{D} \subset \mathcal{H}$, then 
the operator $C:= \ol{S + T}$ has the following properties:
\begin{eqnarray}
\mathcal{D}(C)& \subset& \mathcal{D}(\ol{S}^\frac{1}{2}) \cap \mathcal{D}(\ol{T}^{\frac{1}{2}});\label{DCclaim}\\
 \langle Cx,x \rangle& =& \langle\ol{S}^\frac{1}{2}x,\ol{S}^\frac{1}{2}x \rangle + \langle\ol{T}^\frac{1}{2}x,\ol{T}^\frac{1}{2} x\rangle, \:\: x \in \mathcal{D}(C); \label{eq:positivesum}\\
 \ker C& \subset &  \ker \ol{S} \cap\ker\ol{T}. \label{cor:sumofpositiveoperators}
\end{eqnarray}
\end{enumerate}
\end{prp}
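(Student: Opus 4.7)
For part (1), I would argue by continuity. Since $T$ is essentially self-adjoint, $\overline{T}$ is self-adjoint. Given $x \in \mathcal{D}(\overline{T})$, choose a sequence $x_n \in \mathcal{D}(T)$ with $x_n \to x$ and $T x_n \to \overline{T} x$; then $\langle x_n, T x_n\rangle \geq 0$ for every $n$, so by joint continuity of the inner product, $\langle x, \overline{T} x\rangle \geq 0$.

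For part (2), I would invoke the spectral theorem: the spectral resolution $\overline{T} = \int_0^\infty \lambda \, dE_\lambda$ of a positive self-adjoint operator has support in $[0,\infty)$, and the continuous functional calculus applied to $\sqrt{\cdot}$ produces a positive self-adjoint operator $T^{1/2} := \int_0^\infty \sqrt{\lambda}\,dE_\lambda$ satisfying $(T^{1/2})^2 = T$. Uniqueness follows from uniqueness of the continuous functional calculus for positive self-adjoint operators.

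Part (3) is the main content. The key inequality is that, for $x \in \mathcal{D}$, positivity of $T$ gives
$$\|\overline{S}^{1/2} x\|^2 = \langle S x, x\rangle \leq \langle (S+T)x, x\rangle \leq \|(S+T)x\|\,\|x\|,$$
and similarly with $S$ and $T$ interchanged. Now take $x \in \mathcal{D}(C)$ and a sequence $x_n \in \mathcal{D}$ with $x_n \to x$ and $(S+T)x_n \to C x$. Applying the inequality to $x_n - x_m$ yields
$$\|\overline{S}^{1/2}(x_n - x_m)\|^2 \leq \|(S+T)(x_n - x_m)\|\,\|x_n - x_m\|,$$
which tends to zero, so $\overline{S}^{1/2} x_n$ is Cauchy. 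Since $\overline{S}^{1/2}$ is closed and $x_n \to x$, this forces $x \in \mathcal{D}(\overline{S}^{1/2})$ with $\overline{S}^{1/2} x_n \to \overline{S}^{1/2} x$; the same argument works for $T$, establishing \eqref{DCclaim}. Passing to the limit in the identity $\langle (S+T)x_n, x_n\rangle = \|\overline{S}^{1/2} x_n\|^2 + \|\overline{T}^{1/2} x_n\|^2$ then gives \eqref{eq:positivesum}. For \eqref{cor:sumofpositiveoperators}, if $Cx = 0$ then \eqref{eq:positivesum} forces $\overline{S}^{1/2} x = 0 = \overline{T}^{1/2} x$; since $0 \in \mathcal{D}(\overline{S}^{1/2})$, this puts $x \in \mathcal{D}(\overline{S})$ with $\overline{S} x = \overline{S}^{1/2}(\overline{S}^{1/2} x) = 0$, and likewise for $T$.

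The main obstacle is establishing \eqref{DCclaim}: one only knows a priori that $\mathcal{D}(C)$ is the closure of $\mathcal{D}$ in the graph norm of $S+T$, and this does not obviously control the graph norms of $S$ and $T$ separately. The domination inequality $\langle Sx, x\rangle \leq \langle (S+T)x, x\rangle$ coming from positivity of $T$ is precisely what converts graph-norm Cauchy sequences for $S+T$ into Cauchy sequences for $\overline{S}^{1/2}$, and closedness of $\overline{S}^{1/2}$ does the rest.
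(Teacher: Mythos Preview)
Your proof is correct and follows essentially the same approach as the paper: the paper also dismisses parts (1) and (2) as routine (citing a reference for the square root), and for part (3) uses exactly your domination inequality $\|\ol{S}^{1/2}y\|^2 \leq \langle (S+T)y,y\rangle \leq \|(S+T)y\|\,\|y\|$ on differences $x_n - x_m$ to produce Cauchy sequences, then invokes closedness of $\ol{S}^{1/2}$ and passes to the limit just as you do. The only cosmetic difference is that the paper concludes \eqref{cor:sumofpositiveoperators} via the identity $\ker \ol{S}^{1/2} = \ker \ol{S}$, whereas you spell out $\ol{S}x = \ol{S}^{1/2}(\ol{S}^{1/2}x) = 0$ directly; both are equivalent.
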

\begin{proof}
Part (1) is routine, and part (2) is  \cite[Proposition 5.13]{Sch12}).
For part (3), let $(x_n)_{n=1}^\infty$ be a sequence  in $\mathcal{D}$ such that $x_n \rightarrow x$ and $C x_n \rightarrow z =:Cx$ in $\mathcal{H}$. We show that $(\ol{S}^\frac{1}{2} x_n)_n$ is a Cauchy sequence in $\mathcal{H}$. First,
\begin{align*}
 \langle Cy,y \rangle = \langle Sy,y \rangle + \langle Ty,y \rangle = \langle \ol{S}^\frac{1}{2}y,\ol{S}^\frac{1}{2}y \rangle + \langle \ol{T}^\frac{1}{2}y,\ol{T}^\frac{1}{2} y \rangle \geq \langle \ol{S}^\frac{1}{2}y,\ol{S}^\frac{1}{2}y \rangle ,
\end{align*}
for all $y \in \mathcal{D}$. In particular,
\begin{align*}
\langle \ol{S}^\frac{1}{2}(x_n-x_m),\ol{S}^\frac{1}{2}(x_n -x_m ) \rangle  &\leq \langle C(x_n-x_m),x_n -x_m \rangle \\&\leq \|Cx_n - Cx_m\| \|x_n - x_m\|,
\end{align*}
and the right-hand side goes to zero as $n,m$ go to infinity, since both $(x_n)_n$ and $(Cx_n)_n$ are Cauchy sequences. In particular, $(\ol{S}^\frac{1}{2}x_n)_n$ is a Cauchy sequence, and hence $x \in \mathcal{D}(\ol{S}^\frac{1}{2})$ with $\ol{S}^\frac{1}{2}x = \lim_n \ol{S}^\frac{1}{2} x_n$. Similarly, $x \in \mathcal{D}(\ol{T}^\frac{1}{2})$, whence (\ref{DCclaim}). 

To prove (\ref{eq:positivesum}), we already know that it holds on $\mathcal{D}$. Let $x \in \mathcal{D}(C)$ be as above. By the previous argument,
 $\ol{S}^\frac{1}{2} x_n \rightarrow \ol{S}^\frac{1}{2} x$ and $\ol{T}^\frac{1}{2} x_n \rightarrow \ol{T}x$. Thus 
\begin{align*}
\langle Cx,x \rangle &= \lim_n \langle Cx_n,x_n \rangle = \lim_n \left(\langle \ol{S}^\frac{1}{2}x_n,\ol{S}^\frac{1}{2}x_n \rangle + \langle \ol{T}^\frac{1}{2}x_n,\ol{T}^\frac{1}{2} x_n\rangle \right) \\&= \langle \ol{S}^\frac{1}{2}x,\ol{S}^\frac{1}{2}x\rangle + \langle\ol{T}^\frac{1}{2}x,\ol{T}^\frac{1}{2}x\rangle,
\end{align*}
which proves (\ref{eq:positivesum}) for each $x \in \mathcal{D}(C)$.

Finally, if  $x \in \ker C$, then $x \in \mathcal{D}(\ol{S}^{\frac{1}{2}}) \cap \mathcal{D}(\ol{T}^\frac{1}{2})$ by (\ref{DCclaim}),
 and 
\begin{align*}
\langle Cx,x \rangle = \langle \ol{S}^\frac{1}{2}x,\ol{S}^\frac{1}{2}x \rangle + \langle \ol{T}^\frac{1}{2}x,\ol{T}^\frac{1}{2} x\rangle = \| \ol{S}^\frac{1}{2}x \|^2 + \|\ol{T}^\frac{1}{2}x\|^2 = 0.
\end{align*}
Therefore, $x \in \ker \ol{S}^\frac{1}{2} \cap \ol{T}^\frac{1}{2} = \ker\ol{S} \cap \ker \ol{T}$.  This gives (\ref{cor:sumofpositiveoperators}).  
\end{proof}

Let $V$ be a \emph{finite}-dimensional inner product space over $\mathbb{K}=\mathbb{C}$. 
A self-adjoint operator $A: V \rightarrow V$ is called \emph{positive-definite} if $\langle v,Av \rangle >0$ for all non-zero $v \in V$. This is equivalent to $A$ being a positive, invertible operator on $V$.
The following lemma is just a matter of linear algebra. We omit the proof. 
\begin{lem}
\label{lem:positivetensoralgebra}
 Let $A$ be a positive-definite (and hence self-adjoint) linear operator on a finite-dimensional inner product space $(V,\langle,\rangle)$ and let $\widetilde{A}$ be its extension to $TV$ as a derivation, i.e.,
\begin{align*}
 \widetilde{A}(v \otimes w) = Av \otimes w + v \otimes Aw, \quad (v,w \in V),
\end{align*}
and 
\begin{align*}
 \widetilde{A}z = 0, \quad (z \in \mathbb{C} = T^0 V). 
\end{align*}
Let $W$ be another vector space and consider the linear map
\begin{align*}
 \widetilde{A} \otimes \Id: TV \otimes W \rightarrow TV \otimes W.
\end{align*}
Then $\widetilde{A} \otimes \Id$ restricts to a positive-definite self-adjoint operator on $T^kV \otimes W$ for each $k \geq 1$. Consequently, $\ker \widetilde{A} \otimes \Id = T^0V \otimes W$.
\end{lem}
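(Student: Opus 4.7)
The plan is a short diagonalization argument. Since $A$ is positive-definite and self-adjoint on the finite-dimensional space $V$, I would begin by choosing an orthonormal eigenbasis $e_1,\dots,e_n$ of $V$ with strictly positive eigenvalues $\lambda_1,\dots,\lambda_n>0$. The derivation rule then gives, on pure tensors,
\begin{align*}
\widetilde A\bigl(e_{i_1}\otimes\cdots\otimes e_{i_k}\bigr)=(\lambda_{i_1}+\cdots+\lambda_{i_k})\,e_{i_1}\otimes\cdots\otimes e_{i_k},
\end{align*}
so the tensor monomials form an eigenbasis of $\widetilde A|_{T^kV}$, with eigenvalues $\lambda_{i_1}+\cdots+\lambda_{i_k}$. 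In particular $\widetilde A$ preserves each homogeneous component $T^kV$, is self-adjoint with respect to the induced inner product (symmetry is immediate from the derivation formula plus self-adjointness of $A$), and, for $k\ge 1$, has only strictly positive eigenvalues.

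Next I would tensor with the identity on $W$. Picking any basis $\{f_j\}$ of $W$ (or, if desired, an orthonormal one when $W$ is equipped with an inner product), the family $\bigl\{e_{i_1}\otimes\cdots\otimes e_{i_k}\otimes f_j\bigr\}$ is an eigenbasis of $\widetilde A\otimes\Id$ on $T^kV\otimes W$ with the same eigenvalues $\lambda_{i_1}+\cdots+\lambda_{i_k}$. For $k\ge 1$ these eigenvalues are all strictly positive, which shows that $\widetilde A\otimes\Id$ restricts to a positive-definite self-adjoint operator on $T^kV\otimes W$.

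For the kernel statement, note that by definition $\widetilde A$ acts as zero on $T^0V=\mathbb C$, hence $\widetilde A\otimes\Id$ annihilates all of $T^0V\otimes W$. Conversely, using the decomposition $TV\otimes W=\bigoplus_{k\ge 0}(T^kV\otimes W)$, which is preserved by $\widetilde A\otimes\Id$, any vector in the kernel decomposes into homogeneous components each of which must itself lie in the kernel; for $k\ge 1$ positive-definiteness forces that component to vanish, leaving only the $T^0V\otimes W$ summand. This gives $\ker(\widetilde A\otimes\Id)=T^0V\otimes W$. There is no real obstacle here: the only thing to be mildly careful about is verifying that $\widetilde A$ indeed preserves the grading and is symmetric on each $T^kV$, both of which follow immediately from the derivation property and self-adjointness of $A$.
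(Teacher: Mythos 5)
Your diagonalization argument is correct. The paper in fact \emph{omits} the proof of this lemma, remarking only that it ``is just a matter of linear algebra,'' so there is no authors' proof to compare against; your eigenbasis computation (orthonormal eigenbasis $e_1,\dots,e_n$ of $A$ with eigenvalues $\lambda_i>0$, yielding eigenvalues $\lambda_{i_1}+\cdots+\lambda_{i_k}>0$ on $T^kV$ for $k\ge 1$, then tensoring with a basis of $W$) is precisely the standard argument one would expect to fill that gap. One minor remark: the lemma statement calls $W$ merely ``another vector space,'' so strictly speaking ``self-adjoint'' and ``positive-definite'' on $T^kV\otimes W$ require an inner product on $W$ as well -- you already flag this parenthetically, and in the paper's application ($W$ the fibre of the hermitian line bundle $L$) such an inner product is indeed present. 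The kernel identification, which is the part actually used in the Kodaira vanishing argument, needs no inner product on $W$ at all, as your grading-preservation argument shows.
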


We now turn to (possibly unbounded) zeroth-order differential operators. The next proposition shows that any symmetric zeroth-order differential operator is essentially self-adjoint.

\begin{prp}
\label{prp:esa_zeroorder}
Let $E \rightarrow M$ be a hermitian vector bundle over an arbitrary oriented Riemannian manifold $M$. Consider the Hilbert space $L^2(M,E)$, where the measure on $M$ is the Riemannian measure. If $R$ is a smooth vector bundle homomorphism such that $R_x\in \End(E_x)$ is symmetric for each $x \in M$, then $R$ is essentially self-adjoint on the domain $\Gamma_c^\infty(M,E)$. Its closure $\ol{R}$ is positive if $R_x$ acts fibrewise by positive operators.
\end{prp}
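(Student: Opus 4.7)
The plan is to compute the adjoint $R^*$ explicitly as a pointwise-multiplication operator and then invoke the deficiency-index criterion for essential self-adjointness. Since $R$ is zeroth order (literally multiplication by a smooth self-adjoint endomorphism field) no integration by parts arises, and the whole statement reduces to pointwise linear algebra together with a density argument.

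First I would identify $R^*$. For $t \in \Dom(R^*)$ with $R^* t = u$ and any $s \in \Gamma_c^\infty(M,E)$, pointwise self-adjointness of $R_x$ together with local boundedness of the smooth section $R$ (so that the last integrand is well-defined on the compact support of $s$) gives
\begin{align*}
\langle s, u\rangle \;=\; \langle Rs, t\rangle \;=\; \int_M \langle R_x s(x), t(x)\rangle_{E_x}\,d\mu(x) \;=\; \int_M \langle s(x), R_x t(x)\rangle_{E_x}\,d\mu(x).
\end{align*}
Since $\Gamma_c^\infty(M,E)$ is dense in $L^2_{\mathrm{loc}}(M,E)$, this forces $R_x t(x) = u(x)$ almost everywhere, so $Rt \in L^2(M,E)$ and $R^* t = Rt$. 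The reverse inclusion is an immediate Cauchy--Schwarz estimate. Hence $\Dom(R^*) = \{t \in L^2(M,E) : Rt \in L^2(M,E)\}$ and $R^* t = Rt$ (pointwise product).

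Second I would verify $\ker(R^* \pm i) = \{0\}$. If $(R \pm i)t = 0$ in $L^2(M,E)$, then $(R_x \pm i\Id)t(x) = 0$ for almost every $x$; but $R_x$ is self-adjoint on the finite-dimensional hermitian space $E_x$, so its spectrum is real and the fibre endomorphism $R_x \pm i\Id$ is invertible, forcing $t(x) = 0$ a.e. By the standard deficiency-index criterion for symmetric operators, $R$ is essentially self-adjoint on $\Gamma_c^\infty(M,E)$.

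For the positivity assertion, if each $R_x$ is positive then for every $s \in \Gamma_c^\infty(M,E)$ one has $\langle Rs, s\rangle = \int_M \langle R_x s(x), s(x)\rangle_{E_x}\, d\mu(x) \geq 0$, so $R$ is essentially positive on $\Gamma_c^\infty(M,E)$, and its closure $\ol{R}$ is positive by item (1) of the preceding proposition on essentially positive operators. There is no serious obstacle in this argument: the only care required is in the density step identifying $R^*$, and the essential self-adjointness itself reduces to the trivial pointwise fact that a self-adjoint endomorphism of a finite-dimensional hermitian space has no purely imaginary eigenvalues.
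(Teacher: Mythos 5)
Your proof is correct. You share the paper's crucial first step — identifying the adjoint $R^*$ as the maximal multiplication operator, with domain $\{t \in L^2(M,E) : Rt \in L^2(M,E)\}$ and acting by pointwise multiplication — but you then diverge at the conclusion. The paper shows $R^* = \ol{R}$ directly: from the identification of $R^*$, for $s \in \mathcal{D}(R^*)$ the estimate $\langle s, R^*t\rangle = \langle Rs, t\rangle \le \|Rs\|\,\|t\|$ puts $s$ in $\mathcal{D}(R^{**}) = \mathcal{D}(\ol{R})$, yielding $R^* \subset \ol{R}$, which combined with symmetry gives equality. You instead invoke von Neumann's deficiency-index criterion, reducing essential self-adjointness to the fibrewise linear-algebra fact that $R_x \pm i\,\Id$ is invertible on the finite-dimensional hermitian fibre $E_x$ (indeed $\|(R_x\pm i)v\|^2 = \|R_xv\|^2 + \|v\|^2$ since $R_x$ is self-adjoint), so $\ker(R^*\pm i) = \{0\}$. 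Both routes are standard and correct; yours arguably makes the mechanism more transparent (the vanishing of the deficiency spaces is literally a pointwise spectral fact), while the paper's is marginally more self-contained in that it does not appeal to the von Neumann criterion by name. The density step you flag (passing from the distributional identity to $R^*t = Rt$ pointwise) is the same minor subtlety the paper elides, and your handling — local boundedness of $R$ plus testing against $\Gamma^\infty_c(M,E)$ — is the right justification.
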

\begin{proof}
Since $R_x$ is symmetric for each $x \in M$, $R$ is a symmetric zeroth-order differential operator on $\Gamma^\infty_c(M,E)$. We prove that $\ol{R} = R^*$ as an operator on $L^2(M,E)$. 
The domain of $R^*$ is equal to
\begin{align*}
 \mathcal{D}(R^*) =\{ s \in L^2(M,E) \mid Rs \in L^2(M,E) \}, 
\end{align*}
and $R^*s = Rs$ on this domain, where $R$ acts on $s$ as a zeroth-order differential operator.  Indeed, since $R$ is a zeroth-order differential operator, the equality
\begin{align*}
\langle s, Rt \rangle = \int_M \langle s(x), R(x)t(x) \rangle_x dx = \int_M \langle R(x)s(x),t(x) \rangle_x  dx,
\end{align*}
holds for all $s \in L^2(M,E)$, $t \in \mathcal{D}(R)$. In particular, $s \in \mathcal{D}(R^*)$ if and only if $x \mapsto R_x s_x \in L^2(M,E)$ and then $R^*s = Rs$.

If $s \in \mathcal{D}(R^*)$, then 
\begin{align*}
 \langle s,R^* t \rangle = \langle s, Rt \rangle = \langle Rs, t \rangle \leq \|Rs\|_{L^2} \|t\|_{L^2},
\end{align*}
for all $t \in \mathcal{D}(R^*)$. Here, $R$ is again viewed as a zeroth-order differential operator. Consequently, $s \in \mathcal{D}(R^{**})$, and so $R^* \subset R^{**} = \ol{R}$. Because $R$ is symmetric, $\ol{R} \subset R^*$, and hence $\ol{R} = R^*$. In other words, $R$ is essentially self-adjoint.
\end{proof}

To formulate the Kodaira vanishing theorem, we need the notion of a positive line bundle (\textit{cf.} \cite{BGV92}).
\begin{dfn}
\label{dfn:positivebundle}
 A hermitian holomorphic line bundle $L$ over $M$ is said to be \emph{(semi-)positive} if its curvature $R^L$ is of the form
\begin{align*}
R^L = \sum_{k,l} R_{kl} dz^k \wedge d\ol{z}^l,
\end{align*}
where $(R_{kl})$, which is always hermitian, is a positive (semi-)definite matrix at each point. A hermitian holomorphic line bundle $L$ is said to be \emph{(semi-)negative} if $L^*$, which has curvature $-R^L$, is (semi-)positive. 
\end{dfn}

\begin{rmk}
\label{rmk:arbitraryframe}
Note that positive (semi-)definiteness of the matrix $(R_{kl})$ at a point $x$ is equivalent to saying that the sesquilinear form $(v_1,v_2) \mapsto \ol{R_x(v_1,\ol{v}_2)}$ defines a positive (semi-)definite hermitian form on $T^{(1,0)}M$. Hence, equivalently, $L$ is a (semi-)positive line bundle if $R^L(x)$ defines a positive (semi-)definite hermitian form on $T_x^{(1,0)}M$ for each point $x \in M$. This condition on $R^L$ can be checked point-wise and, moreover, it can be checked with respect to an arbitrary frame of $T_x^{(1,0)}M$. In particular, Definition \ref{dfn:positivebundle} makes sense. 
\end{rmk}

In view of the above remark we therefore say that $R^L$ is \emph{(semi-)positive at $x$} if $R^L(x)$ defines a positive (semi-)definite hermitian form on $T_x^{(1,0)}M$, and we denote this by $R^L(x) > 0$ ($R^L(x) \geq 0$). If $R^L(x)$ is positive (semi-)definite at each point $x$, then $R^L$ is simply said to be \emph{(semi-)positive}, and this is denoted by $R > 0$ ($R \geq 0$). Similar definitions are introduced for the (semi-)negative case.

We now state a Kodaira vanishing theorem for geodesically complete K\"ahler manifolds. The corresponding result for the compact case can be found in \cite[Proposition 3.72]{BGV92} and \cite[Proposition 6.1]{Dui96}. The theorem states that the kernel of $\ol{D}^L$ contains no forms of non-zero degree. On compact manifolds this is equivalent to the vanishing of the higher cohomology groups of $L$.

\begin{thm}[Kodaira vanishing theorem]
\label{thm:kvt}
Let $M$ be a geodesically complete K\"ahler manifold.  If $K^* \otimes L$ is a positive line bundle, then 
\begin{align*}
\ker \ol{D}^L \subset \Gamma^\infty(M,L),
\end{align*}
where $D^L$ denotes the twisted Dolbeault--Dirac operator. In other words, $\ker \ol{D}^L$ is concentrated in degree $0$, and in fact
$\ker(\ol{D}_-^L)=\{0\}$.
\end{thm}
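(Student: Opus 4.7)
The plan is to reduce to the classical Bochner--Kodaira--Nakano (BKN) argument and push it through to the closure using essential self-adjointness, which the geodesic-completeness hypothesis supplies via the results cited after \cref{dfn:quantization}. Nakano positivity of a line bundle controls $(n,\bullet)$-forms rather than $(0,\bullet)$-forms, so the first step is the natural isomorphism of holomorphic hermitian bundles
\begin{align*}
\Lambda^{(0,q)}T^*M\otimes L \cong \Lambda^{(n,q)}T^*M\otimes L',\quad L':=K^*\otimes L,
\end{align*}
which intertwines the twisted Dolbeault--Dirac operator $D^L$ with its counterpart on $(n,\bullet)$-forms with values in the \emph{positive} line bundle $L'$. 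All subsequent analysis takes place in the latter picture.

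On a K\"ahler manifold, the BKN identity gives, on $\Gamma^\infty_c(M,\Lambda^{(n,q)}T^*M\otimes L')$,
\begin{align*}
\tfrac{1}{2}(D^{L'})^2 = \Box_\partial^{L'} + [iR^{L'},\Lambda],
\end{align*}
where $\Box_\partial^{L'}=\partial^{L'}(\partial^{L'})^*+(\partial^{L'})^*\partial^{L'}$, $\Lambda$ is the dual Lefschetz operator, and $R^{L'}$ is the Chern curvature of $L'$. The first summand is a symmetric sum of squares and hence essentially positive. The second is a symmetric zeroth-order differential operator, so \cref{prp:esa_zeroorder} makes it essentially self-adjoint; the standard Nakano calculation in a local unitary frame diagonalizing $R^{L'}$ as $i\sum_j \lambda_j\,dz^j\wedge d\ol{z}^j$ (with all $\lambda_j>0$ by positivity of $L'$) then shows that it acts on the basis form $dz^1\wedge\cdots\wedge dz^n\wedge d\ol{z}^J\otimes s$ by multiplication by $\sum_{j\in J}\lambda_j$, which is strictly positive whenever $q=|J|\ge 1$. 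Its $L^2$-closure therefore has trivial kernel on positive-degree forms.

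Now I apply item~(3) of the preceding sum-of-essentially-positive-operators proposition with $S=\Box_\partial^{L'}$ and $T=[iR^{L'},\Lambda]$ on the common dense domain $\Gamma^\infty_c(M,\Lambda^{(n,q)}T^*M\otimes L')$ for $q\ge 1$, obtaining
\begin{align*}
\ker\ol{\Box_\partial^{L'}+[iR^{L'},\Lambda]} \subset \ker\ol{\Box_\partial^{L'}} \cap \ker\ol{[iR^{L'},\Lambda]} = \{0\}.
\end{align*}
Essential self-adjointness of $D^{L'}$, afforded by geodesic completeness, gives $\ker\ol{D^{L'}}=\ker(\ol{D^{L'}})^2=\ker\ol{(D^{L'})^2}$, and the BKN identity identifies the latter (in each positive-degree component) with the kernel of $\ol{2(\Box_\partial^{L'}+[iR^{L'},\Lambda])}$. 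Transporting back via the bundle isomorphism of the first paragraph, $\ker\ol{D^L}$ is concentrated in degree zero; elliptic regularity applied to $\ol{\partial}^L s=0$ then promotes the remaining $L^2$-sections of $L$ to smooth ones. The main obstacle is the last bookkeeping step of matching $\ol{(D^{L'})^2}$ with the closure of the BKN sum, which in the compact case is automatic but here depends crucially on the essential self-adjointness provided by geodesic completeness.
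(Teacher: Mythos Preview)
Your argument is correct and follows the same architecture as the paper's: write the square of the twisted Dolbeault--Dirac operator as a Bochner--Kodaira sum of a nonnegative second-order piece and a zeroth-order curvature term, invoke geodesic completeness for essential self-adjointness, and apply \cref{prop1to4}(3) to force elements of the kernel into the kernel of the curvature term, which is trivial in positive degree.

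The differences are cosmetic rather than structural. You pass to $(n,\bullet)$-forms with values in $L'=K^*\otimes L$ and use the Nakano identity $\Box_{\ol\partial}^{L'}=\Box_\partial^{L'}+[iR^{L'},\Lambda]$, computing the curvature eigenvalues by diagonalization; the paper stays in the $(0,\bullet)$ picture and uses the BGV-style formula $\tfrac12(D^L)^2=\Delta^{(0,\bullet)}+R$ with $R$ built directly from $R^{K^*\otimes L}$, then appeals to \cref{lem:positivetensoralgebra} to propagate positivity to higher exterior powers. Both decompositions yield a positive Laplacian plus a fibrewise-positive zeroth-order term, and the rest of the analysis is identical. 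One small point to tighten: asserting that $\Box_\partial^{L'}$ is ``essentially positive'' because it is a sum of squares only gives positivity on $\Gamma^\infty_c$; you still need essential self-adjointness, which (like the paper's treatment of $\Delta^{(0,\bullet)}$) comes from geodesic completeness and should be stated explicitly.
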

\begin{proof}
Since $D^L$ is an elliptic operator on $M$, the kernel of $\ol{D}^L$ is contained in the smooth sections of $\Lambda^{(0,\bullet)} T^*M \otimes L$ (see for instance \cite[Proposition 10.4.8]{HR00}). We show that all the sections in the kernel are necessarily $(0,0)$-forms. To do so, we first recall the \emph{Bochner-Kodaira formula} (see for instance \cite[Proposition 3.71]{BGV92}), which says that on a K\"ahler manifold the square of $D^L$ on $\Gamma^\infty_c(M,\Lambda^{(0,\bullet)} T^*M \otimes L)$ is equal to
\begin{align*}
\tfrac{1}{2}  (D^L)^2 = \Delta^{(0,\bullet)} + \sum_{k,l} e( \ol{\xi}_k) i(\ol{Z}_j) R^{K^*\otimes L}(Z_j,\ol{Z}_k),  
\end{align*}
where $\{\ol{\xi}_k\}$ is any unitary frame of $T^{*(0,1)}M$ with dual frame $\{\ol{Z}_k\}$ in $T^{(0,1)}M$. Denote the second (zeroth-order differential) operator by $R$. As $M$ is geodesically complete, the operators $\Delta^{(0,\bullet)}$ and $(D^L)^2$ are essentially self-adjoint on the domain $\Gamma_c^\infty(M,\Lambda^{(0,\bullet)} T^*M \otimes L)$. Since 
\begin{align*}
\int_M \langle \Delta^{(0,\bullet)} s,s \rangle dx  = \int_M \langle \nabla^{(0,1)}s, \nabla^{(0,1)} s \rangle dx, \quad s \in \Omega_c^{(0,\bullet)}(M,L),
\end{align*}
the closure of $\Delta^{(0,\bullet)}$ is a positive self-adjoint operator by Proposition \ref{prop1to4}.1.

Note that $R$ acts trivially on $L$, so that we can simply regard $R$ as a morphism of the bundle $\Lambda^{(0,\bullet)}T^*M$. Since $K^* \otimes L$ is positive, the bundle endomorphism 
\begin{align*}
R=\sum_{k,l} e( \ol{\xi}_k) i(\ol{Z}_j) R^{K^*\otimes L}(Z_j,\ol{Z}_k)
\end{align*}
 acts by invertible, positive operators on the fibres of $T^{*(0,1)}M$. Indeed, with respect to the unitary frame $\{\ol{\xi}_k\}$, the matrix of $R$ on $T^{*(0,1)}M$ is precisely $R(Z_j,\ol{Z}_k)$, which is positive-definite by assumption. If $\omega_1, \omega_2$ are forms of degree $(0,|\omega_1|)$ and $(0,|\omega_2|)$, respectively, the action of $R$ on $\omega_1 \wedge \omega_2$ is
\begin{align*}
 R(\omega_1 \wedge \omega_2) &=\sum_{j,k} R^{K^* \otimes L}(Z_j,\ol{Z}_k) e(\ol{\xi}_k) \left(  i(\ol{Z}_j) \omega_1 \wedge \omega_2 + (-1)^{|\omega_1|} \omega_1 \wedge i(\ol{Z}_j) \omega_2 \right) \\ &=  \sum_{j,k} R^{K^* \otimes L}(Z_j,\ol{Z}_k) \left( e(\ol{\xi}_k) i(\ol{Z}_j) \omega_1 \wedge \omega_2 +  \omega_1 \wedge e(\ol{\xi}_k) i(\ol{Z}_j) \omega_2 \right) \\ &= R\omega_1 \wedge \omega_2 + \omega_1 \wedge R\omega_2.
\end{align*}
This means that $R$ acts (fibrewise) as a derivation of degree $0$. We can now apply Lemma \ref{lem:positivetensoralgebra}: let $A$ be the restriction of $R_x$ to $V=T_x^{*(0,1)}M$. Then, with the notation of Lemma \ref{lem:positivetensoralgebra}, $R_x$ is the restriction of $\widetilde{A}$ on $T(T_x^{*(0,1)}M)$ to $\Lambda^\bullet(T_x^{*(0,1)}M)$. Hence $R$ acts as a positive operator on the fibres of $\Lambda^\bullet(T^{*(0,1)}M) \otimes L$. By 
Proposition \ref{prp:esa_zeroorder} the operator $R$ is essentially positive.

The operators $(\ol{D}^L)^2$ and $\ol{(D^L)^2}$ are both self-adjoint extensions of the essentially self-adjoint operator $(D^L)^2$, hence they are equal. Thus  $s \in \ker \ol{D}^L$ if and only if $s \in \ker \ol{(D^L)^2}$. Since both $\Delta^{(0,\bullet)}$ and $R$ are essentially positive on $\Gamma_c^\infty(M,\Lambda^\bullet(T^{*(0,1)}M) \otimes L)$, an element $s \in \ker \ol{D}^L \subset  \Gamma^\infty(M,\Lambda^\bullet(T^{*(0,1)} M) \otimes L)$ is also in $\ker R$ by
 (\ref{cor:sumofpositiveoperators}), i.e., $Rs =0$. As $s$ is smooth, so is $Rs$. Thus  $Rs = 0$ if and only if $R_x s(x) = 0$ for all $x \in M$.  Another application of Lemma \eqref{lem:positivetensoralgebra} shows that $s(x) \in (\Lambda^0 T^{*(0,1)}M \otimes L)_x$ for each $x$. Consequently, $s \in \Gamma^\infty(M,L)$.
\end{proof}

\subsection{Semi-negativity of the canonical line bundle}
\label{sct:can_semi-neg}
We prove that the canonical line bundle on $T^*G$ is semi-negative for any compact connected Lie group. Let $\{W^k\}$ be linearly independent left-invariant holomorphic $(1,0)$-forms on $G^\mathbb{C}$, and write
\begin{align*}
 \omega = -i \sum_{k,l} g_{kl} W^k \wedge \ol{W}^l.
\end{align*}
The Liouville measure is defined as
\begin{align*}
 \varepsilon &= \frac{(-1)^n}{n!} \omega^n = \frac{i^n}{n!}  \left(\sum_{k,l} g_{kl} W^k \wedge \ol{W}^l\right)^n \\&= \frac{i^n}{n!} \left( n! \sum_{\sigma \in S_n} (-1)^\sigma g_{1 \sigma(1)} \cdots g_{n \sigma(n)} W^1 \wedge \ol{W}^1 \wedge \cdots \wedge  W^n \wedge \ol{W}^n \right) \\
&= i^n (-1)^{n(n+1)/2} \det (g_{kl}) \ol{W}^1 \wedge \cdots \wedge \ol{W}^n \wedge W^1 \wedge \cdots \wedge W^n.
\end{align*}
By \cite{Hal02}, the function $\eta$ of \eqref{eq:eta} satisfies 
\begin{align*}
b c^2  \eta^2 \varepsilon = \ol{W}^1 \wedge \cdots \wedge \ol{W}^n \wedge W^1 \wedge \cdots \wedge W^n,
\end{align*}
where $b =i^n(-1)^{n(n-1)/2}$ and $c>0$ a constant, so that
\begin{align*}
\det (g_{kl}) =  \frac{1}{ i^n (-1)^{n(n+1)/2} bc^2 \eta^2} = \frac{1}{ i^{2n} (-1)^{n(n+1)/2} (-1)^{n(n-1)/2}c^2 \eta^2} = \frac{1}{c^2\eta^2}.
\end{align*}
In particular, the curvarture of the canonical line bundle $K$ is
\begin{align*}
 \partial\overline{\partial} \log \det (g_{kl}) = 2 \overline{\partial}\partial \log \eta.
\end{align*}
Consequently, the canonical line bundle $K$ is semi-negative if and only if $\eta$ is plurisubharmonic. The key point in proving semi-negativity of the canonical line bundle is the following proposition due to Bielawsky \cite{Bie03}.\footnote{Sign errors that propagate throughout Bielawsky's proof made us (and perhaps other readers) feel increasingly insecure. Hence we give an independent discussion with a complete proof. See also \cite[\S XIII.4]{Neebbook} (provided by the referee) for analogous results (not including the one above).}

Let $\textbf{T}$ be a maximal torus in the compact connected group $G$, $\textbf{t}$ its Lie algebra, and $W(G,\textbf{T})$ the corresponding Weyl group. Any $G \times G$-invariant function on $G \times \mathfrak{g}$ (and hence on $T^*G\cong G \times \mathfrak{g}$ or on $G^\mathbb{C}\cong T^*G$) is determined by its values on $\{e \} \times \mathfrak{t}$ (as this subset intersects every orbit), and this restriction is  $W(G,\textbf{T})$-invariant. 
Conversely, $W(G,\textbf{T})$-invariant functions on $\mathfrak{t}$ uniquely extend to $G \times G$-invariant functions on $G \times \mathfrak{g}$.
In what follows, we also use the non-trivial fact that the bijective correspondence between these function spaces preserves smoothness \cite{Schwarz} (see also \cite[Cor.\ 5.11]{Helgason}).
\begin{prp}
\label{prp:con_psh}
The restriction of $G \times G$-invariant functions on $G^\mathbb{C}$ to $\{e\} \times \mathfrak{t}$ determines a bijective correspondence between smooth $G \times G$-invariant plurisubharmonic functions on $G^\mathbb{C}$ and smooth $W(G,\textbf{T})$-invariant convex functions on $\mathfrak{t}$.
\end{prp}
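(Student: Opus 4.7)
The strategy is to split the statement into three parts: establishing the underlying bijection of smooth invariant functions, the easy implication that plurisubharmonicity forces the restriction to be convex, and the harder converse.

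For the bijection, use the polar decomposition $\Phi$ of \S\ref{sct:cot_bun} to identify $G^\mathbb{C}$ with $G\times\mathfrak{g}$; the $G\times G$-orbit through $(x,Y)$ is then $G\times\Ad(G)Y$. Since every $\Ad(G)$-orbit in $\mathfrak{g}$ meets $\mathfrak{t}$ in exactly one Weyl orbit, restriction to $\{e\}\times\mathfrak{t}$ yields a bijection between $G\times G$-invariant continuous functions on $G^\mathbb{C}$ and $W(G,\textbf{T})$-invariant continuous functions on $\mathfrak{t}$. Smoothness is preserved trivially under restriction, while in the other direction one uses the standard fact that every smooth $W$-invariant function on $\mathfrak{t}$ extends uniquely to a smooth $\Ad(G)$-invariant function on $\mathfrak{g}$, and hence via $\Phi$ to a smooth $G\times G$-invariant function on $G^\mathbb{C}$.

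For the easy direction, assume $f$ is plurisubharmonic on $G^\mathbb{C}$ and $G\times G$-invariant, and set $\phi:=f|_{\{e\}\times\mathfrak{t}}$. For any $Y_0,Y_1\in\mathfrak{t}$, the map $\gamma(z):=e^{iY_0}e^{zY_1}$ is holomorphic from $\mathbb{C}$ into $G^\mathbb{C}$ by the defining property of the complex structure. Writing $z=u+iv$ and using abelianness of $\mathfrak{t}$, one finds $\gamma(z)=e^{uY_1}e^{i(Y_0+vY_1)}$, so $G\times G$-invariance forces $f\circ\gamma(u+iv)=\phi(Y_0+vY_1)$, which depends only on $v$. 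Being smooth, subharmonic, and independent of $u$, the function $f\circ\gamma$ is convex in $v$. Varying $Y_0,Y_1\in\mathfrak{t}$ shows that $\phi$ is convex along every affine line in $\mathfrak{t}$, and hence convex.

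For the hard direction, let $\phi$ be smooth, $W$-invariant, and convex on $\mathfrak{t}$, and let $f$ denote its $G\times G$-invariant smooth extension to $G^\mathbb{C}$. By $G\times G$-invariance it suffices to verify $(\partial\overline{\partial}f)(V,\overline{V})\geq 0$ at points $e^{iY_0}$ with $Y_0\in\mathfrak{t}$ regular; singular $Y_0$ follow by density. Left-translating to the identity, the holomorphic tangent space at $e^{iY_0}$ is $\mathfrak{g}^\mathbb{C}=\mathfrak{t}^\mathbb{C}\oplus\bigoplus_{\alpha\in R}\mathfrak{g}^\mathbb{C}_\alpha$. The diagonal torus $\textbf{T}\subset G\times G$ fixes $e^{iY_0}$ and acts on this tangent space through the adjoint representation, whose isotypical components are precisely the root-space summands; hence $\textbf{T}$-invariance of the Hermitian form $\partial\overline{\partial}f(e^{iY_0})$ forces it to be block-diagonal in the root-space decomposition. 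On the $\mathfrak{t}^\mathbb{C}$-block, the calculation of the previous paragraph identifies the form with $\tfrac14$ of the real Hessian of $\phi$ at $Y_0$, which is positive semi-definite by convexity. Each root-space block is one complex-dimensional, and an explicit computation using the Jacobian \eqref{eq:differentialPhi} of $\Phi$ reduces it to a positive multiple (involving the factor $\sinh(\alpha(Y_0))/\alpha(Y_0)>0$ already encountered in \eqref{eq:eta}) of the ratio $\alpha(Y_0)^{-1}\langle\nabla\phi(Y_0),H_\alpha\rangle$, where $H_\alpha$ is the coroot. Setting $\psi(t):=\phi(Y_0+tH_\alpha)$, the $W$-invariance of $\phi$ under $s_\alpha$ gives $\psi(0)=\psi(-\alpha(Y_0))$, and since $\psi$ is convex on $\mathbb{R}$ with equal values at $0$ and $-\alpha(Y_0)$, its derivative at $0$ has the same sign as $\alpha(Y_0)$, so the ratio is non-negative. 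All blocks are thus positive semi-definite, proving the plurisubharmonicity of $f$.

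The main obstacle is the explicit block-diagonal computation of the complex Hessian on the root-space summands in the last paragraph. The differential \eqref{eq:differentialPhi} of the polar map mixes the $G$- and $\mathfrak{g}$-factors through $\sin(\ad Y_0)$ and $\cos(\ad Y_0)$, and pulling this through $\partial$ and $\overline{\partial}$ requires careful bookkeeping before the clean prefactor $\sinh(\alpha(Y_0))/\alpha(Y_0)$ emerges. Notably, this is also the geometric origin of the function $\eta$ appearing in the half-form correction \eqref{eq:eta}.
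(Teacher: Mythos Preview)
Your proposal is correct in its overall structure and reaches the same key observations as the paper, but the argument is organised differently. The paper does not separate the two directions; instead it expresses the Levi form $i\partial\overline{\partial}K$ through an auxiliary hermitian operator $\Theta - i\,\ad\mu$ on $\mathfrak{g}_\mathbb{C}$ (where $\mu$ is the $G$-equivariant extension of $\nabla\widetilde{K}$), shows in Lemma~\ref{lem:psh_theta} that $K$ is plurisubharmonic iff this operator is positive semidefinite, and then computes its eigenvalues explicitly in Lemma~\ref{lem:theta_eigenvalues}: on $\mathfrak{t}_\mathbb{C}$ they are the eigenvalues of the real Hessian of $\widetilde{K}$, and on each root space $\mathfrak{g}_{i\alpha}$ the single eigenvalue is $\alpha(\mu(Y))\bigl(\tfrac{\cosh\alpha(Y)}{\sinh\alpha(Y)}+1\bigr)$. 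Both directions then drop out at once: positivity of the Hessian block alone gives convexity, and for the converse one combines $x\bigl(\tfrac{\cosh x}{\sinh x}+1\bigr)\geq 0$ with the fact (argued exactly as you do) that $W$-invariance plus convexity force $\alpha(\nabla\widetilde{K})/\alpha(Y)\geq 0$.

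Your block-diagonalisation via the isotypical decomposition under the diagonal $\textbf{T}$-action is a clean substitute for the Bielawski connection-form machinery, and your holomorphic-curve argument for the easy direction is more elementary than the paper's route through the full eigenvalue computation. Your sign argument for the root-space blocks via $\psi(t)=\phi(Y_0+tH_\alpha)$ with $\psi(0)=\psi(-\alpha(Y_0))$ is precisely the idea the paper uses. The one place your outline stays at the level of a sketch is the explicit reduction of each root-space block to a positive multiple of $\alpha(Y_0)^{-1}\langle\nabla\phi(Y_0),H_\alpha\rangle$; the paper's Lemma~\ref{lem:theta_eigenvalues} carries this out and finds the prefactor $\alpha(Y)\bigl(\tfrac{\cosh\alpha(Y)}{\sinh\alpha(Y)}+1\bigr)$ rather than the $\sinh\alpha(Y)/\alpha(Y)$ you guessed, though either form is non-negative and suffices.
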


To prove this, let us introduce some notation (which is taken from \cite{Bie03} as well). Reagard $G \times \mathfrak{g}$ as a left $G$-principal bundle, where $G$ acts on $G \times \mathfrak{g}$ by left-multiplication on $G$. If $Y \in \mathfrak{g}$, write $Y^*$ for the fundamental vector field on $G$ associated to $Y$. Because $G$ acts from the left, the vector field $Y$ is right-invariant and $[X^*,Y^*] = - [X,Y]^*$ for all $X,Y \in \mathfrak{g}$. 
Write $J$ for the complex structure on $G \times \mathfrak{g}$. One can define a connection $1$-form on $G \times \mathfrak{g}$ by
\begin{align*}
 \theta(X^* + JY^*) = X, \quad (X \in \mathfrak{g}),
\end{align*}
and a $\mathfrak{g}$-valued $1$-form $L$, vanishing on vertical vectors, by $L = J\theta$. Suppose now that we are given a smooth $W(G,\textbf{T})$-invariant function $\widetilde{f}$ on $\mathfrak{t}$ extended uniquely to a $G \times G$-equivariant smooth function $f$ on $G^\mathbb{C}$ (we do not assume yet that $\widetilde{f}$ is convex). We write $\langle \cdot, \cdot \rangle_\mathfrak{g}$ for a (fixed) $\Ad G$-invariant inner product on $\mathfrak{g}$ and extend it to a complex-bilinear form on $\mathfrak{g}_\mathbb{C}$.
Because $J$ is left $G$-invariant, the definition of the Lie bracket of vector fields implies that 
\begin{align*}
[X^*,JY] = J[X^*,Y], \quad \text{for $X \in \mathfrak{g}$ and $Y$ a vector field}.
\end{align*}
Then, by the Nijenhuis-Schouten bracket,
\begin{align}
\label{eq:JJvert}
 [X^*,Y^*] =  - [JX^*,JY^*], \quad \text{for all $X,Y \in \mathfrak{g}$}.
\end{align}
\begin{lem}
 On $\{e\} \times \mathfrak{g}$, we have
\begin{align*}
i (\partial f - \overline{\partial} f)  = \langle \nabla f, \theta \rangle_\mathfrak{g}.
\end{align*}
\end{lem}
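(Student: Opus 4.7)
The plan is to use the identity $i(\partial - \overline\partial) = -J\circ d$ on smooth functions to reformulate the claim, and then perform a pointwise verification via the explicit formula for the complex structure on $G\times\mathfrak{g}$ combined with the $G\times G$-invariance of $K$.

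First, $i(\partial K - \overline\partial K) = -J(dK)$ is a general complex-manifold fact, following from the type decomposition $dK = \partial K + \overline\partial K$ together with the eigenvalues $\mp i$ of $J$ on $(1,0)$- and $(0,1)$-forms (under the convention $(J\alpha)(v) = -\alpha(Jv)$). Evaluating on a vector $v$, the claim becomes $dK(Jv) = \langle \nabla K, \theta(v)\rangle_{\mathfrak{g}}$. Using the connection-defined splitting, write $v = X^* + JW^*$, so $\theta(v) = X$ and $Jv = JX^* - W^*$; the left $G$-invariance of $K$ kills the $W^*$ contribution, reducing the identity to the pointwise statement $dK(JX^*)|_{(e, Y_0)} = \langle \nabla K, X\rangle_{\mathfrak{g}}$.

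For the pointwise computation: in left-trivialised coordinates $X^*|_{(e, Y_0)} = (X, 0)\in\mathfrak{g}\oplus\mathfrak{g}$, and the complex structure on $G\times\mathfrak{g}$ is $\widetilde J = (T\Phi)^{-1} J_e (T\Phi)$ with $T\Phi$ given by \eqref{eq:differentialPhi}. A short block-matrix inversion yields
$$JX^*\big|_{(e, Y_0)} = \Big(\tfrac{1-\cos\ad Y_0}{\sin\ad Y_0}\,X,\; \tfrac{\ad Y_0}{\sin\ad Y_0}\,X\Big).$$
By left $G$-invariance the first (vertical) component is annihilated by $dK$, so
$$dK(JX^*) = \Big\langle\nabla K,\; \tfrac{\ad Y_0}{\sin\ad Y_0}\,X\Big\rangle_{\mathfrak{g}},$$
where $\nabla K$ denotes the gradient of $Y\mapsto K(e, Y)$ at $Y_0$ with respect to the $\Ad G$-invariant inner product.

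The conclusion follows from right $G$-invariance of $K$: the right-$G$ fundamental vector field at $(e, Y_0)$ is $(-Z, [Z, Y_0])$, so combining right invariance with left invariance gives $\langle \nabla K, [Z, Y_0]\rangle = 0$ for every $Z\in\mathfrak{g}$, which by skew-symmetry of $\ad Y_0$ means $\nabla K \in \ker\ad Y_0$. Since $\tfrac{\ad Y_0}{\sin\ad Y_0}$ is a power series in $(\ad Y_0)^2$ with constant term $1$ and $\ad Y_0$ is skew-symmetric, moving the operator to the left of the inner product kills every positive-order term, leaving $dK(JX^*) = \langle \nabla K, X\rangle_{\mathfrak{g}}$, as required. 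The main technical step is the explicit matrix inversion producing the form of $JX^*$; the conceptual core is that right $G$-invariance is precisely what eliminates the unwanted factor $\tfrac{\ad Y_0}{\sin\ad Y_0}$.
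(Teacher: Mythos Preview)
Your proof is correct and follows essentially the same route as the paper's. The paper evaluates both sides on the $(1,0)$- and $(0,1)$-vectors $X^* \mp iJX^*$, which amounts to your identity $i(\partial K-\overline{\partial}K)=-J\,dK$; it then computes $JX^*$ via \eqref{eq:differentialPhi} exactly as you do, and invokes $[Y,\nabla K]=0$ (stated there directly from $\Ad G$-invariance of $K$, which is the infinitesimal form of your right-$G$-invariance argument) together with the power-series expansion of $\frac{\ad Y}{\sin\ad Y}$ to remove the unwanted factor. The only differences are organizational.
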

\begin{proof}
 We first evaluate both sides on vectors of the form $X - iJX$ and $X + i JX$, ($X \in \mathfrak{g}$):
\begin{align*}
 i(\partial f - \overline{\partial} f)(X-iJX) = i (\partial f)(X - iJX) = i df(X - iJX) =df(JX),
\end{align*}
where, in the final step, we used the fact that $f$ only depends on $\mathfrak{g}$. From \eqref{eq:differentialPhi} and its inverse one can deduce that $$JX = \left(\frac{1- \cos \ad Y}{\sin \ad Y}X,\frac{\ad Y}{\sin \ad Y} X\right).$$  Therefore,\footnote{Although we work with right-invariant vector fields here, \eqref{eq:differentialPhi}, which was formulated for the left-trivialisation of $TG$, can still be applied. The reason is that we compute in the point $\{e\} \times \mathfrak{g}$.}
\begin{align*}
 df(JX) = \left\langle \nabla f, \frac{\ad Y}{\sin \ad Y} X \right\rangle_\mathfrak{g} = \left\langle \frac{\ad Y}{\sin \ad Y} \nabla f,  X \right\rangle_\mathfrak{g}.
\end{align*}
Note that $[Y,\nabla f] =0$, because $f$ is $\Ad G$-invariant. Expanding $\frac{\ad Y}{\sin \ad Y}$ in a power series, we see that
\begin{align*}
 i(\partial f - \overline{\partial} f)(X-iJX) = \langle \nabla f,  X \rangle_\mathfrak{g}= \langle \nabla f,  \theta \rangle_\mathfrak{g}(X-iJX).
\end{align*}
Similarly, for each $X \in \mathfrak{g}$,
$i(\partial f - \overline{\partial} f)(X + iJX) = \langle \nabla f,  \theta \rangle_\mathfrak{g}(X+iJX)$.
\end{proof}

The form $\partial f - \overline{\partial} f$ is invariant under the left-action of $G$. However, for the form $\langle \nabla f, \theta \rangle_\mathfrak{g}$, initially defined on $\{e\} \times \mathfrak{g}$, to be invariant under left-multiplication, we need to extend $\nabla f$ to a $G$-equivariant map from $G \times \mathfrak{g}$ to $\mathfrak{g}$, i.e., 
\begin{align*}
(\nabla f)_{(g,Y)} = \Ad_g (\nabla f)_{(e,Y)}, \quad \text{for all $g \in G$, $Y \in \mathfrak{g}$}. 
\end{align*}
Following \cite{Bie03}, we call this map $\mu$. Thus  by definition,
\begin{align}
\label{eq:potentialK}
i(\partial f - \overline{\partial} f) = \langle \mu, \theta \rangle_\mathfrak{g}
\end{align}
on $G \times \mathfrak{g}$. The map $\mu$ is $G$-equivariant, so the covariant derivative of $\mu$ (with respect to the connection $1$-form $\theta$) $D\mu = d\mu - \theta \wedge \mu$ vanishes on vertical vector fields. At each point $(g,Y) \in G \times \mathfrak{g}$,  the map $L$ determines an isomorphism from the horizontal tangent space at that point to $\mathfrak{g}$. Still following \cite{Bie03}, for each $(g,Y) \in G\times\mathfrak{g}$ we define
\begin{align*}
\Theta(g,Y) := D\mu(g,Y) \circ L^{-1}(g,Y): \mathfrak{g} \rightarrow \mathfrak{g}. 
\end{align*}
Taking the exterior derivative on both sides of \eqref{eq:potentialK}, we obtain
\begin{align*}
2i \overline{\partial} \partial f = d\langle \mu, \theta \rangle_\mathfrak{g} .
\end{align*}
Write $[\alpha,\alpha](X,Y) := [\theta(X), \theta(Y)]$ for any $\mathfrak{g}$-valued $1$-form $\alpha$ and vector fields $X,Y$. Then, by \eqref{eq:JJvert},
\begin{align*}
(d\theta  - [\theta, \theta])(JX^*, JY^*) = -\theta([JX^*,JY^*]) = - [X,Y] = -[L,L](JX^*, JY^*). 
\end{align*} 
Because both $d\theta - [\theta,\theta]$ and $[L,L]$ vanish on vertical vector fields, it follows that 
\begin{align*}
 d\theta - [\theta, \theta ] = - [L,L].
\end{align*}
\begin{lem}
\label{lem:psh_theta}
 The $G \times G$-invariant function $f$ is plurisubharmonic if and only if the hermitian map $\Theta - i \ad \mu: \mathfrak{g}_\mathbb{C} \rightarrow \mathfrak{g}_\mathbb{C}$ is positive.
\end{lem}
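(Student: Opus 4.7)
The plan is to compute the Levi form $i\partial\overline{\partial}K$ at a point of the form $(e,Y)\in G\times\mathfrak{g}$ (it suffices to work there by $G\times G$-invariance of $K$) and read off when it is positive semi-definite on $T^{(1,0)}$. Starting from the identity $2i\,\overline{\partial}\partial K = d\langle\mu,\theta\rangle_{\mathfrak{g}}$ established just before the lemma, I would apply the Leibniz rule to obtain
\begin{align*}
 d\langle\mu,\theta\rangle_{\mathfrak{g}} = \langle d\mu \wedge \theta\rangle_{\mathfrak{g}} + \langle \mu,\, d\theta\rangle_{\mathfrak{g}},
\end{align*}
and then substitute the two structure identities already in the text, namely $d\mu = D\mu + \theta\wedge\mu$ (from the definition of the covariant derivative of the equivariant map $\mu$) and $d\theta = [\theta,\theta] - [L,L]$.

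Next I would decompose tangent vectors using the connection. At $(e,Y)$ the map $\theta$ identifies the vertical subspace with $\mathfrak{g}$ via $X\mapsto X^*$, while $L=J\theta$ identifies the horizontal subspace with $\mathfrak{g}$ via $X\mapsto JX^*$. A general real tangent vector is then $X^*+JU^*$, and applying $J$ amounts to the swap $(X,U)\mapsto(-U,X)$; consequently a $(1,0)$-vector has the form $Z^*-iJZ^*$ for a unique $Z=X+iU\in\mathfrak{g}_{\mathbb{C}}$. Plugging a pair $(Z_1^*-iJZ_1^*,\ \overline{Z_2^*-iJZ_2^*})$ into $d\langle\mu,\theta\rangle_{\mathfrak{g}}$ and using the expansion above, the contributions organize themselves as follows: the $\langle D\mu\wedge\theta\rangle$ piece pairs one horizontal with one vertical factor and, by the very definition $\Theta = D\mu\circ L^{-1}$, produces the $\Theta$ contribution on $\mathfrak{g}_{\mathbb{C}}$; the $\langle\theta\wedge\mu,\theta\rangle$ piece cancels exactly against the $\langle\mu,[\theta,\theta]\rangle$ piece by $\Ad$-invariance of $\langle\cdot,\cdot\rangle_{\mathfrak{g}}$ (so the vertical-vertical part of the Levi form vanishes after using the structure equation); and the remaining $-\langle\mu,[L,L]\rangle$ piece pairs two horizontal factors to yield $\ad\mu$. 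The imaginary unit from the $(1,0)$-decomposition converts the antisymmetric bracket with $\mu$ into the hermitian piece $-i\,\ad\mu$, producing the total
\begin{align*}
 i\,\partial\overline{\partial}K\bigl(Z_1^*-iJZ_1^*,\ \overline{Z_2^*-iJZ_2^*}\bigr) \;=\; c\,\bigl\langle Z_1,\ (\Theta - i\,\ad\mu)Z_2\bigr\rangle_{\mathfrak{g}_{\mathbb{C}}}
\end{align*}
for some positive constant $c$, where $\langle\cdot,\cdot\rangle_{\mathfrak{g}_{\mathbb{C}}}$ is the complex-bilinear extension of the fixed $\Ad G$-invariant inner product.

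Since plurisubharmonicity of $K$ is precisely the statement that the left-hand side is a positive semi-definite hermitian form on $T^{(1,0)}$ at every point, and since the identification $Z\mapsto Z^*-iJZ^*$ is a bijection $\mathfrak{g}_{\mathbb{C}}\to T^{(1,0)}_{(e,Y)}$, this is equivalent to $\Theta-i\,\ad\mu\geq 0$ as a hermitian endomorphism of $\mathfrak{g}_{\mathbb{C}}$ at each $Y\in\mathfrak{t}$ (and hence, by $G\times G$-equivariance, at every point of $G^{\mathbb{C}}$). The main obstacle I expect is bookkeeping: correctly isolating the horizontal-horizontal, horizontal-vertical and vertical-vertical components in $d\langle\mu,\theta\rangle_{\mathfrak{g}}$ after the two substitutions, verifying the cancellation of the purely vertical terms via $\Ad$-invariance, and tracking the factors of $i$ that turn the real map $\ad\mu$ into a hermitian operator on the complexification.
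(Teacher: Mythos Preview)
Your overall strategy is exactly the paper's: expand $d\langle\mu,\theta\rangle_{\mathfrak g}$ via the Leibniz rule, substitute $d\mu=D\mu+\theta\wedge\mu$ and $d\theta=[\theta,\theta]-[L,L]$, and then evaluate on pairs $(X^*-iJX^*,\,Y^*+iJY^*)$. However, one piece of your bookkeeping is wrong and it changes the answer.

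You assert that ``the $\langle\theta\wedge\mu,\theta\rangle$ piece cancels exactly against the $\langle\mu,[\theta,\theta]\rangle$ piece''. It does not. Using $\Ad$-invariance one finds
\[
\langle\theta\wedge\mu,\theta\rangle_{\mathfrak g}(V,W)
=\langle[\theta(V),\mu],\theta(W)\rangle-\langle[\theta(W),\mu],\theta(V)\rangle
=-2\,\langle\mu,[\theta(V),\theta(W)]\rangle,
\]
so $\langle\theta\wedge\mu,\theta\rangle+\langle\mu,[\theta,\theta]\rangle=-\langle\mu,[\theta,\theta]\rangle$, not $0$. The surviving term $-\langle\mu,[\theta,\theta]\rangle$ is precisely the one you drop, and on the pair $(X^*-iJX^*,\,Y^*+iJY^*)$ it contributes $-\langle\mu,[X,Y]\rangle$, \emph{the same} amount as $-\langle\mu,[L,L]\rangle$ (since $L$ picks out $-iX$ and $iY$, and $[-iX,iY]=[X,Y]$). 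In the paper's computation these two equal contributions combine to give the full $i\langle\mu,[X,Y]\rangle$ in $\overline\partial\partial K$, whence the operator $\Theta-i\,\ad\mu$.

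If you keep only the $[L,L]$ contribution, you end up with $\Theta-\tfrac{i}{2}\ad\mu$; this is a relative factor between $\Theta$ and $\ad\mu$ and cannot be absorbed into your unspecified constant $c$. Correspondingly, your parenthetical claim that ``the vertical-vertical part of the Levi form vanishes'' is false for the $2$-form $d\langle\mu,\theta\rangle$: on two vertical vectors $X^*,Y^*$ it equals $-\langle\mu,[X,Y]\rangle$, coming exactly from the term you discarded. Fix this one step and your outline becomes the paper's proof.
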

\begin{proof}
Let us first rewrite the expression for $i \overline{\partial} \partial f = d\langle \mu, \theta \rangle_\mathfrak{g}$:
\begin{align*}
 d\langle \mu, \theta \rangle_\mathfrak{g}&= \langle d\mu , \theta \rangle_\mathfrak{g} + \langle \mu ,d\theta \rangle_\mathfrak{g} = \langle D\mu ,\theta \rangle_\mathfrak{g} + \langle \theta \wedge \mu,   \theta \rangle_\mathfrak{g}  + \langle \mu , [\theta, \theta] \rangle_\mathfrak{g} - \langle \mu, [L,L] \rangle_\mathfrak{g} \\ &= \langle D\mu ,\theta \rangle_\mathfrak{g} - \langle \mu , [\theta, \theta] \rangle_\mathfrak{g} - \langle \mu, [L,L] \rangle_\mathfrak{g},   
\end{align*}
where $$\langle \alpha,\beta \rangle_\mathfrak{g}(X,Y) = \langle \alpha(X), \beta(Y) \rangle_\mathfrak{g} - \langle \alpha(Y), \beta(X) \rangle_\mathfrak{g}$$ for all vector fields $X,Y$ and $\mathfrak{g}$-valued $1$-forms $\alpha, \beta$. Note that, as a $(1,1)$-form, $d\langle \mu, \theta \rangle_\mathfrak{g}$ is $J$-invariant, i.e., $$d\langle \mu, \theta \rangle_\mathfrak{g}(JX,JY) =  d\langle \mu, \theta \rangle_\mathfrak{g}(X,Y)$$ for all vector fields $X,Y$, so that $\Phi: \mathfrak{g} \rightarrow \mathfrak{g}$ is symmetric. We are now able to evaluate $\overline{\partial}\partial f$ on pairs of vector fields. We have:\footnote{In \cite[on top of pp. 129]{Bie03} one can find a similar result, in which in our opinion $\Theta + i \ad \mu$ should be replaced by $-\Theta + i\ad \mu$ (in \cite{Bie03}, the symbol $\Phi$ is used instead of $\Theta$).} 
\begin{align*}
& \overline{\partial}\partial f(X^* - i JX^*  , Y^*+ iJY^* )  \\ &=- \frac{1}{2}\langle D\mu(JX^*) ,Y \rangle_\mathfrak{g} - \frac{1}{2}\langle D\mu(JY^*) ,X \rangle_\mathfrak{g} + \frac{i}{2} \langle \mu , [X,Y] \rangle_\mathfrak{g} + \frac{i}{2}\langle \mu, [X,Y] \rangle_\mathfrak{g} \\ &=- \frac{1}{2}\langle\Theta(X) ,Y) \rangle_\mathfrak{g} - \frac{1}{2}\langle \Theta(Y) ,X \rangle_\mathfrak{g} +  i\langle \mu , [X,Y] \rangle_\mathfrak{g} = -\langle \Theta(X) ,Y \rangle_\mathfrak{g} +  i\langle \mu , [X,Y] \rangle_\mathfrak{g} \\ &
=- \langle  Y, \Theta(X) \rangle_\mathfrak{g} + i \langle Y, [\mu,X]  \rangle_\mathfrak{g}
= - \langle Y, (\Theta - i \ad \mu )X \rangle_\mathfrak{g}.
\end{align*}
Write $h$ for the sesquilinear form $h(Z,W) = \overline{\partial}\partial f(Z,\overline{W})$ (which is the only time in this paper we take the sesquilinear form to be anti-linear in the \emph{second} variable) and write $\langle \cdot ,\cdot \rangle_ \mathfrak{g}^s$ for the extension of the inner product on $\mathfrak{g}$ to an inner product on $\mathfrak{g}_\mathbb{C}$, \emph{anti-linear} in the first variable. Then we find
\begin{align*}
h(X^* - i JX^*  , Y^*- iJY^* ) = - \langle Y,(\Theta - i\ad \mu)X \rangle_\mathfrak{g}^s, 
\end{align*}
for all $X,Y \in \mathfrak{g}_\mathbb{C}$. From this expression it is clear that $f$ is plurisubharmonic if and only if the hermitian map $\Theta - i\ad \mu: \mathfrak{g}_\mathbb{C} \rightarrow \mathfrak{g}_\mathbb{C}$ is positive.
\end{proof}

Let $R$ denote the set of \emph{real} roots of $(G,\textbf{T})$, i.e., $R = \{ \alpha \mid  i\alpha \text{ is a root for } (G,\textbf{T}) \}$.
\begin{lem}
\label{lem:theta_eigenvalues}
 The eigenvalues of the hermitian map $\Theta - i\ad \mu$ on $\{e\} \times \mathfrak{t}$ are given by the eigenvalues of the Hessian matrix of $\widetilde{f}$ and by 
 \begin{equation}
\alpha(\mu(Y)) \left( \frac{\cosh(\alpha(Y))}{\sinh(\alpha(Y))} + 1  \right)\:\:\: (\alpha \in R).\label{almuY}
\end{equation}
\end{lem}
 If $\alpha(Y) =0$, this expression should be interpreted as a limit: rewrite the expression as $\frac{\alpha(\mu(Y))}{\alpha(Y)} \left( \frac{\alpha(Y)\cosh(\alpha(Y))}{\sinh(\alpha(Y))} + \alpha(Y) \right)$. The second factor is defined if  $\alpha(Y)=0$, because $(t / \sinh t)|_{t=0}= (\sinh t / t)^{-1}|_{t=0}=1$. To see what the first factor means, choose an orthonormal basis $\{e_i\}_{i=1}^k$ for $\mathfrak{t}$ such that $\{ e_2, \dots, e_k\}$ spans $\ker \alpha$. Let $(y_i)$ be the corresponding coordinates for $Y \in \mathfrak{t}$. Then $\alpha(\mu(Y)) / \alpha(Y) = \frac{1}{y_1}\frac{\partial \widetilde{f}}{\partial y_1}$. Now, $\lim_{y_1 \rightarrow 0} \frac{1}{y_1}\frac{\partial \widetilde{f}}{\partial y_1}(y_1, \dots y_k) =  \frac{\partial^2 \widetilde{f}}{\partial y_1^2}(0,y_2, \dots, y_k)$, where we have used $\frac{\partial \widetilde{f}}{\partial y_1}(0,y_2, \dots y_k) =0$ for all $y_2, \dots y_k$ by $W(G,\textbf{T})$-invariance of $\widetilde{f}$.

\begin{proof}
Decompose $\mathfrak{g}_\mathbb{C}$ as 
\begin{align*}
 \mathfrak{g}_\mathbb{C} = \mathfrak{t}_\mathbb{C} \oplus \bigoplus_{i \alpha} \mathfrak{g}_{i\alpha}, 
\end{align*}
where $\mathfrak{g}_{i\alpha}$ is the root space belonging to the root $i\alpha$. Consider the composition
\begin{align*}
(\Theta - i\ad \mu) \circ L,
\end{align*}
where $L$ is considered a map $\mathfrak{g} \rightarrow \mathfrak{g}$, because on $\{e\} \times \mathfrak{g}$, the map 
\begin{align*}
L:T_{(e,Y)} (G\times \mathfrak{g}) = \mathfrak{g} \times \mathfrak{g} \rightarrow \mathfrak{g}
\end{align*}
 is given by
\begin{align*}
 L (X_1,X_2) = \frac{\sin \ad Y}{\ad Y} \cdot X_2.
\end{align*}
Now,
\begin{align*}
(\Theta - i\ad \mu) &= (D\mu - i \ad \mu \circ L) L^{-1} = (d\mu + [\mu, \theta - iL]) L^{-1} \\ &=  \left( d\mu + \left[\mu, \frac{e^{-i \ad Y} -1}{\ad Y}\right] \right) \circ L^{-1}.
\end{align*}
Here, $\left[\mu, \frac{e^{-i \ad Y} -1}{\ad Y}\right](X) := \left[\mu, \frac{e^{-i \ad Y} -1}{\ad Y} X\right]$, $(X \in \mathfrak{g}$). Since $\mu$ maps $\mathfrak{t}$ into itself, $d\mu$ maps $\mathfrak{t}_\mathbb{C}$ into $\mathfrak{t}_\mathbb{C}$ and $\left[\mu, \frac{e^{-i \ad Y} -1}{\ad Y}\right]$ acts trivially on $\mathfrak{t}_\mathbb{C}$. As $L^{-1}$ is the identity map on $\mathfrak{t}_\mathbb{C}$, the map $\Theta - i\ad \mu$ maps $\mathfrak{t}_\mathbb{C}$ into itself and this map is given by $d\mu|_{\mathfrak{t}_\mathbb{C}}$.

Now, $\mu =\nabla f= \sum_i \frac{\partial f}{\partial x_i} e_i$, seen as a $\mathfrak{g}$-valued function on $\mathfrak{g}$, where $(x_i)_i$ are the coordinates of $\mathfrak{g}$ with respect to some basis $\{e_i\}$ of $\mathfrak{g}$. Hence, $d\mu(\frac{\partial}{\partial x_j}) =  \frac{\partial^2 f}{\partial x_i \partial{x_j}}e_i$. Consequently, the eigenvalues of $d\mu|_{\mathfrak{t}}$ are given by the eigenvalues of the Hessian matrix of $\widetilde{f}$ on $\mathfrak{t}$.
On $X_{i\alpha}  \in \mathfrak{g}_\mathfrak{i\alpha}$, the map $\Theta - i\ad \mu$ acts as
\begin{align*}
(\Theta - i\ad \mu)(X_{i\alpha}) &= \left(d\mu + \left[\mu, \frac{e^{-i \ad Y} -1}{\ad Y}\right] \right) \frac{\ad Y}{\sin \ad Y} X_{i\alpha}\\ &= \left(d\mu + \left[\mu, \frac{e^{-i \ad Y} -1}{\ad Y}\right] \right) \frac{\alpha(Y)}{\sinh \alpha(Y)} X_{i\alpha}.
\end{align*}
Now, by $\Ad G$-invariance of $\mu$,
\begin{align*}
 (d\mu)_Y([X,Y]) = \frac{d}{dt}|_{t=0} \mu\left( e^{tX}Ye^{-tX} \right) = \frac{d}{dt}|_{t=0} \Ad_{e^{tX}} \mu(Y ) = [X,\mu(Y)],
\end{align*}
for all $Y\in \mathfrak{t}$ and $X \in \mathfrak{g}$. Therefore, $$\alpha(Y) d\mu ( X_{i\alpha}) = \alpha(\mu(Y)) X_{i \alpha},$$ and so
\begin{align*}
\left( d\mu  +  \left[\mu, \frac{e^{-i \ad Y} -1}{\ad Y}\right] \right)  \frac{\alpha(Y)}{\sinh \alpha(Y)} X_{i\alpha} = \frac{\alpha(\mu(Y))}{\alpha(Y)} \cdot \alpha(Y) \left( \frac{\cosh\alpha(Y)}{\sinh\alpha(Y)} + 1    \right)   X_{i \alpha}.
\end{align*}
Thus  the eigenvalues of $\Theta - i \ad \mu$ are given by the eigenvalues of the Hessian of $\widetilde{f}: \mathfrak{t} \rightarrow \mathfrak{t}$ as well as by (\ref{almuY}).
\end{proof}
\begin{proof}[Proof of  Proposition \ref{prp:con_psh}]
Let $\widetilde{f}$ be a $W(G,\textbf{T})$-invariant convex function on $\mathfrak{t}$. Then $\alpha(\nabla \widetilde{f}) / \alpha(Y) \geq 0$ for all $Y \in \mathfrak{t}$.\footnote{To see this, let $w_\alpha \in W(G,\textbf{T})$ be the reflection associated to the root $\alpha$. Pick a basis $\{e_1 \dots, e_k\}$ of $\mathfrak{t}$ such that $e_2, \dots e_k$ are in $\ker \alpha$ and $w_\alpha(e_1) = -e_1$. Write $(y_i)$ for the coordinates of $Y \in \mathfrak{t}$ with respect to this basis. Then $\widetilde{f}(y_1, \dots,  y_n) = \widetilde{f}(-y_1, \dots , y_n)$  by Weyl-invariance of $\widetilde{f}$. Also taking into account the convexity of $\widetilde{f}$,  we see that $\alpha( \nabla f) / \alpha(Y)= \frac{1}{y_1} \frac{\partial \widetilde{f}}{\partial y_1}$ is non-negative.} Because $x (\frac{\cosh x}{\sinh x} +1) \geq 0$ for all $x \in \mathbb{R}$, the expression $\alpha(\mu(Y)) \left( \frac{\cosh \alpha(Y)}{\sinh \alpha(Y)} + 1  \right)$ is therefore non-negative for each $\alpha \in R$. Moreover, convexity of $\widetilde{f}$ also implies that the eigenvalues of its Hessian matrix are non-negative. Therefore, Lemma \ref{lem:psh_theta} and  Lemma \ref{lem:theta_eigenvalues} imply that the $G \times G$-invariant extension of $\widetilde{f}$ to a function $f$ on $G^\mathbb{C}$ is plurisubharmonic. The converse follows from the fact that the eigenvalues of the Hessian of $\widetilde{f}$ are also eigenvalues of $\Theta - i \ad \mu$.
\end{proof}

The following theorem is now obtained as a corollary of  Proposition \ref{prp:con_psh}.
\begin{thm}
\label{thm:can_semineg}
 The canonical line bundle on $T^*G$ is semi-negative. 
\end{thm}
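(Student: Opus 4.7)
The plan is to combine the curvature identity derived just above the theorem with Proposition \ref{prp:con_psh}. The formula $R^K = -2\,\partial\overline{\partial}\log\eta$ shows that semi-negativity of $K$ is equivalent to plurisubharmonicity of $\log\eta$ as a smooth $G\times G$-invariant function on $T^*G\cong G^{\mathbb{C}}$. By Proposition \ref{prp:con_psh}, this is in turn equivalent to the restriction $\log\eta|_{\mathfrak{t}}$ being a $W(G,\mathbf{T})$-invariant convex function on $\mathfrak{t}$. This restriction is
\[
 \log\eta(Y)=\sum_{\alpha\in R^+}\log\frac{\sinh\alpha(Y)}{\alpha(Y)},
\]
which is smooth on all of $\mathfrak{t}$ because $t\mapsto\sinh t/t$ extends smoothly and strictly positively through $t=0$.

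For Weyl-invariance, I would use that $\phi(t):=\log(\sinh t/t)$ is an even function of $t$: since $W(G,\mathbf{T})$ permutes the roots of $(G,\mathbf{T})$ up to sign, the sum $\sum_{\alpha\in R^+}\phi(\alpha(\cdot))$ is unchanged. For convexity, since each $\alpha:\mathfrak{t}\to\mathbb{R}$ is linear and sums of convex functions are convex, it suffices to verify that the one-variable function $\phi$ is convex on $\mathbb{R}$. A direct computation gives
\[
 \phi''(t) = \frac{1}{t^2} - \frac{1}{\sinh^2 t},
\]
which is non-negative on $\mathbb{R}\setminus\{0\}$ because $|\sinh t|\geq|t|$ for all real $t$; at the origin one reads off $\phi''(0)=1/3>0$ from the Taylor expansion $\phi(t)=t^2/6+O(t^4)$.

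All of the conceptual work has already been done in Proposition \ref{prp:con_psh} and the supporting Lemmas \ref{lem:psh_theta}--\ref{lem:theta_eigenvalues}; given those, the theorem collapses to the elementary one-variable convexity of $\log(\sinh t/t)$ above, so I foresee no real obstacle. A direct alternative would be to substitute $\widetilde{K}=\log\eta|_{\mathfrak{t}}$ into Lemma \ref{lem:theta_eigenvalues} and verify non-negativity of each listed eigenvalue by hand (using the scalar inequality $x(\coth x+1)\geq 0$ on $\mathbb{R}$), but that essentially duplicates the content of Proposition \ref{prp:con_psh} in this special case.
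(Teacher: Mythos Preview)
Your proposal is correct and follows essentially the same strategy as the paper: reduce via Proposition~\ref{prp:con_psh} to convexity of $\log\eta|_{\mathfrak{t}}$, then verify that the one-variable function $\phi(t)=\log(\sinh t/t)$ is convex. Your execution is in fact slightly cleaner than the paper's: you compose $\phi$ with the \emph{linear} functional $\alpha$ (so convexity of $\phi$ alone suffices), whereas the paper writes $\phi(\alpha(Y))=\phi(|\alpha(Y)|)$ and therefore has to check in addition that $\phi$ is non-decreasing on $\mathbb{R}^{\geq 0}$; and your direct formula $\phi''(t)=1/t^2-1/\sinh^2 t$ is more transparent than the paper's computation of $\widetilde{\eta}\widetilde{\eta}''-(\widetilde{\eta}')^2$.
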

\begin{proof}
  By  Proposition \ref{prp:con_psh}, it is sufficient to prove that the function
\begin{align*}
 Y \mapsto \log(\eta(Y)) = \sum_{\alpha} \log\left( \frac{\sinh \alpha(Y)}{\alpha(Y)} \right)   
\end{align*}
is convex on $\mathfrak{t}$. Since a finite sum of convex functions is convex, we need that 
\begin{align*}
 Y \mapsto \log\left( \frac{\sinh \alpha(Y)}{\alpha(Y)} \right) = \log\left( \frac{\sinh |\alpha(Y)|}{|\alpha(Y)|}\right)
\end{align*}
is convex on $\mathfrak{t}$  for each real root $\alpha$.  
Now, for each $\alpha \in R$, the map $Y \mapsto |\alpha(Y)|$ is a semi-norm on $\mathfrak{t}$, so in particular,  each of these maps is convex.  By applying the chain rule it is not difficult to show that, if $\phi: \mathbb{R} \rightarrow \mathbb{R}$ is convex and non-decreasing on the range of another convex map $f: \Omega \rightarrow \mathbb{R}$, where $\Omega$ is a convex set, then  the composition $\phi \circ f$ is convex,  too. So, we need to prove that the function $\log \circ \widetilde{\eta}: \mathbb{R} \rightarrow \mathbb{R}$ is convex and non-decreasing on $\mathbb{R}^{\geq 0}$, where $\widetilde{\eta}: \mathbb{R} \rightarrow \mathbb{R}$ is given by
\begin{align*}
t \mapsto \frac{\sinh t}{t}.
\end{align*}
By looking at the power series for $\widetilde{\eta}$ one sees that $\log \circ \widetilde{\eta}$ is non-decreasing on $\mathbb{R}^{\geq 0}$.  The function $\log \circ \widetilde{\eta}$ is convex if and only if
\begin{align*}
 \widetilde{\eta}\widetilde{\eta}''(t) -\widetilde{\eta}'(t) ^2 \geq 0 \quad \text{for all }t \in \mathbb{R}.
\end{align*}
The first and second derivatives of $\widetilde{\eta}$ are equal to
\begin{align*}
\widetilde{\eta} '(t) = \frac{\cosh t}{t} - \frac{\sinh t}{t^2}, \quad \widetilde{\eta}''(t) = \frac{\sinh t}{t} - \frac{2 \cosh t}{t^2} + \frac{2\sinh t}{t^3}. 
\end{align*}
The expression $\widetilde{\eta}(t) \widetilde{\eta}''(t) - \widetilde{\eta}'(t) ^2 $ is then equal to
\begin{align*}
\widetilde{\eta}(t) \widetilde{\eta}''(t) - \widetilde{\eta}'(t) ^2  &= \frac{\sinh^2 t}{t^2} - \frac{2 \cosh t \sinh t}{t^3} + \frac{2 \sinh^2 t}{t^4}\\ &- \frac{\cosh^2 t}{t^2} + \frac{2\cosh t \sinh t}{t^3} - \frac{\sinh^2 t}{t^4} \\
&= \frac{\sinh^2 t - \cosh^2 t}{t^2} + \frac{\sinh^2 t}{t^4} = \frac{\sinh^2 t - t^2}{t^4} > 0,
\end{align*}
for all $t \in \mathbb{R} \setminus \{0 \}$. Since $\widetilde{\eta}$ is smooth, $\widetilde{\eta}(t) \widetilde{\eta}''(t) - \widetilde{\eta}'(t) ^2 \geq 0$ for all $t \in \mathbb{R}$. Hence $\log \widetilde{\eta}$ is convex and the theorem is proved.
\end{proof}

\subsection{Dolbeault--Dirac quantization}
\label{sct:DD_spin_quantization}
In this section we prove that the Dolbeault--Dirac quantization  of $T^*G$ is $G \times G$-equivariantly isomorphic to the Hilbert space \eqref{eq:Hallq}, which was previously obtained by Hall \cite{Hal02} via geometric quantization. To achieve this we use Theorem \ref{thm:kvt} together with the semi-negativity of the canonical line bundle on $T^*G$ (i.e., Theorem \ref{thm:can_semineg}) to show that the kernel of  the Dolbeault--Dirac operator only consists of holomorphic \emph{functions} on $T^*G$ that are square-integrable with respect to some $G\times G$-invariant measure.

Define a smooth function $\phi: G \times \mathfrak{g} \rightarrow \mathbb{R}$ by 
\begin{align*}
 \phi(x,Y) = \pi |Y|^2.
\end{align*}
Recall that the pre-quantum line bundle $L$ on $T^*G$ is simply the trivial hermitian line bundle endowed with the connection $\nabla^L = d + 2\pi i \theta$, where $\theta$ denotes the canonical symplectic potential.  We take the $G \times G$-action on the fiber direction of $L$ to be trivial. Since $\theta$ is a $G \times G$-invariant $1$-form, the connection $\nabla^L$ is $G\times G$-invariant, too.

\begin{prp}
\label{prp:holoL}
Let $\alpha \in \Gamma^\infty(T^*G,\Lambda^\bullet(T^{*(0,1)}M) \otimes L)$ be a smooth section. Then $\ol{\partial}^L \alpha =0$ if and only if $\ol{\partial} ( e^\phi \alpha) = 0$, with $\phi$ as above.
\end{prp}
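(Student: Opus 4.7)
The plan is to trivialize the line bundle, unwind the definition of $\ol{\partial}^L$, apply the Leibniz rule to $\ol{\partial}(e^\phi \alpha)$, and reduce the equivalence to a single pointwise identity relating $\ol{\partial}\phi$ to $\theta^{(0,1)}$, which will be verified using the K\"ahler potential property of $|Y|^2$.

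First I would observe that since $L = T^*G \times \mathbb{C}$ is the trivial hermitian bundle, the constant section $\mathbf{1}$ is a global smooth frame, and by definition of the pre-quantum connection
\begin{align*}
(\nabla^L)^{(0,1)} \mathbf{1} = 2\pi i\,\theta^{(0,1)} \otimes \mathbf{1}.
\end{align*}
Identifying $\Gamma^\infty(M,\Lambda^{(0,\bullet)}T^*M \otimes L)$ with $\Omega^{(0,\bullet)}(T^*G)$ via this frame, the defining formula \eqref{twist} for $\ol{\partial}^L$ becomes
\begin{align*}
\ol{\partial}^L \alpha \;=\; \ol{\partial}\alpha + 2\pi i\, \theta^{(0,1)} \wedge \alpha .
\end{align*}

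Next I would apply the ordinary Leibniz rule for $\ol{\partial}$ to the product of the function $e^\phi$ and the form $\alpha$ to obtain
\begin{align*}
 \ol{\partial}(e^\phi \alpha) \;=\; e^\phi\bigl( \ol{\partial}\alpha + \ol{\partial}\phi \wedge \alpha \bigr).
\end{align*}
Since $e^\phi$ is nowhere zero, $\ol{\partial}(e^\phi\alpha)=0$ if and only if $\ol{\partial}\alpha + \ol{\partial}\phi \wedge \alpha = 0$, while $\ol{\partial}^L\alpha = 0$ if and only if $\ol{\partial}\alpha + 2\pi i\,\theta^{(0,1)}\wedge\alpha = 0$. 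The equivalence will therefore follow once I establish the pointwise identity
\begin{align*}
\ol{\partial}\phi \;=\; 2\pi i\,\theta^{(0,1)}.
\end{align*}

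For this remaining identity, I would use the standard K\"ahler potential relation $\omega = -i\,\partial\ol{\partial}|Y|^2$ and the convention $(J\alpha)(X) = -\alpha(JX)$ of the paper, under which $\ol{\partial}f = \tfrac{1}{2}(df - iJ\,df)$ for a real function $f$. The key ingredient is the computation $Jd|Y|^2 = -2\theta$, which is precisely what equation \eqref{df33} in the proof of Theorem \ref{thm:TGgeodesicallycomplete} established (since $d|Y|^2$ corresponds to $(0,2Y)$ and the right-hand side of \eqref{df33} is $(-2Y,0)$, matching $-2\theta$ via $\theta_{(x,Y)}(X_1,X_2)=\langle Y,X_1\rangle$). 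Substituting gives
\begin{align*}
\ol{\partial}|Y|^2 \;=\; \tfrac{1}{2}\bigl( d|Y|^2 - i\,Jd|Y|^2 \bigr) \;=\; \tfrac{1}{2}\,d|Y|^2 + i\,\theta ,
\end{align*}
and extracting the $(0,1)$-component (equivalently, using that $\theta^{(1,0)} = \overline{\theta^{(0,1)}}$ and $\partial|Y|^2 = \overline{\ol{\partial}|Y|^2}$) yields $\ol{\partial}|Y|^2 = 2i\,\theta^{(0,1)}$. Multiplying by $\pi$ gives $\ol{\partial}\phi = 2\pi i\,\theta^{(0,1)}$, completing the proof.

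The only potentially delicate step is the pointwise identity in the third paragraph: one might worry that $\omega = d\theta = -i\partial\ol{\partial}|Y|^2$ only pins down $\theta$ up to a closed $1$-form. The computation $Jd|Y|^2 = -2\theta$ from Theorem \ref{thm:TGgeodesicallycomplete} is precisely what removes this ambiguity, and this is the main (modest) obstacle to keep in mind; everything else is the Leibniz rule and a bookkeeping of the $\ol{\partial}^L$-definition on a trivial bundle.
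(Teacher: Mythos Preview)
Your proof is correct and follows essentially the same approach as the paper: both reduce the claim via the Leibniz rule to the identity $\ol{\partial}\phi = 2\pi i\,\theta^{(0,1)}$, and verify it using $\ol{\partial}f = \tfrac{1}{2}(df - iJdf)$ together with the computation $J(0,2Y)=(-2Y,0)$ (which the paper redoes in coordinates, while you cite \eqref{df33} and extract the $(0,1)$-part by type-decomposition). The two arguments are minor presentational variants of one another.
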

\begin{proof}
 We show that $\ol{\partial} \phi = 2\pi i (\pi^{(0,1)} \theta)$. This implies that
\begin{align*}
 \ol{\partial}(e^\phi \alpha ) =e^\phi  \left ( \ol{\partial}\alpha + (\ol{\partial} \phi) \wedge \alpha \right) = e^\phi\left( \ol{\partial} +  2 \pi i \pi^{(0,1)} \theta \right) \alpha = e^\phi \ol{\partial}^L \alpha,
\end{align*}
so that $\ol{\partial}^L \alpha =0$ if and only if $\ol{\partial} (e^\phi \alpha) = 0$.

With respect to the basis $\{\alpha_k,dy_k\}$ (see the remarks above  Theorem \ref{thm:TGgeodesicallycomplete} for the definition of these $1$-forms) the form $d\phi$ is equal to $2\pi y_kdy^k$, which we simply write as $(0,2\pi Y)$. Since $\ad Y$ acts trivially on $Y$, we compute
\begin{align*}
 \ol{\partial} \phi  = (0,\pi Y) + (i \pi Y,0) = (i\pi Y, \pi Y). 
\end{align*}

On the other hand, by Lemma \ref{lem:formulas}, 
\begin{align*}
\theta_{(g,Y)}(Z_1,Z_2) = \langle Y, Z_1 \rangle_\mathfrak{g} = \sum_{k=1}^n y_k z_k,
\end{align*}
when $Z_1 = \sum_{k=1}^n z_k e_k$. Consequently, $\theta_{(g,Y)} = (Y,0)$ in the basis $\{\alpha_k,dy^k\}$. Again,
\begin{align*}
\pi^{(0,1)} \theta_{(g,Y)}  = \frac{1}{2}\left( \theta_{(g,Y)} - iJ \theta_{(g,Y)} \right) = \frac{1}{2}\left( (Y,0) -  (0,iY) \right) = \frac{1}{2}( Y, -i Y).  
\end{align*}
Multiplying by $2\pi i$ gives $2\pi i \pi^{(0,1)}\theta = (i \pi Y, \pi Y)$, which is equal to $\ol{\partial}\phi$.
\end{proof}

If the bundle $L$ is endowed with the unique holomorphic structure for which the section $e^{-\phi}$ is holomorphic, then $\nabla^L = d +2\pi i \theta$ is the unique Chern connection on $L$. As $(\nabla^L)^2 = 2 \pi i \omega$, the line bundle $L$ is positive.

Since $D^L$ is elliptic, the kernel of $\ol{D}^L=(D^L)^*$ consists of smooth sections. So, to determine the kernel of $(D^L)^*$ it is important to know how $(D^L)^*$ acts on smooth sections. The following lemma shows that for any symmetric differential operator $D$, the action of $D^*$ on smooth sections coincides with the action of $D$ as a differential operator.
\begin{lem}
\label{lem:disdif} 
Let $D$ be a symmetric differential operator on a hermitian vector bundle $E$ over an oriented Riemannian manifold $M$ (with domain $\Gamma^\infty_c(M,E)$). If $s \in \Gamma^\infty(M,E)\cap \mathcal{D}(D^*)$, then 
\begin{align*}
D^*s = Ds.
\end{align*}
\end{lem}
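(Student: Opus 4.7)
The plan is to exploit the symmetry of $D$ as a differential operator acting on a smooth section, combined with the density of compactly supported smooth sections in $L^2(M,E)$.

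First I would fix $s \in \Gamma^\infty(M,E) \cap \mathcal{D}(D^*)$ and let $t \in \Gamma^\infty_c(M,E)$ be arbitrary. By definition of the Hilbert space adjoint,
\begin{equation*}
\langle D^* s, t \rangle = \langle s, Dt \rangle.
\end{equation*}
The key observation is that since $s$ is smooth, the expression $Ds$ makes perfectly good sense as a smooth section (obtained by applying the differential operator $D$ pointwise), even though \emph{a priori} it is not clear that this pointwise $Ds$ coincides with the Hilbert space image $D^* s$. Now, because $D$ is a symmetric differential operator on $\Gamma^\infty_c(M,E)$, there is a local Green-type identity (integration by parts with vanishing boundary terms, available pointwise from the formal adjoint computation) that yields
\begin{equation*}
\langle s, Dt \rangle = \langle Ds, t \rangle,
\end{equation*}
where on the right-hand side $Ds$ denotes the pointwise action of the differential operator $D$ on $s$. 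The boundary terms disappear because $t$ has compact support, and all integrals converge since $t$ is smooth compactly supported while $s$ and $Ds$ are locally $L^2$.

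Combining the two displays gives $\langle D^* s - Ds, t \rangle = 0$ for every $t \in \Gamma^\infty_c(M,E)$. Since $\Gamma^\infty_c(M,E)$ is dense in $L^2(M,E)$, we conclude that $D^* s = Ds$ as elements of $L^2(M,E)$, which is the desired equality.

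The only subtle point is justifying the local integration by parts for smooth, not necessarily compactly supported, $s$. But this is routine: symmetry of $D$ on $\Gamma^\infty_c(M,E)$ is equivalent to $D$ being formally self-adjoint (i.e., equal to its formal adjoint as differential operators on $\Gamma^\infty(M,E)$), and the latter identity is a pointwise statement about differential operators that may be paired against any compactly supported smooth test section. I do not expect any real obstacle here; the lemma is essentially the familiar fact that the Hilbert space adjoint of a formally self-adjoint operator, when restricted to smooth sections in its domain, agrees with the differential operator itself.
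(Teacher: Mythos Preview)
Your proof is correct and follows the same overall strategy as the paper: establish $\langle s, Dt \rangle = \langle Ds, t \rangle$ for every $t \in \Gamma^\infty_c(M,E)$, then conclude by density. The only difference is in how the integration-by-parts step is justified. The paper avoids invoking formal self-adjointness altogether: it picks a cutoff $\psi \in C^\infty_c(M)$ with $\psi \equiv 1$ on a neighbourhood of $\operatorname{supp} t$, writes $\langle s, Dt \rangle = \langle \psi s, Dt \rangle$ by locality, applies the given symmetry hypothesis directly to the compactly supported pair $(\psi s, t)$, and then uses locality again to replace $D(\psi s)$ by $Ds$ on $\operatorname{supp} t$. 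Your route via the equivalence ``symmetric on $\Gamma^\infty_c$ $\Leftrightarrow$ formally self-adjoint'' is equally valid and perhaps more conceptual, but relies on that (standard) equivalence as an extra input; the paper's cutoff trick is slightly more self-contained, using only the hypothesis as stated.
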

\begin{proof}
 Suppose that $s \in \Gamma^\infty(M,E)\cap \mathcal{D}(D^*)$. Let $t \in \Gamma^\infty_c(M,E)$ be given and let $\psi \in C^\infty_c(M)$ be such that $0 \leq \psi \leq 1$ and $\psi \equiv 1$ in a neighbourhood $U$ of $\text{supp }t$. Then $\langle s, Dt \rangle = \langle \psi s, Dt \rangle$, since $D$ is a local operator. The section $\psi s$ is compactly supported and smooth, so that by symmetry of $D$:
\begin{align*}
\langle s, Dt \rangle &= \langle \psi s, Dt \rangle = \langle D(\psi s), t \rangle = \int_M \langle D(\psi s)(x), t(x) \rangle_x dx,
\end{align*}
where $D$ acts on $\psi s$ as a differential operator. Now, $\langle D(\psi s)(x), t(x) \rangle_x$ is zero outside $\text{supp }t$ and the differential operator $D$ commutes with $\psi$ on the neighbourhood $U$,  since $\psi \equiv1$ there. Consequently, 
\begin{align*}
\langle D(\psi s)(x), t(x) \rangle_x = \langle \psi(x) Ds(x),t(x) \rangle_x,
\end{align*}
for all $x \in M$, and so
\begin{align*}
\langle s, Dt \rangle &=\int_M \langle D(\psi s)(x), t(x) \rangle_x dx =  \int_M \langle \psi(x) (Ds)(x), t(x) \rangle_x dx =  \langle \psi Ds,t \rangle \\ &= \langle Ds,t \rangle,
\end{align*}
where $D$ acts as a differential operator on $s$. This  holds for all $t \in \Gamma^\infty_c(M,E)$.
\end{proof}

We are now ready to determine the Dolbeault--Dirac quantization of $T^*G$. Let $\mathcal{H}L^2(T^*G, e^{-2\pi|Y|^2} \varepsilon)$  be the Hilbert space of all holomorphic functions on $T^*G$ that are square-integrable with respect to the $G\times G$-invariant measure $e^{-2\pi|Y|^2} \varepsilon$.
\begin{thm}
\label{thm:quantDD_cot}
The Dolbeault--Dirac quantization $\ker \ol{D}^L $ of $T^*G$ (with its standard K\"ahler structure) 
 is $G \times G$-equivariantly isomorphic to $\mathcal{H}L^2(T^*G, e^{-2\pi|Y|^2} \varepsilon)$.
\end{thm}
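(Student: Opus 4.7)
The plan is to identify the kernel of $\ol{D}^L$ on $T^*G$ with a space of holomorphic functions satisfying a Gaussian integrability condition, using the three main ingredients assembled so far: the Kodaira vanishing theorem (Theorem \ref{thm:kvt}), semi-negativity of the canonical bundle (Theorem \ref{thm:can_semineg}), and the gauge transformation of Proposition \ref{prp:holoL}.

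First I would check that $K^* \otimes L$ is a positive line bundle. The hypothesis $(\nabla^L)^2 = 2\pi i\omega$ means that $L$ is positive (its curvature form is a positive multiple of the K\"ahler form), while Theorem \ref{thm:can_semineg} says that $K$ is semi-negative, equivalently that $K^*$ is semi-positive. Adding curvatures, $K^* \otimes L$ has positive curvature. Combined with geodesic completeness of $T^*G$ (Theorem \ref{thm:TGgeodesicallycomplete}), Theorem \ref{thm:kvt} then yields
\begin{align*}
 \ker \ol{D}^L \subset \Gamma^\infty(T^*G, L),
\end{align*}
so every element of the kernel is a smooth section of $L$ concentrated in bidegree $(0,0)$.

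Next I would argue that for such a smooth section $s$, being in $\ker \ol{D}^L$ is equivalent to $\ol{\partial}^L s = 0$ together with $s \in L^2$. Since $D^L$ is essentially self-adjoint on the geodesically complete manifold $T^*G$, we have $\ol{D}^L = (D^L)^*$, so Lemma \ref{lem:disdif} gives $\ol{D}^L s = D^L s$ as a differential operator. On a $(0,0)$-form, $(\ol{\partial}^L)^* s = 0$ for degree reasons, so $D^L s = \sqrt{2}\,\ol{\partial}^L s$. Thus $\ol{D}^L s = 0$ if and only if $\ol{\partial}^L s = 0$. Conversely, any smooth $L^2$ section $s$ with $\ol{\partial}^L s = 0$ satisfies $D^L s = 0$ as a differential operator and is therefore in $\mathcal{D}((D^L)^*) = \mathcal{D}(\ol{D}^L)$ with $\ol{D}^L s = 0$. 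Hence $\ker \ol{D}^L$ coincides with the space of square-integrable smooth sections of $L$ killed by $\ol{\partial}^L$.

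To finish, I would apply Proposition \ref{prp:holoL}: writing $\phi(x,Y) = \pi|Y|^2$, the map
\begin{align*}
 U: s \mapsto e^{\phi} s
\end{align*}
sends $\ol{\partial}^L$-closed sections to $\ol{\partial}$-closed functions, i.e., holomorphic functions on $T^*G$. Under the pointwise fiber metric on $L$, one has $|s|^2 = e^{-2\phi}|e^\phi s|^2 = e^{-2\pi|Y|^2}|Us|^2$, so
\begin{align*}
 \|s\|_{L^2(T^*G, L)}^2 = \int_{T^*G} |Us|^2\, e^{-2\pi|Y|^2}\,\varepsilon,
\end{align*}
identifying the $L^2$ norm on sections of $L$ with the weighted $L^2$ norm on functions. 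Therefore $U$ is a unitary bijection from $\ker \ol{D}^L$ onto $\mathcal{H}L^2(T^*G, e^{-2\pi|Y|^2}\varepsilon)$. Finally, since the $G\times G$-action on $L$ is trivial on fibers and the weight $e^\phi$ is $G\times G$-invariant (as $\phi$ depends only on $|Y|$ and the inner product is $\Ad G$-invariant), $U$ intertwines the natural $G\times G$-actions, giving equivariance.

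The main obstacle is the link between the analytic kernel of $\ol{D}^L$ and pointwise holomorphicity: one must avoid conflating $\ker \ol{D}^L$ with $\ker D^L$ and carefully invoke essential self-adjointness plus Lemma \ref{lem:disdif} so that being annihilated by the closure really does force smooth pointwise annihilation by $\ol{\partial}^L$. Once that is in place the rest is a gauge transformation and a bookkeeping computation of norms.
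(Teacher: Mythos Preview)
Your proposal is correct and follows essentially the same route as the paper: use semi-negativity of $K$ plus positivity of $L$ to apply the Kodaira vanishing theorem and reduce to holomorphic $L^2$ sections of $L$, then invoke essential self-adjointness and Lemma~\ref{lem:disdif} to identify the analytic kernel with the differential one, and finally use the gauge transformation $s \mapsto e^{\phi}s$ of Proposition~\ref{prp:holoL} to pass to holomorphic functions with the Gaussian weight. Your treatment is in fact slightly more explicit than the paper's in justifying $D^L s = \sqrt{2}\,\ol{\partial}^L s$ on $(0,0)$-forms and in checking the norm identity under $U$.
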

\begin{proof}
By Theorem \ref{thm:can_semineg} the canonical line bundle $K$ is semi-negative, so that $K^*$ is semi-positive. Since the total curvature of a tensor product of two line bundles is the sum of the curvatures of the individual line bundles, we see that $K^* \otimes L$ is positive. According to Theorem \ref{thm:kvt} the kernel of $\ol{D}^L$ is contained in $\Gamma^\infty(T^*G,L)$. Since $T^*G$ is geodesically complete, we have $\ol{D}^L=(D^L)^*$, hence by Lemma \ref{lem:disdif},
\begin{align*}
 \ker \ol{D}^L = \ker (D^L)^* = \{ s \in \Gamma^\infty(T^*G,L) \cap L^2(T^*G,L)\mid  D^L s =0 \}.
\end{align*}
Now, $D^L s =0$ for a smooth section $s$ of $L$, if and only if $\ol{\partial}^L s = 0$ if and only if $s$ is a holomorphic section of $L$. By Proposition \ref{prp:holoL} the holomorphic sections of $L$ are of the form $s = fe^{-\phi}$, with $\phi = \pi|Y|^2$ and $f$ a holomorphic function on $T^*G$. Thus
\begin{align*}
 \ker \ol{D}^L &= \left\{ fe^{-\phi} \biggm| f \text{ holomorphic and }\int_M |f|^2 e^{-2 \pi|Y|^2} \varepsilon < \infty \right\} \\ &\cong \mathcal{H}L^2(T^*G,e^{-2\pi|Y|^2} \varepsilon).
\end{align*}
Since $e^{-\phi}$ is $G \times G$-invariant, the last isomorphism intertwines the $G \times G$-actions on both spaces.
\end{proof}

\noindent Combining Theorem \ref{thm:quantDD_cot}  with   \cite[Theorem 10]{Hal94} and
Proposition \ref{prp:Hall_iso}, we obtain:
\begin{thm}
\label{thm:quantspin}
 Let $G$ be a compact connected Lie group and endow $T^*G$ with its standard K\"ahler structure. Let 
 $(T^*G \times \mathbb{C}, d + 2\pi i \theta)$  be the asociated 
 $G \times G$-equivariant pre-quantization line bundle.
  Then the Dolbeault--Dirac quantization  $ \ker \ol{D}^L$ of $T^*G$ is 
   $G \times G$-equivariantly 
  equal to Hall's Hilbert space $ \mathcal{H}L^2(T^*G, e^{-2\pi|Y|^2} \varepsilon)$,
  equipped with the natural $G \times G$-action. 
Consequently  (\textit{cf.} \cite[Theorem 2.6]{Hal02}), our Dolbeault--Dirac quantization of $T^*G$ is $G \times G$-equivariantly unitarily isomorphic to $L^2(G)$.
\end{thm}
\section{Quantization of the symplectic stratified quotient}
\label{sct:quantization_cot_max_torus}
Let $G$ be a compact connected Lie group, and
 consider the action of $G$ on $T^*G$ induced by the action of $G$ on itself by conjugation. 
Since this induced action  is not free (like the one on $G$), the ensuing Marsden--Weinstein quotient  $T^*G /\!/ \Ad G$ is not a manifold, but merely a symplectic stratified space \cite{SL91}. 
 In this section we first explicitly determine the stratified structure of  $T^*G /\!/ \Ad G$, and then define the quantization of the Marsden--Weinstein quotient to be the Dolbeault--Dirac quantization of its principal stratum.
 On this definition, quantization turns out to commute with (singular) reduction, which provides an \emph{a posteriori} justification of our definition. 
For a  different, \emph{a priori} justification of this definition, we note that because our singular quotient  $T^*G /\!/ \Ad G$
is stratified by symplectic spaces, its singular (i.e., non-principal) strata have codimension at least $2$. 
This makes the following two results (though restricted to flat space) rather suggestive.
\begin{prp}
\label{prp:sobolevgeneral}
 The space $C_c^\infty(\mathbb{R}^n \setminus \mathbb{R}^{n-k})$ is dense in $H^s(\mathbb{R}^n)$ if and only if $k>0$ and  $s \leq \frac{k}{2}$ (where, for $n=k$, $\mathbb{R}^0 =\{0\} \subset \mathbb{R}^n$).
\end{prp}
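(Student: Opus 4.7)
The plan is to establish the two directions separately: necessity via the Sobolev trace theorem, and sufficiency via a smooth cutoff construction that is routine for $s < k/2$ but requires a logarithmic cutoff in the critical case $s = k/2$.

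For necessity, the case $k = 0$ is immediate, since then $\mathbb{R}^n \setminus \mathbb{R}^n = \emptyset$ and $C_c^\infty(\emptyset) = \{0\}$ while $H^s(\mathbb{R}^n) \neq 0$. For $k > 0$ and $s > k/2$, the trace theorem furnishes a bounded surjection $\gamma \colon H^s(\mathbb{R}^n) \to H^{s - k/2}(\mathbb{R}^{n-k})$ extending the ordinary restriction on $C_c^\infty(\mathbb{R}^n)$. Any $v \in C_c^\infty(\mathbb{R}^n \setminus \mathbb{R}^{n-k})$ satisfies $\gamma v = 0$, so the $H^s$-closure of this subspace lies inside $\ker \gamma$, a proper closed subspace of $H^s(\mathbb{R}^n)$.

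For sufficiency, decompose $x = (x', x'') \in \mathbb{R}^k \times \mathbb{R}^{n-k}$ with $\mathbb{R}^{n-k} = \{x' = 0\}$. Given $u \in H^s(\mathbb{R}^n)$, first approximate in $H^s$-norm by $u_\delta \in C_c^\infty(\mathbb{R}^n)$, using the standard density of test functions. It then suffices to construct smooth cutoffs $\chi_\epsilon \in C^\infty(\mathbb{R}^n; [0,1])$ vanishing in a neighborhood of $\mathbb{R}^{n-k}$, equal to $1$ outside a bounded tubular neighborhood, and satisfying $\|(1 - \chi_\epsilon) u_\delta\|_{H^s} \to 0$; a diagonal extraction then produces the desired sequence in $C_c^\infty(\mathbb{R}^n \setminus \mathbb{R}^{n-k})$. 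We take $\chi_\epsilon(x) = \psi_\epsilon(|x'|)$ for a smooth radial cutoff in the $x'$-variable. For $s < k/2$, the linear rescaling $\psi_\epsilon(r) = \phi(r/\epsilon)$, with fixed $\phi \in C^\infty([0,\infty))$ equal to $0$ on $[0,1]$ and $1$ on $[2,\infty)$, suffices: dominated convergence on the shrinking $\epsilon$-tube handles the $L^2$-part, while a Hardy-type inequality $\int |x'|^{-2s} |w|^2 \lesssim \|w\|_{H^s}^2$ (valid precisely for $s < k/2$) controls the contribution of the fractional derivative of the cutoff.

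The main obstacle is the critical case $s = k/2$, where no linear rescaling can make $\|1 - \chi_\epsilon\|_{H^{k/2}}$ small. Here one must invoke a logarithmic cutoff such as $\psi_\epsilon(r) = 0$ for $r \leq \epsilon^2$, $\psi_\epsilon(r) = \log(r/\epsilon^2)/\log(1/\epsilon)$ for $\epsilon^2 \leq r \leq \epsilon$, and $\psi_\epsilon(r) = 1$ for $r \geq \epsilon$, suitably smoothed. A direct computation of $\|\nabla^{k/2} \psi_\epsilon\|_{L^2(\mathbb{R}^k)}$ then gives $\|1 - \chi_\epsilon\|_{H^{k/2}(\mathbb{R}^n)} \to 0$ on any bounded tubular neighborhood, which is the statement that the $H^{k/2}$-capacity of the affine subspace $\mathbb{R}^{n-k}$ in $\mathbb{R}^n$ vanishes. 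Multiplying by the bounded factor $u_\delta$ and applying the Leibniz rule (on the fractional Sobolev side, interpreted via Bessel potentials) delivers $\|(1 - \chi_\epsilon) u_\delta\|_{H^{k/2}} \to 0$, as required.
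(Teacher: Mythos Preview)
The paper does not actually prove this proposition: it simply refers the reader to \cite[Theorem 3.1.3]{Boe14} (the first author's thesis). Your outline is the standard self-contained argument---necessity via the trace theorem, sufficiency via shrinking radial cutoffs with a logarithmic profile in the borderline case---and it is essentially correct. In particular, the logarithmic cutoff is exactly the classical device showing that an affine subspace of codimension $k$ has vanishing $H^{k/2}$-capacity.

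One remark on the final step. The sentence ``multiplying by the bounded factor $u_\delta$ and applying the Leibniz rule'' is unproblematic when $s=k/2$ is an integer: distribute the derivatives and estimate each term using $\|\nabla^j(1-\chi_\epsilon)\|_{L^2(\mathbb{R}^k)}\to 0$ for $0\le j\le k/2$, which follows from the explicit form of the logarithmic cutoff. For odd $k$, however, the crude fractional Leibniz (Kato--Ponce) inequality is not sharp enough on its own, since the term $\|1-\chi_\epsilon\|_{L^\infty}\|u_\delta\|_{H^s}$ does not tend to zero; one then needs a slightly finer product estimate or a direct computation with the Gagliardo seminorm. This is standard, and in any case irrelevant for the paper's purposes: the proposition is only invoked through Corollary~\ref{cor:denseDD}, where $s=1$ and $k\ge 2$ is even, so the ordinary Leibniz rule already suffices.
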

Cf.\ \cite[Theorem 3.1.3]{Boe14} for the proof.
Now endow $\mathbb{R}^{2n}$ with its standard K\"ahler structure. The bundle $\Lambda^{(0,\bullet)}T^*\mathbb{R}^{2n}$ is identified with the direct sum of $2^n$-copies of $L^2(\mathbb{R}^{2n})$ by considering the unitary frame 
\begin{align*}
\left\{ \tfrac{1}{\sqrt{2^k}} d\ol{z}^{i_1} \wedge \cdots \wedge d\ol{z}^{i_k}   \biggm|  1 \leq i_1 < i_2 < \cdots  <i_k, 0 \leq k \leq n \right\}.
\end{align*}
The Dolbeault--Dirac operator has constant coefficients  with respect to this unitary frame,  so by \cite[Theorem 6.24]{Gru08}, the graph norm of the Dolbeault--Dirac operator on $\Gamma^\infty_c\left( \mathbb{R}^{2n},\Lambda^{(0,\bullet)}T^*\mathbb{R}^{2n}\right)$ is equivalent to the $H^1$-norm. Combining this result with Proposition \ref{prp:sobolevgeneral} we obtain the following:
\begin{cor}
\label{cor:denseDD}
Let $2 \leq k \leq 2n$ be a natural number. The subspace 
\begin{align*}
\Gamma^\infty_c(\mathbb{R}^{2n} \setminus \mathbb{R}^{2n-k},\Lambda^{(0,\bullet)}T^*\mathbb{R}^{2n})
\end{align*}
 is dense in $\Gamma^\infty_c(\mathbb{R}^{2n}, \Lambda^{(0,\bullet)}T^*\mathbb{R}^{2n})$ with respect to the graph norm of  the Dolbeault--Dirac operator on $\mathbb{R}^{2n}$.
\end{cor}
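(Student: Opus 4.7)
The plan is to translate the statement into a purely scalar density claim in a Sobolev space and then invoke \cref{prp:sobolevgeneral}. First, fix the unitary trivialisation $\bigl\{\tfrac{1}{\sqrt{2^{k}}} d\ol{z}^{i_1}\wedge\cdots\wedge d\ol{z}^{i_k}\bigr\}$ of $\Lambda^{(0,\bullet)}T^*\mathbb{R}^{2n}$ described just above the corollary. This identifies $\Gamma^\infty_c(\mathbb{R}^{2n},\Lambda^{(0,\bullet)}T^*\mathbb{R}^{2n})$ with $C_c^\infty(\mathbb{R}^{2n})^{\oplus 2^{n}}$ and similarly for the subspace of sections supported away from $\mathbb{R}^{2n-k}$.

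Next, recall the remark that (by \cite[Theorem 6.24]{Gru08}) the Dolbeault--Dirac operator has constant coefficients in this frame and its graph norm on $\Gamma^\infty_c(\mathbb{R}^{2n},\Lambda^{(0,\bullet)}T^*\mathbb{R}^{2n})$ is equivalent to the $H^{1}$-norm of the corresponding $\mathbb{C}^{2^{n}}$-valued function. Consequently, establishing graph-norm density of $\Gamma^\infty_c(\mathbb{R}^{2n}\setminus\mathbb{R}^{2n-k},\Lambda^{(0,\bullet)}T^*\mathbb{R}^{2n})$ is equivalent to establishing $H^{1}$-density of $C_c^\infty(\mathbb{R}^{2n}\setminus\mathbb{R}^{2n-k})^{\oplus 2^{n}}$ inside $C_c^\infty(\mathbb{R}^{2n})^{\oplus 2^{n}}$, where each summand carries the $H^{1}$-norm.

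Because the $H^{1}$-norm on the direct sum decouples into a sum of $H^{1}$-norms on the individual components, it suffices to show: for every $\varphi\in C_c^\infty(\mathbb{R}^{2n})$ there is a sequence $\varphi_m\in C_c^\infty(\mathbb{R}^{2n}\setminus\mathbb{R}^{2n-k})$ with $\varphi_m\to\varphi$ in $H^{1}(\mathbb{R}^{2n})$. Since $\varphi\in H^{1}(\mathbb{R}^{2n})$, this is precisely the content of \cref{prp:sobolevgeneral} applied with ambient dimension $2n$, codimension $k$, and $s=1$: the density holds iff $k>0$ and $1\le k/2$, i.e.\ iff $k\ge 2$, which is exactly the hypothesis.

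There is no genuine obstacle once the two pieces already cited are in place; the only point worth being careful about is that the Dolbeault--Dirac operator, although it mixes the $2^{n}$ components, induces a graph norm that is globally equivalent to the total $H^{1}$-norm of the vector-valued function, so the approximation can be performed component by component without losing the information carried by the off-diagonal coupling.
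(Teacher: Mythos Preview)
Your proof is correct and follows exactly the route the paper takes: the paper simply says ``Combining this result with \cref{prp:sobolevgeneral} we obtain the following'' and states the corollary without further argument, so you have just spelled out the details of that combination. The reduction to the scalar $H^{1}$-density via the unitary trivialisation and the equivalence of norms, followed by the application of \cref{prp:sobolevgeneral} with $s=1$ and codimension $k\ge 2$, is precisely what is intended.
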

Consequently, at least on flat space the Dolbeault--Dirac operator is insensitive to the removal of submanifolds of codimension at least $2$. In our context, this suggests that it may be restricted to the principal stratum of $T^*G /\!/ \Ad G$ without any loss. 
\subsection{Geometry of the reduced space}
Given our  compact connected Lie group $G$,  we fix  a maximal torus  $\textbf{T}$ in $G$ with Lie algebra $\mathfrak{t}$
and  corresponding Weyl group $W(G,\textbf{T})$. As in \S\ref{sct:cot_bun}, we choose an $\Ad G$-invariant inner product on $\mathfrak{g}$ and use this inner product and left-trivialisation of $T^*G$ to identify $T^*G$ with $G \times \mathfrak{g}$.

 Calculations similar to those in \cite[Section 4.4]{AM78} lead to the following result.
\begin{lem}
\label{lem:moment_map}
 The momentum map $j: G \times \mathfrak{g} \rightarrow \mathfrak{g}^*$, considered as a map from $G \times \mathfrak{g}$ into $\mathfrak{g}$, is equal to
\begin{align*}
 j(g,Y) = \Ad_gY - Y \in \mathfrak{g},
\end{align*}
where $g \in G$, $Y \in \mathfrak{g}$.
\end{lem}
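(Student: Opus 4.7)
The plan is to apply the standard formula for the momentum map of a cotangent-lifted action and compute the result in the given left-trivialization. Recall that if a Lie group $G$ acts on a manifold $M$, the cotangent-lifted action on $T^*M$ is strongly Hamiltonian, with momentum map $j(\alpha_m)(X) = \alpha_m(X^*_m)$, where $X^*$ denotes the fundamental vector field on $M$ associated to $X \in \mathfrak{g}$ and $\alpha_m \in T^*_m M$.

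In our setting $M = G$ and the action is conjugation, $g \cdot h = ghg^{-1}$. The first step is to compute, in left-trivialization of $TG$, the fundamental vector field of the conjugation action. Writing $X^*_h = \tfrac{d}{dt}\big|_{t=0} e^{tX} h e^{-tX}$, left-translating back to $\mathfrak{g}$ gives $h^{-1} X^*_h = \Ad_{h^{-1}} X - X$. Thus in the identification $TG \cong G \times \mathfrak{g}$, the fundamental vector field reads $(h,0) \mapsto (\Ad_{h^{-1}}X - X, 0)$ at the point $h$, and hence under the identification $T(T^*G) \cong T(G \times \mathfrak{g}) \cong (G \times \mathfrak{g}) \times (\mathfrak{g} \times \mathfrak{g})$ used throughout the paper, its value at $(g,Y)$ is $(\Ad_{g^{-1}}X - X, \, *)$ for some second component that will be irrelevant.

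Next, feed this into the canonical $1$-form. By Lemma \ref{lem:formulas}, $\theta_{(g,Y)}(X_1,X_2) = \langle Y, X_1\rangle_\mathfrak{g}$, which depends only on $X_1$, so the second component of the fundamental vector field plays no role. Therefore
\begin{align*}
j(g,Y)(X) \;=\; \theta_{(g,Y)}\bigl(X^*_{(g,Y)}\bigr) \;=\; \langle Y,\, \Ad_{g^{-1}}X - X\rangle_\mathfrak{g}.
\end{align*}
The final step uses $\Ad G$-invariance of $\langle\cdot,\cdot\rangle_\mathfrak{g}$: $\langle Y, \Ad_{g^{-1}}X\rangle_\mathfrak{g} = \langle \Ad_g Y, X\rangle_\mathfrak{g}$. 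Hence $j(g,Y)(X) = \langle \Ad_g Y - Y, X\rangle_\mathfrak{g}$, and unpacking the identification $\mathfrak{g}^* \cong \mathfrak{g}$ via the chosen inner product yields exactly $j(g,Y) = \Ad_g Y - Y$.

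There is no real obstacle here; this is a routine computation. The only point to be careful about is bookkeeping: sorting out conventions (left- versus right- trivialization, sign of the lifted action, which copy of $\mathfrak{g}$ in $\mathfrak{g}\times\mathfrak{g}$ corresponds to the ``horizontal'' direction) so that the formula $\theta(X_1,X_2) = \langle Y, X_1\rangle_\mathfrak{g}$ of Lemma \ref{lem:formulas} is applied to the correct component of $X^*_{(g,Y)}$. Once these conventions are consistently tracked, the derivation is a three-line calculation as above.
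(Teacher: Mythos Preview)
Your proof is correct and is precisely the standard computation the paper alludes to: the paper does not spell out a proof but simply refers to \cite[Section 4.4]{AM78}, where the momentum map of a cotangent-lifted action is derived, and your argument is exactly that derivation specialized to the conjugation action in the left-trivialization used here.
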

Having an explicit formula for the momentum map, we now determine the strata of the Marsden--Weinstein quotient 
(\ref{defMWQ}). We show that it is sufficient to know how these strata intersect $\textbf{T} \times \mathfrak{t}$. The isotropy group of an element $g \in G$ under the action of $G$ on itself by conjugation is simply the \emph{centraliser} $\mathcal{Z}_G(g)$. Similarly, we write $\mathcal{Z}_G(Y)$ for the isotropy group of $Y \in \mathfrak{g}$ under the adjoint action of $G$ on $\mathfrak{g}$, and refer to $\mathcal{Z}_G(Y)$ as the \emph{centraliser} of $Y$, too.
\begin{lem}
\label{lem:mwred}
 Let $(x,Y) \in j^{-1}(0) \subset G \times \mathfrak{g}$ be arbitrary. The orbit of $(x,Y)$ under the adjoint action of $G$ contains an element of $\textbf{T} \times \mathfrak{t}$, and 
 we have a homeomorphism 
 $$
 T^*G /\!/ \Ad G \cong (\textbf{T} \times \mathfrak{t}) / W(G,\textbf{T}) ,
 $$ 
 with respect to the obvious diagonal action of $W(G,\textbf{T})$ on $\textbf{T} \times \mathfrak{t}$.
\end{lem}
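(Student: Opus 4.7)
The plan is to first verify the claim about orbit representatives, then bootstrap this into the full homeomorphism by separately checking surjectivity, $W$-equivariant injectivity, and bicontinuity.

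For the orbit representative claim, note first that $j(x,Y)=\Ad_xY-Y=0$ is equivalent to $x\in\mathcal{Z}_G(Y)$. By Cartan's theorem every $Y\in\mathfrak{g}$ is $\Ad G$-conjugate to some $Y'\in\mathfrak{t}$, so after replacing $(x,Y)$ by an appropriate $h\cdot(x,Y)=(hxh^{-1},\Ad_hY)$ we may assume $Y\in\mathfrak{t}$. The constraint now reads $hxh^{-1}\in\mathcal{Z}_G(Y)$. The centraliser $\mathcal{Z}_G(Y)$ is a compact \emph{connected} Lie subgroup of $G$ (connectedness of centralisers of Lie algebra elements in compact connected groups is standard), and $\textbf{T}$ is a maximal torus inside it. Therefore some $k\in\mathcal{Z}_G(Y)$ conjugates $hxh^{-1}$ into $\textbf{T}$, and since $k$ commutes with $Y$ we have $kh\cdot(x,Y)\in\textbf{T}\times\mathfrak{t}$, as desired.

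For the homeomorphism, I would set up the natural continuous map $\Phi\colon(\textbf{T}\times\mathfrak{t})/W(G,\textbf{T})\to T^*G/\!/\Ad G$ arising from the inclusion $\textbf{T}\times\mathfrak{t}\subset j^{-1}(0)$ (which is legitimate since $\Ad_tY=Y$ for $t\in\textbf{T}$, $Y\in\mathfrak{t}$), composed with the quotient map and then verified to be $N_G(\textbf{T})$-invariant on the source (hence factoring through $W(G,\textbf{T})$). Surjectivity is the previous paragraph. For injectivity, suppose $(t,Y)$ and $(t',Y')$ in $\textbf{T}\times\mathfrak{t}$ satisfy $h\cdot(t,Y)=(t',Y')$ for some $h\in G$. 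Standard Weyl theory for the Lie algebra produces $w_1\in N_G(\textbf{T})$ with $\Ad_{w_1}Y=Y'$; then $w_1^{-1}h\in\mathcal{Z}_G(Y)$ conjugates $t$ to $w_1^{-1}t'w_1$, so inside the compact connected group $\mathcal{Z}_G(Y)$ (in which $\textbf{T}$ is a maximal torus) the elements $t$ and $w_1^{-1}t'w_1$ of $\textbf{T}$ are conjugate, hence conjugate by some $w_2\in N_{\mathcal{Z}_G(Y)}(\textbf{T})\subset N_G(\textbf{T})$. Since $w_2$ fixes $Y$, the product $w_1w_2$ lies in $N_G(\textbf{T})$ and sends $(t,Y)$ to $(t',Y')$, so its image in $W(G,\textbf{T})$ does the job.

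Finally, for bicontinuity: $\Phi$ is a continuous bijection between Hausdorff spaces, but not between compact spaces (since $\mathfrak{t}$ is non-compact), so this step is the main obstacle. My approach is to filter both sides by the $\Ad G$-invariant proper function $(g,Y)\mapsto|Y|^2$ (which descends through both quotients because the inner product is $\Ad G$-invariant). On each sublevel set the source becomes a quotient of a compact set by the finite group $W(G,\textbf{T})$, and the target restricts to a Hausdorff quotient of a compact $G$-saturated set; a continuous bijection from a compact space to a Hausdorff space is automatically a homeomorphism, and piecing these together (using the fact that the filtration exhausts both sides and is compatible with the map) yields the global homeomorphism. The subtlety to check carefully is that the preimage under $\Phi$ of the sublevel $\{|Y|^2\le c\}/\Ad G$ is precisely $\{|Y|^2\le c\}/W(G,\textbf{T})$, which follows directly because every orbit representative can be chosen in $\textbf{T}\times\mathfrak{t}$ and the function $|Y|^2$ is $\Ad G$-invariant.
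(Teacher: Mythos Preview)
Your proof is correct and follows the same overall architecture as the paper's (natural map, surjectivity, injectivity up to $W$, bicontinuity), but each step is implemented differently.

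For the orbit representative, the paper argues in one stroke: since $x$ centralises the torus $\overline{\exp(\mathbb{R}Y)}$, there is a single maximal torus containing both $x$ and this torus, and one conjugates that maximal torus onto $\textbf{T}$. Your two-step argument (first put $Y$ into $\mathfrak{t}$, then conjugate $x$ into $\textbf{T}$ inside the connected centraliser $\mathcal{Z}_G(Y)$) is equally valid and arguably more transparent, but it relies on the connectedness of centralisers of Lie algebra elements, which the paper's route avoids.

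For injectivity, the paper works with the joint centraliser $\mathcal{Z}_G(g,Y)$ and shows that both $\mathfrak{t}$ and $\Ad_{h^{-1}}\mathfrak{t}$ are maximal abelian in its Lie algebra, hence conjugate by an element of $\mathcal{Z}_G(g,Y)$; this element times $h^{-1}$ lands in $N_G(\textbf{T})$. Your argument instead first invokes the Weyl-group conjugacy theorem on $\mathfrak{t}$ to align the $Y$-components, and then the same theorem for the compact connected group $\mathcal{Z}_G(Y)$ to align the torus components. Both routes use the ``two elements of a maximal torus conjugate in the ambient group are conjugate by the normaliser'' principle, just organised differently.

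The genuinely distinct step is bicontinuity. The paper shows the map is closed directly: by compactness of $G$ the action map $G\times j^{-1}(0)\to j^{-1}(0)$ is proper, hence closed, so $G$-saturations of closed sets are closed, so the quotient map is closed, and therefore the inclusion-induced bijection is closed. Your filtration by the proper invariant $|Y|^2$ and compact--Hausdorff argument on each sublevel set also works and is perhaps more hands-on, though it requires the extra bookkeeping that $\Phi$ respects the filtration. The paper's properness argument is shorter and more robust (it works verbatim for any compact group action on a locally compact Hausdorff space), whereas your approach would generalise better to situations where no global proper function is obviously available but a nice exhaustion is.
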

This result is not new \cite{HRS09}, but nonetheless we provide an independent proof, since it
 contains manipulations that will be needed in what follows.
\begin{proof}
 Let $(x,Y) \in j^{-1}(0)$ and consider the one-dimensional Lie subalgebra $\mathfrak{h}_Y \subset \mathfrak{g}$ generated by $Y$. The Lie group $\exp(\mathfrak{h}_Y)$ is a torus in $\mathfrak{g}$. Since $j(g,Y)=0$, we have
\begin{align*}
 g\exp(tY)g^{-1} = \exp(\Ad_g tY) = \exp(tY), \quad (t \in \mathbb{R}),
\end{align*}
so that $g$ centralises the torus $\exp(\mathfrak{h}_Y)$. Consequently, there is a maximal torus in $G$ that contains both $\exp(\mathfrak{h}_Y)$ and $g$ (see \cite[Theorem 4.50]{Kna96}). Since any two maximal tori are conjugate in $G$, there exists an $h \in G$ such that $\textbf{T}$ contains both $hgh^{-1}$ and $\exp(\Ad_h\mathfrak{h}_Y)$. In particular, the path $t \mapsto \exp(t \Ad_h Y)$ lies in $\textbf{T}$, so that $\Ad_h Y \in \mathfrak{t}$. Thus  $(hgh^{-1}, \Ad_h Y)$ lies in $\textbf{T} \times \mathfrak{t}$.

Clearly, $\textbf{T} \times \mathfrak{t} \subset j^{-1}(0)$ and this inclusion induces a continuous map 
\begin{align*}
\iota: (\textbf{T} \times \mathfrak{t}) / W(G,\textbf{T}) \rightarrow j^{-1}(0) / \Ad G.
\end{align*}
 By the previous paragraph, $\iota$ is surjective. To show that it is injective, we proceed in the same way as in \cite[Proposition 4.53]{Kna96}, where it is shown that $G/\Ad G$ is homeomorphic to $\textbf{T}/W(G,\textbf{T})$. Suppose that $(g,Y), (g',Y') \in \textbf{T} \times \mathfrak{t}$ are in the same $G$-orbit. That is, there exists $h\in G$ such that $(g',Y') = (hgh^{-1},\Ad_h Y)$. Consider the closed Lie subgroup $\mathcal{Z}_G(g,Y) =\mathcal{Z}_G(g) \cap \mathcal{Z}_G(Y)$, i.e., the closed Lie subgroup of $G$ consisting of all elements that centralise both $g$ and $Y$. Its Lie algebra is  
\begin{align*}
\mathcal{Z}_\mathfrak{g}(g,Y):= \{ X \in \mathfrak{g} \mid \Ad_g X = X \text {and } [X,Y]=0 \},
\end{align*}
and $\mathcal{Z}_\mathfrak{g}(g,Y)$ contains $\mathfrak{t}$. It also contains $\Ad_h^{-1} \mathfrak{t}$: pick $X \in \mathfrak{t}$, then
\begin{align*}
 \Ad_g \Ad_{h^{-1}} X = \Ad_{h^{-1}} \Ad_{hgh^{-1}} X = \Ad_{h^{-1}}X,
\end{align*}
where we used the fact that $hg^{-1}h \in \textbf{T}$. Similarly,
\begin{align*}
 [\Ad_{h^{-1}} X,Y] = \Ad_{h^{-1}} [ X, \Ad_h Y] = 0.
\end{align*}
Both $\mathfrak{t}$ and $\Ad_{h^{-1}} \mathfrak{t}$ are maximal abelian subalgebras in $\mathcal{Z}_\mathfrak{g}(g,Y)$. Hence
 there exists an element $k$ in (the identity component of) $\mathcal{Z}_G(g,Y)$ such that $\Ad(kh^{-1}) \mathfrak{t} = \mathfrak{t}$. Consequently, $kh^{-1} \in N_G(\textbf{T})$, and 
\begin{align*}
kh^{-1} \cdot (hgh^{-1}, \Ad_h Y) = k \cdot (g,Y) = (g,Y),
\end{align*}
so that $(g,Y)$ and $(g',Y')$ are in the same $W(G,\textbf{T})$ orbit. Thus  $\iota$ is a continuous bijection. 
To prove that $\iota$ is a homeomorphism we show that it is closed. For this, it is sufficient that the map $\textbf{T} \times \mathfrak{t} \rightarrow j^{-1}(0) /\Ad G$ is closed. Since $\textbf{T} \times \mathfrak{t}$ is a closed subset of $j^{-1}(0)$, it is in turn sufficient to show that the projection $j^{-1}(0) \rightarrow j^{-1}(0)/\Ad G$ is closed, which is what we are going to do now. By compactness of $G$, the map 
\begin{align*}
\Phi: G \times j^{-1}(0) \rightarrow j^{-1}(0), \quad (g,x) \mapsto gx
\end{align*} 
is proper. Every proper map into a locally compact Hausdorff space is closed, and so in particular $\Phi$ is closed. Therefore, if $C$ is a closed set of $j^{-1}(0)$, then $GC = \Phi(G \times C)$ is closed. Thus  the quotient map $j^{-1}(0) \rightarrow j^{-1}(0)/\Ad G$ is closed. Consequently, the continuous bijection $$\iota:(\textbf{T} \times \mathfrak{t})/W(G,\textbf{T}) \rightarrow j^{-1}(0) / \Ad G$$ is closed, and hence is a homeomorphism.
 \end{proof}

The restriction of the symplectic structure on $G \times \mathfrak{g}$ to $\textbf{T} \times \mathfrak{t}$ is equal to
\begin{align*}
  \omega_{(g,Y)}((X_1,X_2),(Z_1,Z_2) ) = \langle X_2, Z_1 \rangle_\mathfrak{g} - \langle X_1, Z_2 \rangle,
\end{align*}
where $g \in \textbf{T}, Y, X_{1,2}, Z_{1,2} \in \mathfrak{t}$ (see \eqref{eq:symp}). Moreover, on $\textbf{T} \times \mathfrak{t}$, the map \eqref{eq:differentialPhi} is just the identity when restricted to $\mathfrak{t} \times \mathfrak{t}$. Therefore, the induced complex structure on $\textbf{T} \times \mathfrak{t}$, as a map from $\mathfrak{t} \times \mathfrak{t}$ onto itself after identifying each tangent space with $\mathfrak{t} \times \mathfrak{t}$ using left-trivialisation, is simply
\begin{align*}
 J(X_1,X_2) = (-X_2,X_1), \quad (X_1,X_2 \in \mathfrak{t}).
\end{align*}  
Obviously, the K\"ahler structure that $\textbf{T} \times \mathfrak{t}$ inherits from $G \times \mathfrak{g}$ is equal to the standard K\"ahler structure on $\textbf{T} \times \mathfrak{t}$. One verifies directly that this K\"ahler structure on $\textbf{T} \times \mathfrak{t}$ is invariant under the action of the Weyl group.

For a strongly Hamiltonian, proper group action on a symplectic manifold $M$, the space $j^{-1}(0) / \Ad G$ is stratified by the strata \cite{OR04,SL91} 
\begin{eqnarray}
 M^{(H)}_{x,0}&=& (j^{-1}(0) \cap G M_H^x) / \Ad G;\nonumber \\
  &=&  (j^{-1}(0) \cap M_H^x) / (N_G(H)^x/ H), \label{eq:srs}
\end{eqnarray}
where $H \subset G$ is an isotropy group for the $G$-action on $M$,  $M_H^x$ is the connected component of 
$M_H := \{ x \in M \mid H \text{ is the isotropy group of }x \}$
that contains $x$ (assuming that $x \in M_H$), and 
\begin{align*}
 N_G(H)^x := \{ g \in G \mid g \cdot M_H^x = M_H^x \}. 
\end{align*} 

We now use (\ref{eq:srs}) to determine the strata of $j^{-1}(0)/\Ad G$ for the adjoint action of $G$ on $T^*G$. By Lemma \ref{lem:mwred} it is sufficient to consider only those submanifolds $(G \times \mathfrak{g})_H^{(g,Y)}$ that have non-empty intersection with $\textbf{T} \times \mathfrak{t}$. But if $(g,Y)$ is in $(\textbf{T} \times \mathfrak{t}) \cap (G \times \mathfrak{g})_H^{(g,Y)}$, then $H \supset \textbf{T}$. Conversely, if $H \supset \textbf{T}$, then necessarily $(G \times \mathfrak{g})_H \subset \textbf{T} \times \mathfrak{t}$, because $(G \times \mathfrak{g})^\textbf{T} = \textbf{T} \times \mathfrak{t}$ . This latter fact can be seen as follows: if $(g,Y) \in (G\times \mathfrak{g})^\textbf{T}$, then $\textbf{T}$ centralises both $g$ and $Y$. In particular, there exists a maximal torus containing both $g$ and $\textbf{T}$ and so $g \in \textbf{T}$, because $\textbf{T}$ is a maximal torus. Similarly, $Y \in \mathfrak{t}$. Thus  we have proved:

\begin{prp}
\label{prp:restricttotorus}
\begin{enumerate}
\item If $H \supset \textbf{T}$, then 
\begin{align*}
j^{-1}(0) \cap (G \times \mathfrak{g})^{(g,Y)}_H = (G \times \mathfrak{g})^{(g,Y)}_H  \subset \textbf{T} \times \mathfrak{t}. 
\end{align*}
\item Each stratum of $j^{-1}(0)/\Ad G$ is of the form 
\begin{align*}
(G \times \mathfrak{g})^{(g,Y)}_H / (N_G(H)^{(g,Y)}/H),
\end{align*}
with $H \supset \textbf{T}$. 
\end{enumerate}
\end{prp}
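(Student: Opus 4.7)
The plan is to exploit the identity $(G \times \mathfrak{g})^{\mathbf{T}} = \mathbf{T} \times \mathfrak{t}$ and combine it with Lemma~\ref{lem:mwred} to reduce everything to representatives in $\mathbf{T}\times\mathfrak{t}$.

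First I would establish the fixed-point identity $(G \times \mathfrak{g})^{\mathbf{T}} = \mathbf{T} \times \mathfrak{t}$. The containment $\supset$ is obvious since $\mathbf{T}$ is abelian and acts trivially on $\mathfrak{t}$ via the adjoint action. For the nontrivial containment, suppose $(g,Y)$ is $\mathbf{T}$-fixed, so $\mathbf{T}$ centralises both $g$ and $Y$. Then $\mathbf{T} \cup \{g\}$ generates an abelian subgroup whose closure is a torus containing $\mathbf{T}$; by maximality of $\mathbf{T}$ this forces $g \in \mathbf{T}$. Similarly $\mathbf{T} \cup \exp(\mathbb{R}Y)$ generates a torus, so $Y \in \mathfrak{t}$.

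For part (1), suppose $H \supset \mathbf{T}$. Then $(G\times\mathfrak{g})_H \subset (G\times\mathfrak{g})^{H} \subset (G\times\mathfrak{g})^{\mathbf{T}} = \mathbf{T}\times\mathfrak{t}$, so in particular the connected component $(G\times\mathfrak{g})_H^{(g,Y)}$ lies in $\mathbf{T}\times\mathfrak{t}$. Now observe that $\mathbf{T}\times\mathfrak{t} \subset j^{-1}(0)$: from Lemma~\ref{lem:moment_map} we have $j(t,Y) = \mathrm{Ad}_t Y - Y$, and this vanishes for $t \in \mathbf{T}$, $Y \in \mathfrak{t}$ because $\mathbf{T}$ centralises $\mathfrak{t}$. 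Intersecting with $j^{-1}(0)$ therefore leaves $(G\times\mathfrak{g})_H^{(g,Y)}$ unchanged, yielding the claimed equality.

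For part (2), the stratification formula \eqref{eq:srs} says every stratum is of the form $(j^{-1}(0) \cap (G\times\mathfrak{g})_H^{(g,Y)}) / (N_G(H)^{(g,Y)}/H)$ for some isotropy subgroup $H$ and some $(g,Y) \in (G\times\mathfrak{g})_H$. By Lemma~\ref{lem:mwred}, every $\mathrm{Ad}\,G$-orbit in $j^{-1}(0)$ meets $\mathbf{T}\times\mathfrak{t}$, so without loss of generality we may take the representative $(g,Y)$ to lie in $\mathbf{T}\times\mathfrak{t}$. But the isotropy group of a point $(g,Y) \in \mathbf{T}\times\mathfrak{t}$ under the conjugation action automatically contains $\mathbf{T}$, since $\mathbf{T}$ centralises both $g$ and $Y$. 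Hence $H \supset \mathbf{T}$, and part~(1) then removes the intersection with $j^{-1}(0)$, giving the stated form. The argument is essentially a bookkeeping exercise once the fixed-point identity is in hand; the only conceptual input is the maximality of $\mathbf{T}$, which was already used implicitly in Lemma~\ref{lem:mwred}, so no real obstacle should arise.
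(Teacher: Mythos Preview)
Your proof is correct and follows essentially the same approach as the paper: establish the fixed-point identity $(G\times\mathfrak{g})^{\mathbf{T}}=\mathbf{T}\times\mathfrak{t}$, then combine it with Lemma~\ref{lem:mwred} and the stratification formula~\eqref{eq:srs}. One small wording issue: the claim that the closure of $\langle \mathbf{T},g\rangle$ is a \emph{torus} presupposes connectedness, which is what you are trying to show; the clean way (and the paper's way) is to invoke the standard fact that $g$ and $\mathbf{T}$ lie in a common maximal torus, or equivalently that $Z_G(\mathbf{T})=\mathbf{T}$ in a compact connected Lie group.
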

Proposition \ref{prp:restricttotorus} basically says that the symplectic stratification on $j^{-1}(0)/\Ad G$ is obtained by partitioning $\textbf{T} \times \mathfrak{t}$ into the connected components of $(G \times \mathfrak{g})_H$ for $H \supset \textbf{T}$ and subsequently project these into $j^{-1}(0)/\Ad G$.

 We now further analyze this partition of $\textbf{T} \times \mathfrak{t}$. 
Consider the action of $G$ on itself by conjugation and let $\textbf{T}$ be a maximal torus. The principal stratum of $G$ is   
\begin{align*}
G_{princ} =\{ g \in G \mid \mathcal{Z}_G(g)  \text{ is a maximal torus}\}.
\end{align*}
Similarly, the principal stratum of the adjoint action of $G$ on $\mathfrak{g}$ is equal to
\begin{align*}
\mathfrak{g}_{princ} =  \{ X \in \mathfrak{g} \mid \mathcal{Z}_G(X)  \text{ is a maximal torus}\}.
\end{align*}
Proofs of these facts may be found in \cite[Theorem 3.7.1 and Corollary 3.3.2]{DK00}. In particular, $G_{princ}$ and $\mathfrak{g}_{princ}$ are open and dense in $G$ and $\mathfrak{g}$, respectively. Since $G / \Ad G \cong \textbf{T} /W$ and the projection $G \rightarrow G /\Ad G$ is open, the set $G_{princ} \cap \textbf{T}  = G_\textbf{T}$ is open and dense in $\textbf{T}$. Similarly, $\mathfrak{g}_\textbf{T}$ is open and dense in $\mathfrak{t}$. 

The principal stratum of $G \times \mathfrak{g}$ is $(G\times \mathfrak{g})_{C(G)}$, where $C(G)$ denotes the centre of $G$, but this stratum does not intersect $j^{-1}(0)$, because each element in $j^{-1}(0)$ is at least fixed by some maximal torus of $G$. Instead, the principal stratum on $\textbf{T} \times \mathfrak{t}$ is as described in the following proposition.
\begin{prp}\label{prp:ODSP}
 The space $(G \times \mathfrak{g})_\textbf{T} \subset \textbf{T} \times \mathfrak{t}$ is an open and dense submanifold of $\textbf{T} \times \mathfrak{t}$.
\end{prp}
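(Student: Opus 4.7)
The plan is to establish the three assertions in turn---dense, open, submanifold---with the last following for free, since open subsets of the smooth manifold $\textbf{T}\times\mathfrak{t}$ inherit its manifold structure. By the discussion immediately preceding the proposition, every point $(g,Y)\in\textbf{T}\times\mathfrak{t}$ has isotropy $\mathcal{Z}_G(g,Y):=\mathcal{Z}_G(g)\cap\mathcal{Z}_G(Y)\supseteq \textbf{T}$, so $(G\times\mathfrak{g})_\textbf{T}$ is exactly the locus in $\textbf{T}\times\mathfrak{t}$ where this containment is an equality.

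Density follows from the inclusion $\textbf{T}\times\mathfrak{g}_\textbf{T}\subseteq (G\times\mathfrak{g})_\textbf{T}$. Indeed, for $Y\in\mathfrak{g}_\textbf{T}$ the centraliser $\mathcal{Z}_G(Y)$ is a maximal torus containing $\textbf{T}$, hence equals $\textbf{T}$; consequently $\mathcal{Z}_G(g,Y)=\mathcal{Z}_G(g)\cap\textbf{T}=\textbf{T}$ for every $g\in\textbf{T}$. Since $\mathfrak{g}_\textbf{T}$ was just noted to be open and dense in $\mathfrak{t}$, this establishes density.

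For openness I would show that $B:=\{(g,Y)\in\textbf{T}\times\mathfrak{t}:\mathcal{Z}_G(g,Y)\supsetneq\textbf{T}\}$ is closed, by a sequential argument with a dichotomy in how the isotropy can enlarge. Given $(g_n,Y_n)\to(g_0,Y_0)$ in $B$, set $K_n:=\mathcal{Z}_G(g_n,Y_n)$ and pass to a subsequence lying in one of two cases. In the \emph{infinitesimal case} $\dim K_n>\dim\textbf{T}$, so there is a unit vector $Z_n\in\mathcal{Z}_\mathfrak{g}(g_n,Y_n)\cap\mathfrak{t}^{\perp}$; compactness of the unit sphere combined with continuity of the closed conditions $\Ad_{g_n}Z_n=Z_n$ and $[Y_n,Z_n]=0$ yields a unit limit $Z_0\in\mathcal{Z}_\mathfrak{g}(g_0,Y_0)\cap\mathfrak{t}^{\perp}$, so $K_0$ has Lie algebra strictly larger than $\mathfrak{t}$. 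In the \emph{discrete case} $K_n^{0}=\textbf{T}$ but there exists $h_n\in K_n\setminus\textbf{T}$; since $\textbf{T}$ is the identity component of $K_n$ and hence normal in it, $h_n\in N_G(\textbf{T})$ and projects to a nontrivial Weyl element $w_n\in W(G,\textbf{T})$. Finiteness of $W(G,\textbf{T})$ lets me take $w_n=w$ constant; fixing a lift $h\in N_G(\textbf{T})$ of $w$ and writing $h_n=h\,t_n$ with $t_n\in\textbf{T}$, compactness of $\textbf{T}$ produces $t_n\to t_0$, whence $h_n\to ht_0\in\mathcal{Z}_G(g_0,Y_0)\setminus\textbf{T}$. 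Either way $(g_0,Y_0)\in B$, so $B$ is closed.

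The main obstacle I expect is precisely this dichotomy: the isotropy may exceed $\textbf{T}$ either through an enlarged Lie algebra or through extra connected components detected only after descending to the Weyl group, and each mechanism calls for its own compactness argument (a Grassmannian/unit-sphere argument in the first case, a finiteness-of-$W(G,\textbf{T})$ argument in the second). With openness and density in place, the submanifold claim is immediate, completing the proof.
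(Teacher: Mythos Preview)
Your proof is correct, but your openness argument differs substantially from the paper's. For density you actually use a slightly sharper observation than the paper: you only need $\mathfrak{g}_\textbf{T}$ dense in $\mathfrak{t}$, since $\mathcal{Z}_G(Y)=\textbf{T}$ already forces $\mathcal{Z}_G(g,Y)=\textbf{T}$ for every $g\in\textbf{T}$; the paper instead invokes the density of the product $G_\textbf{T}\times\mathfrak{g}_\textbf{T}$. For openness the paper bypasses any sequential argument entirely: it appeals to the general fact (from the slice theorem for proper actions) that each connected component of $(G\times\mathfrak{g})_\textbf{T}$ is a submanifold whose dimension equals that of the corresponding component of the fixed-point set $(G\times\mathfrak{g})^\textbf{T}=\textbf{T}\times\mathfrak{t}$, whence $(G\times\mathfrak{g})_\textbf{T}$ is open in $\textbf{T}\times\mathfrak{t}$ immediately. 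Your direct argument, splitting into the cases where the excess isotropy is detected infinitesimally versus only at the level of components in $N_G(\textbf{T})/\textbf{T}$, is longer but entirely self-contained and avoids invoking the slice theorem; the paper's route is quicker but leans on that background structure theory.
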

\begin{proof}
 The submanifold $(G \times \mathfrak{g})_\textbf{T}$ is non-empty, since both $G_\textbf{T}$ and $\mathfrak{g}_\textbf{T}$ are non-empty. Let $(g,Y) \in (G \times \mathfrak{g})_\textbf{T}$ be given. The dimension of $(G \times \mathfrak{g})^{(g,Y)}_\textbf{T}$ is equal to the dimension of $(G \times \mathfrak{g})_{(g,Y)}^\textbf{T} = \textbf{T} \times \mathfrak{t}$. In particular, $(G \times \mathfrak{g})_\textbf{T}$ is an open submanifold of $\textbf{T} \times \mathfrak{t}$. Here, $(G \times \mathfrak{t})^\textbf{T}_{(g,Y)}$ denotes the connected component of $(G \times \mathfrak{g})^\textbf{T}$ containing $(g,Y)$. To see that $(G \times \mathfrak{g})_\textbf{T}$ is dense in $\textbf{T} \times \mathfrak{t}$, note that $G_\textbf{T} \times \mathfrak{g}_\textbf{T}$ is contained in $(G \times \mathfrak{g})_\textbf{T}$ and that $G_\textbf{T} \times \mathfrak{g}_\textbf{T}$ is dense in $\textbf{T} \times \mathfrak{t}$ by the observations preceding this proposition.
\end{proof}
\subsection{Quantization after reduction}
\label{sct:rq}
We are now ready to define the Dolbeault--Dirac quantization of the singular quotient $j^{-1}(0)/\Ad G$.
\begin{dfn}\label{def:QMWQ}
 The quantization of the Marsden Weinstein quotient 
\begin{align*}
T^*G/\!/ \Ad G = j^{-1}(0)/ \Ad G
\end{align*}
is the Dolbeault--Dirac quantization of the principal stratum of the quotient.
\end{dfn}
By Proposition \ref{prp:ODSP}, this  principal stratum is the
manifold $(G \times \mathfrak{g})_\textbf{T} / W(G,\textbf{T})$, seen as an open and dense subset of $j^{-1}(0) / \Ad G$.

Let $P$ denote the complement of $(\textbf{G} \times \mathfrak{g})_\textbf{T}$ in $\textbf{T} \times \mathfrak{t}$. By \cite[Corollary 1, pp. 316]{Bou05} there is only a finite number of closed subgroups of $G$ containing a given maximal torus $\textbf{T} \subset G$. Therefore, the union 
\begin{align*}
 P = \cup_{H \supsetneq \textbf{T}, \text{isotropy group}} (G \times \mathfrak{g})^H
\end{align*}
is finite. The connected components of $(G \times \mathfrak{g})^H$ are closed submanifolds of $\textbf{T} \times \mathfrak{t}$. If one of these connected components were open as well, then it would be a non-empty, closed and open subset of the connected space $\textbf{T} \times \mathfrak{t}$ and therefore would be equal to $\textbf{T} \times \mathfrak{t}$. This is impossible if $H \supsetneq \textbf{T}$, because $(G \times \mathfrak{g})_\textbf{T}$ is non-empty. Consequently, the connected components of $(G \times \mathfrak{g})^H$ are of lower dimension than $\textbf{T} \times \mathfrak{t}$ if $H \supsetneq \textbf{T}$. Since these connected components are symplectic manifolds, their codimensions as submanifolds of the symplectic manifold $\textbf{T} \times \mathfrak{t}$ are at least $2$. 

Denote the dimension of $\textbf{T}$ by $n$ and write $\mathbb{E}$ for the bundle $\Lambda^{(0,\bullet)}T^*M$.
\begin{prp}
\label{prp:denseMH}
 Let $H \supsetneq \textbf{T}$ be an isotropy group for the adjoint action of $G$ on $G \times \mathfrak{g}$.  Then $\Gamma_c^\infty(\textbf{T} \times \mathfrak{t} \setminus (G \times \mathfrak{g})^H,\mathbb{E})$ is dense in $\Gamma_c^\infty(\textbf{T} \times \mathfrak{t},\mathbb{E})$ with respect to the graph norm of $D$, the Dolbeault--Dirac operator on $\textbf{T} \times \mathfrak{t}$.
\end{prp}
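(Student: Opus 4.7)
The plan is to approximate any $s \in \Gamma^\infty_c(\textbf{T} \times \mathfrak{t}, \mathbb{E})$ in the graph norm by sections of the form $\chi_\epsilon s$, where $\chi_\epsilon$ is a smooth cutoff vanishing in a shrinking neighborhood of $N := (G \times \mathfrak{g})^H$. By the discussion preceding the proposition, $N$ is a closed embedded submanifold of $\textbf{T} \times \mathfrak{t}$ with finitely many connected components, each of codimension $k \geq 2$.

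First, using a partition of unity on $\operatorname{supp}(s)$, I would reduce to the case in which $\operatorname{supp}(s)$ is contained in a tubular neighborhood $U$ of a single connected component $N_0$ of $N$ (components of $\operatorname{supp}(s)$ already disjoint from $N$ require no modification). The tubular neighborhood theorem then supplies a smooth nonnegative function $r$ on $U$ that vanishes exactly on $N_0$ and is comparable to the Riemannian distance from $N_0$, together with cylindrical polar coordinates in which the volume form takes the shape $r^{k-1}\, dr\, d\sigma\, dV_{N_0}$.

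Next, I would invoke the logarithmic cutoff standard in codimension-$\geq 2$ density problems. Fix $\phi \in C^\infty(\mathbb{R},[0,1])$ with $\phi \equiv 0$ on $(-\infty,0]$ and $\phi \equiv 1$ on $[1,\infty)$, and set
\[
\chi_\epsilon(x) = \phi\!\left( \frac{\log\bigl(r(x)/\epsilon^2\bigr)}{\log(1/\epsilon)} \right),
\]
extended by $1$ off $U$. Then $\chi_\epsilon$ is smooth, vanishes on $\{r \leq \epsilon^2\}$, equals $1$ on $\{r \geq \epsilon\}$, and $|d\chi_\epsilon| \lesssim 1/\bigl(r\log(1/\epsilon)\bigr)$ on the annulus $\{\epsilon^2 \leq r \leq \epsilon\}$. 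Putting $s_\epsilon := \chi_\epsilon s \in \Gamma^\infty_c(\textbf{T} \times \mathfrak{t} \setminus N, \mathbb{E})$, dominated convergence yields $s_\epsilon \to s$ in $L^2$; the Leibniz rule then gives $D s_\epsilon = \chi_\epsilon D s + c(d\chi_\epsilon) s$, where $c(\cdot)$ is the principal symbol of $D$, and the $L^2$-norm of the symbol term is dominated, up to constants, by
\[
\|s\|_\infty^2 \operatorname{vol}(N_0 \cap \operatorname{supp}(s)) \int_{\epsilon^2}^{\epsilon} \frac{r^{k-3}}{\log^2(1/\epsilon)}\, dr,
\]
which is $O\bigl(1/\log(1/\epsilon)\bigr)$ in the worst case $k=2$ and better for $k>2$. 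Combined with dominated convergence on the $\chi_\epsilon D s$ term, this gives $s_\epsilon \to s$ in the graph norm.

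The main obstacle is precisely the borderline codimension $k=2$: a linear cutoff of width $\epsilon$ satisfies $\|d\chi_\epsilon\|_{L^2}^2 \sim \epsilon^{k-2}$, which fails to vanish at $k=2$; the logarithmic rescaling above is designed exactly to keep the gradient-squared integrable as $\epsilon \to 0$. As a conceptual alternative one could exploit the flatness of the K\"ahler metric on $\textbf{T} \times \mathfrak{t}$ to pull back locally to $\mathbb{R}^{2n}$ and apply a submanifold generalization of \cref{prp:sobolevgeneral} via partition of unity, but the decisive estimate is the same logarithmic cutoff computation.
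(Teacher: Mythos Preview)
Your argument is correct, and in fact you have written down the estimate that underlies the whole story: the logarithmic cutoff is exactly what makes codimension $2$ work, and your computation of the $L^2$-norm of $c(d\chi_\epsilon)s$ is the heart of the matter.

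The paper, however, organizes the proof differently. Rather than run the cutoff estimate directly on $\textbf{T}\times\mathfrak{t}$, it exploits the \emph{flatness} of the standard K\"ahler structure on $\textbf{T}\times\mathfrak{t}$: the charts $(\exp^{-1}\circ L_{g}^{-1})\times\mathrm{id}$ identify a neighborhood of each point of $(G\times\mathfrak{g})^H$ with an open set in $\mathbb{R}^{2t}$ carrying the standard K\"ahler structure, under which $(G\times\mathfrak{g})^H$ becomes a linear subspace $\mathbb{R}^{2t-2k}$. The Dolbeault--Dirac operator is then the constant-coefficient Euclidean one, and the density statement reduces to \cref{cor:denseDD} (equivalently \cref{prp:sobolevgeneral}, via equivalence of the graph norm with the $H^1$-norm). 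A partition-of-unity argument together with \cref{lem:samesupportconvergence} globalizes. So the paper's route is: flatten, quote the Sobolev result on $\mathbb{R}^{2n}$, patch; this is precisely the ``conceptual alternative'' you describe in your last sentence.

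Your direct approach has the advantage of being self-contained and of not using flatness at all: it would prove the analogous density statement for any symmetric first-order differential operator on any Riemannian manifold with a closed submanifold of codimension $\geq 2$ removed. The paper's approach is more modular---it isolates the analytic content once and for all in \cref{prp:sobolevgeneral}---but is tied to the special geometry of $\textbf{T}\times\mathfrak{t}$.
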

\begin{proof}
Choose $\epsilon >0$ such that the exponential map $\exp: \mathfrak{g} \rightarrow G$ is a diffeomorphism from the open $G$-invariant neighbourhood $U=\{Y \in \mathfrak{g} \mid |Y|<\epsilon \}$ onto an open neighbourhood $V \subset G$ of $e_G$. Let $H \supsetneq \textbf{T}$ be an isotropy group for the action of $G$ on $G \times \mathfrak{g}$. For each $$(g,Y) \in (G \times \mathfrak{g})^H = G^H \times \mathfrak{g}^H,$$ we consider the chart 
\begin{align*}
(\exp^{-1} \circ L_g^{-1}) \times \text{id} : gV \times \mathfrak{g} \rightarrow U \times \mathfrak{g}.
\end{align*}
Because $g$ is fixed under $H$, this diffeomorphism intertwines the $H$-action. Consequently, $\exp^{-1} \circ L_g^{-1}$ maps $(gV \times \mathfrak{g}) \cap (G \times \mathfrak{g})^H$ onto
\begin{align*}
U^H \times \mathfrak{g}^H \subset \mathfrak{t} \times \mathfrak{t} \subset \mathfrak{g} \times \mathfrak{g}.
\end{align*}
Choose an orthonormal basis $\{e_i\}_{i=1}^t$ of $\mathfrak{t}$ such that $\{e_1, \dots , e_{t-k}\}$ spans $\mathfrak{g}^H \subset \mathfrak{t}$, where $k \geq 1$. By re-ordering the coordinates induced by the orthonormal basis $\{e_1,\dots, e_t\}$, the subset $(G \times \mathfrak{g})^H\cap   ( (gV  \cap \textbf{T}) \times \mathfrak{t})$ is mapped onto 
\begin{align*}
O^H:=\{ (x_1, \dots , x_{2t}) \in (U \cap \mathfrak{t}) \times \mathfrak{t} \mid x_{2t -2k +1} = \cdots x_{2t} = 0\} \subset \mathfrak{t} \times \mathfrak{t}
\end{align*}
 under the coordinate chart 
\begin{align*}
(\exp^{-1} \circ L_g^{-1}) \times \text{id} : (gV  \cap \textbf{T}) \times \mathfrak{t} \rightarrow (U \cap \mathfrak{t}) \times \mathfrak{t}
\end{align*}
for $\textbf{T} \times \mathfrak{t}$.  Under this chart the K\"ahler structure on $\textbf{T} \times \mathfrak{t}$ corresponds to the standard K\"ahler structure on $O:=(U \cap \mathfrak{t}) \times \mathfrak{t} \subset \mathbb{R}^{2t}$, because the differential of $\exp: \mathfrak{t} \rightarrow \textbf{T}$ is trivial when the tangent spaces on $\textbf{T}$ are identified with $\mathfrak{t}$ through left-trivialisation.
Because $G^H$ is compact, one can choose finitely many elements $g_l \in G^H$ such that $\{(g_l V \cap \textbf{T}) \times \mathfrak{t}\}$ covers $(G \times \mathfrak{g})^H = G^H \times \mathfrak{g}^H$. Then 
\begin{align*}
\mathcal{U} := \{(g_l V \cap \textbf{T} )\times \mathfrak{t}\} \cup \{\textbf{T} \times \mathfrak{t} \setminus (G \times \mathfrak{g})^H\}
\end{align*}
 is a finite open cover of $\textbf{T} \times \mathfrak{t}$. Consider the chart 
\begin{align*}
(\exp^{-1} \circ L_{g_l^{-1}}) \times \text{id}: (g_l V \cap \textbf{T}) \times \mathfrak{t} \rightarrow O
\end{align*}
for some fixed $l$.
Let $f \in \Gamma_c^\infty(O,\mathbb{E})$. When $O$ is regarded as an open subset of $\mathbb{R}^{2t}$, $f$ can be extended (by zero) to a section in $\Gamma^\infty_c(\mathbb{R}^{2t},\mathbb{E})$. Recall that the chart $(\exp^{-1} \circ L_{g^{-1}}) \times \text{id}$ maps the K\"ahler structure on $\textbf{T} \times \mathfrak{t}$ to the standard K\"ahler structure on $O=(U \cap \mathfrak{t}) \times \mathfrak{t}$. Therefore, the operator $D$ on $\Gamma^\infty_c(O,\mathbb{E})$ is just the restriction of the ordinary Dolbeault--Dirac operator $\widetilde{D}$ on $\mathbb{R}^{2t}$ to $O$. Note that for compactly supported sections on $O$ the graph norm with respect to $D$ is the same as the graph norm with respect to $\widetilde{D}$, as the operator $\widetilde{D}$ is local. 

By Corollary \ref{cor:denseDD}, a section $s \in \Gamma_c^\infty(O,\mathbb{E})$ can be approximated in the graph norm of $\widetilde{D}$ by a sequence $(s_m)_m \in \Gamma_c^\infty(\mathbb{R}^{2t} \setminus \mathbb{R}^{2t-2k},\mathbb{E})$. Let $\psi: \mathbb{R}^{2t} \rightarrow [0,1]$ be a smooth function with compact support contained in $O$ such that $\psi \equiv1$ on $\text{supp }s$. By Lemma \ref{lem:samesupportconvergence} below, $\psi s_m \rightarrow s$ in  the graph norm of $\widetilde{D}$. But 
\begin{align*}
\text{supp}(\psi s_m) \subset \text{supp}(\psi) \cap \text{supp}(s_m) \subset O \cap (\mathbb{R}^{2t} \setminus \mathbb{R}^{2t-2k}),
\end{align*}
 so that $\psi s_m \in \Gamma^\infty_c(O \setminus O^H, \mathbb{E})$ and $\psi s_m \rightarrow s$ in the graph norm of $D$. We have now proved that, for arbitrary $l$, $\Gamma^\infty_c( (g_l V \cap \textbf{T}) \times \mathfrak{t} \setminus (G \times \mathfrak{g})^H,\mathbb{E})$ is dense in $\Gamma^\infty_c((g_l V \cap \textbf{T}) \times \mathfrak{t},\mathbb{E})$.

Suppose now that $s \in \Gamma_c^\infty(\textbf{T} \times \mathfrak{t},\mathbb{E})$. Let $\{\psi_{U_i}\}_{U_i \in \mathcal{U}}$ be a partition of unity subordinate to the finite cover $\mathcal{U}$. The supports of the $\psi_{U_i}$'s are not compact, but the support of each $\psi_{U_i} s$ is, as $\text{supp}(\psi_{U_i} s) \subset \text{supp}(\psi_{U_i}) \cap \text{supp}(s)$ is a closed subset of the compact set $\text{supp}(s)$. Moreover, $\text{supp}(\psi_{U_i} s)$ is contained in $U_i$. By the previous paragraphs, each $\psi_{U_i} s$ can be approximated in the graph norm of $D$ by a sequence $(s_{m,i})_m \in \Gamma^\infty_c( \textbf{T} \times \mathfrak{t}, \mathbb{E})$ such that $\text{supp}(s_{m,i}) \subset U_i \setminus (U_i \cap (G \times \mathfrak{g})^H)$. 

Then $\sum_i s_{m,i} \rightarrow \sum_i \psi_{U_i} s = s$ as $m \rightarrow \infty$. Since the sum is finite, $\text{supp }(\sum_i s_{m,i})$ is compact and 
\begin{align*}
\text{supp  }\left(\sum_i s_{m,i} \right) \subset \bigcup_i \left( U_i \setminus (U_i \cap (G \times \mathfrak{g})^H) \right) = (\textbf{T} \times \mathfrak{t}) \setminus (G \times \mathfrak{g})^H
\end{align*}
 for each $m$. Thus  $\Gamma^\infty_c((\textbf{T} \times \mathfrak{t}) \setminus (G \times \mathfrak{g})^H,\mathbb{E})$ is dense in $\Gamma^\infty_c(\textbf{T} \times \mathfrak{t},\mathbb{E})$.
\end{proof}
The following lemma was used in the above proof.
\begin{lem}
\label{lem:samesupportconvergence}
Let $M$ be an oriented Riemannian manifold. Suppose that $(s_m)_m$ is a sequence in $\Gamma^\infty_c(M,E)$ such that $s_m \rightarrow s$ in $\Gamma^\infty_c(M,E)$ with respect to the graph norm of a first-order differential operator $D$ on a hermitian vector bundle $E$, and suppose that $\psi: M \rightarrow [0,1]$ is a compactly supported smooth function such that $\psi \equiv 1$ on $\text{supp }s$. Then $\psi s_m \rightarrow s$ with respect to the graph norm of $D$.
\end{lem}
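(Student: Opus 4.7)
The plan is to split the graph norm
\[
\|\psi s_m - s\|_{D} = \|\psi s_m - s\|_{L^2} + \|D(\psi s_m) - Ds\|_{L^2}
\]
into its two pieces and show that each tends to zero, exploiting two elementary observations: first, because $\psi\equiv 1$ on $\mathrm{supp}(s)$ and $s$ vanishes off $\mathrm{supp}(s)$, we have $\psi s = s$ identically on $M$; second, since $D$ is a first-order differential operator, it obeys the Leibniz-type identity
\[
D(\psi t) = \psi\,Dt + \sigma_D(d\psi)\,t,
\]
where $\sigma_D$ is the principal symbol of $D$, viewed as a smooth bundle homomorphism $T^*M\otimes E\to E$. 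This identity can be verified in any local trivialisation by writing $D = \sum_i A_i\partial_i + B$ and applying the ordinary Leibniz rule to $\partial_i(\psi t)$.

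For the $L^2$-piece, I would write $\psi s_m - s = \psi s_m - \psi s = \psi(s_m - s)$ and estimate
\[
\|\psi(s_m - s)\|_{L^2} \le \|\psi\|_\infty\,\|s_m - s\|_{L^2} \le \|s_m - s\|_{L^2},
\]
which goes to zero by hypothesis. For the $D$-piece, using $D(\psi s) = Ds$ together with the Leibniz identity gives
\[
D(\psi s_m) - Ds = D(\psi s_m) - D(\psi s) = \psi\bigl(D s_m - D s\bigr) + \sigma_D(d\psi)\,(s_m - s).
\]
The first summand is bounded in $L^2$-norm by $\|Ds_m - Ds\|_{L^2}$, which tends to zero. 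The second summand is bounded by $C\,\|s_m - s\|_{L^2}$, where $C := \sup_{x\in M}\|\sigma_D(d\psi)_x\|_{\End(E_x)}$ is finite because $d\psi$ has compact support (contained in $\mathrm{supp}(\psi)$) and $\sigma_D$ is continuous; hence this term also tends to zero. Adding the two estimates gives $\psi s_m \to s$ in the graph norm of $D$.

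There is no real obstacle here: the only point to double-check is that the Leibniz identity holds for $D$ regardless of its order-zero part (it does, since multiplication by the smooth function $\psi$ commutes with the zeroth-order part), and that $\sigma_D(d\psi)$ is globally bounded in operator norm (which follows from compactness of $\mathrm{supp}(\psi)$ and smoothness of $\sigma_D$ and $\psi$). Note that nothing here requires the $s_m$ to have any particular support properties, so the lemma is purely a soft statement about cutoff by a function that is constant on the support of the limit.
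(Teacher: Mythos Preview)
Your proof is correct and follows essentially the same approach as the paper's: both use $\psi s = s$ to write $\psi s_m - s = \psi(s_m - s)$, then control $D(\psi(s_m-s))$ via the Leibniz rule, with your $\sigma_D(d\psi)$ being exactly the paper's commutator $[D,\psi]$, which is bounded since $D$ is first-order and $\psi$ has compact support. The only cosmetic difference is that the paper uses the squared graph norm $\|\cdot\|^2 + \|D\cdot\|^2$ rather than the sum version, but these are equivalent.
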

\begin{proof}
We verify that
\begin{align*}
 \| s - \psi s_m  \|^2_D &= \| \psi(s - s_m)\|^2_D = \| \psi (s - s_m) \|^2 + \| D (\psi(s - s_m))\|^2 \\ &\leq \|s - s_m\|^2 + \| [D,\psi](s-s_m) + \psi D(s - s_m)\|^2 \\ &\leq \|s- s_m\|^2 + \left( \|[D,\psi]\| \|s- s_m\|  + \| D(s - s_m) \|\right)^2 ,
\end{align*}
 which goes to zero as $m \rightarrow \infty$, because $[D,\psi]$ is a bounded operator and $s_m \rightarrow s$ with respect to $\|\cdot\|_D$. 
\end{proof}

The next lemma will be applied to $P$ (see text after Definition \ref{def:QMWQ})
to prove that $\Gamma^\infty_c( (G \times \mathfrak{g})_\textbf{T},\mathbb{E})$ is dense in $\Gamma^\infty_c( \textbf{T} \times \mathfrak{t},\mathbb{E})$ with respect to the graph norm of $D$. 
\begin{lem}
\label{lem:removalmultiplestrata}
 Let $M$ be an oriented Riemannian manifold and let $D$ be a first-order differential operator on a hermitian vector bundle $E$ over $M$. Let $A_{1,2}$ be two closed subsets of $M$ such that $\Gamma^\infty_c(M \setminus A_i,E)$, ($i=1,2$), is dense in $\Gamma^\infty_c(M,E)$ with respect to the graph norm of $D$. Then $\Gamma^\infty_c(M \setminus (A_1 \cup A_2),E)$ is dense in $\Gamma_c^\infty(M,E)$ in this graph norm as well. 
\end{lem}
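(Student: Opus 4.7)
The plan is to iterate the two density hypotheses, using a carefully chosen cutoff function to keep track of supports. Given $s \in \Gamma^\infty_c(M,E)$, I first apply the density hypothesis for $A_1$ to find a sequence $(s_m)_m \subset \Gamma^\infty_c(M \setminus A_1, E)$ with $s_m \to s$ in the graph norm of $D$. Each $s_m$ has compact support inside the open set $M \setminus A_1$, so I can pick a compactly supported smooth cutoff $\psi_m : M \to [0,1]$ such that $\psi_m \equiv 1$ on $\operatorname{supp}(s_m)$ and $\operatorname{supp}(\psi_m) \cap A_1 = \emptyset$.

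Next I apply the density hypothesis for $A_2$ to each $s_m$ separately, producing, for every $m$, a sequence $(t_{m,k})_k \subset \Gamma^\infty_c(M \setminus A_2, E)$ with $t_{m,k} \to s_m$ in the graph norm as $k \to \infty$. By Lemma \ref{lem:samesupportconvergence} (with $\psi_m$ playing the role of $\psi$ and $s_m$ the role of the limit), $\psi_m t_{m,k} \to s_m$ in the graph norm as $k \to \infty$. The point of introducing $\psi_m$ is that
\begin{align*}
\operatorname{supp}(\psi_m t_{m,k}) \subset \operatorname{supp}(\psi_m) \cap \operatorname{supp}(t_{m,k}) \subset (M \setminus A_1) \cap (M \setminus A_2) = M \setminus (A_1 \cup A_2),
\end{align*}
so $\psi_m t_{m,k} \in \Gamma^\infty_c(M \setminus (A_1 \cup A_2), E)$.

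Finally, a standard diagonal argument closes the proof: for each $m$ choose $k(m)$ so large that $\|\psi_m t_{m,k(m)} - s_m\|_D < 1/m$, and note that $\psi_m t_{m,k(m)} \to s$ in the graph norm, since $s_m \to s$ by construction. The only subtle point --- and really the sole reason the statement is not completely immediate --- is the need for the cutoff $\psi_m$: without it, the sections approximating $s_m$ away from $A_2$ could reintroduce support near $A_1$, losing what was gained in the first step. Lemma \ref{lem:samesupportconvergence} is precisely what guarantees that multiplying by $\psi_m$ does not spoil convergence in the graph norm, and this is where essentially all the analytic content of the argument is concentrated.
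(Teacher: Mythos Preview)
Your argument is correct and follows essentially the same approach as the paper: both proofs use the density hypotheses in turn together with a smooth cutoff and Lemma~\ref{lem:samesupportconvergence} to keep the approximants supported away from $A_1 \cup A_2$. The paper's version is marginally more streamlined --- it first reduces to showing that $\Gamma^\infty_c(M \setminus (A_1 \cup A_2),E)$ is dense in $\Gamma^\infty_c(M \setminus A_1,E)$, then uses a \emph{single} cutoff $\psi$ (depending only on the given $s \in \Gamma^\infty_c(M \setminus A_1,E)$) rather than a family $\psi_m$, thereby avoiding the diagonal argument --- but the underlying idea is identical.
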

\begin{proof}
It suffices to show that $\Gamma^\infty_c(M \setminus (A_1 \cup A_2),E)$ is dense in $\Gamma^\infty_c(M \setminus A_1,E)$. Let $s \in \Gamma^\infty_c(M \setminus A_1,E)$ be given. Since $\Gamma^\infty_c(M \setminus A_2,E)$ is dense in $\Gamma^\infty_c(M,E)$ with respect to the graph norm of $D$, there exists a sequence $(s_m)_m$ in $\Gamma_c^\infty(M \setminus A_2,E)$ such that $s_m \rightarrow s$ in this norm. Now, let $\psi: M \rightarrow [0,1]$ be a function that has compact support contained in $M \setminus A_1$ and  is equal to $1$ on $\text{supp }s$. By Lemma \ref{lem:samesupportconvergence},
we see that  $\psi s_m \rightarrow s$ in the graph norm of $D$. But 
\begin{align*}
\text{supp }(\psi s_m) \subset \text{supp }\psi \cap  \text{supp }s_m \subset (M \setminus A_1) \cap (M\setminus A_2)  = M \setminus (A_1 \cup A_2)
\end{align*}
for each $m$.
\end{proof}
\begin{prp}
\label{prp:princstratess}
The space $\Gamma^\infty_c((G \times \mathfrak{g})_\textbf{T},\mathbb{E})$ is dense in $\Gamma^\infty_c(\textbf{T} \times \mathfrak{t},\mathbb{E})$ with respect to the graph norm of $D$. Moreover, $D$ is essentially self-adjoint on the domain $\Gamma^\infty_c((G \times \mathfrak{g})_\textbf{T},\mathbb{E})$.
\end{prp}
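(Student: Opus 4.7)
The plan is to bootstrap the single-stratum density of Proposition \ref{prp:denseMH} to a global density statement, deduce essential self-adjointness of $D$ on the ambient manifold $\textbf{T}\times\mathfrak{t}$ by a standard argument, and then transfer essential self-adjointness to the principal stratum via an equality of closures.

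For the density, recall from the discussion immediately preceding the proposition that $P=\bigcup_{H\supsetneq\textbf{T}}(G\times\mathfrak{g})^H$ is a \emph{finite} union of closed subsets of $\textbf{T}\times\mathfrak{t}$, thanks to the cited finiteness of closed subgroups of $G$ containing a maximal torus. Each individual set $(G\times\mathfrak{g})^H$ with $H\supsetneq\textbf{T}$ can be removed without altering the graph-norm closure, by Proposition \ref{prp:denseMH}. Enumerating these closed sets as $A_1,\ldots,A_N$, I induct on $N$: at step $i+1$, Lemma \ref{lem:removalmultiplestrata} applied with the two closed sets $\bigcup_{j\leq i}A_j$ (removable by the inductive hypothesis) and $A_{i+1}$ (removable directly by Proposition \ref{prp:denseMH}) yields removability of $\bigcup_{j\leq i+1}A_j$. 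After $N$ steps, $\Gamma_c^\infty((G\times\mathfrak{g})_\textbf{T},\mathbb{E})=\Gamma_c^\infty(\textbf{T}\times\mathfrak{t}\setminus P,\mathbb{E})$ is dense in $\Gamma_c^\infty(\textbf{T}\times\mathfrak{t},\mathbb{E})$ with respect to the graph norm of $D$, which is the first assertion.

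For essential self-adjointness, the K\"ahler structure that $\textbf{T}\times\mathfrak{t}$ inherits from $G\times\mathfrak{g}$ was already identified, in the paragraphs preceding Proposition \ref{prp:restricttotorus}, with the standard product K\"ahler structure on $T^*\textbf{T}$; the corresponding Riemannian metric is thus the product of the flat bi-invariant metric on the torus $\textbf{T}$ and the Euclidean metric on $\mathfrak{t}$, which is manifestly geodesically complete. The standard essential self-adjointness results \cite{Che73,Fri00,Wolf} therefore give essential self-adjointness of $D$ on $\Gamma_c^\infty(\textbf{T}\times\mathfrak{t},\mathbb{E})$. Writing $D_1$ and $D_2$ for $D$ with domains $\Gamma_c^\infty((G\times\mathfrak{g})_\textbf{T},\mathbb{E})$ and $\Gamma_c^\infty(\textbf{T}\times\mathfrak{t},\mathbb{E})$ respectively, the inclusion $\overline{D_1}\subseteq\overline{D_2}$ is automatic; conversely, a diagonal extraction based on the density just proved produces, for every $s\in\mathcal{D}(\overline{D_2})$, an approximating sequence in the domain of $D_1$ converging to $s$ in the graph norm, so $\overline{D_1}=\overline{D_2}$. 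Since $\overline{D_2}$ is self-adjoint by the previous sentence, so is $\overline{D_1}$, which is the second assertion.

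The real obstacle has already been absorbed into Proposition \ref{prp:denseMH}, whose proof rested on the codimension-at-least-$2$ input from Corollary \ref{cor:denseDD}; the remaining work here is essentially organisational, namely the finite induction combining the individual excisions and the standard argument transferring essential self-adjointness across equal closures.
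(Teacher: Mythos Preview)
Your proof is correct and follows essentially the same route as the paper: finite induction using Proposition~\ref{prp:denseMH} and Lemma~\ref{lem:removalmultiplestrata} for density, then geodesic completeness of $\textbf{T}\times\mathfrak{t}$ for essential self-adjointness on the full space, followed by transfer to the smaller domain via equality of closures. The only cosmetic difference is that the paper invokes the finite-propagation-speed version of the essential self-adjointness criterion \cite[Proposition 10.2.11]{HR00} rather than \cite{Che73,Fri00,Wolf}.
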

\begin{proof}
The union defining $P$ is finite. Proposition \ref{prp:denseMH} and Lemma \ref{lem:removalmultiplestrata} now imply that $\Gamma_c^\infty( (\textbf{T} \times \mathfrak{t}) \setminus P,\mathbb{E}) = \Gamma^\infty_c((G \times \mathfrak{g})_\textbf{T},\mathbb{E})$ is dense in $\Gamma^\infty_c(\textbf{T} \times \mathfrak{t},\mathbb{E})$ in the graph norm of $D$.
Concerning essential self-adjointness, first note that $D$ is essentially self-adjoint on $\Gamma_c^\infty(\textbf{T} \times \mathfrak{t},\mathbb{E})$, as $\textbf{T} \times \mathfrak{t}$ is geodesically complete and $D$ has finite propagation speed (see \cite[Proposition 10.2.11]{HR00}). Since $\Gamma^\infty_c((G\times \mathfrak{g})_\textbf{T},\mathbb{E})$ is dense in $\Gamma^\infty_c(\textbf{T}\times \mathfrak{t},\mathbb{E})$ with respect to the graph norm of $D$, $D$ is also essentially self-adjoint on the domain $\Gamma^\infty_c((G\times \mathfrak{g})_\textbf{T},\mathbb{E})$.
\end{proof}

Proposition \ref{prp:princstratess} deals with the ordinary Dolbeault--Dirac operator on $\textbf{T} \times \mathfrak{t}$. However, quantization is defined in terms of the \emph{twisted} Dolbeault--Dirac operator.  Recall that for cotangent bundles the twisting line bundle is $L$ is the trivial hermitian line bundle with hermitian connection $\nabla^L = d + 2\pi i \theta$, where $\theta$ is the fundamental $1$-form. Therefore, the twisted Dolbeault--Dirac operator differs from the untwisted one by a zeroth-order differential operator only.

\begin{prp} 
\label{prp:cotangent_torus}
The domain $\Gamma^\infty_c( (G \times \mathfrak{g})_\textbf{T},\mathbb{E} \otimes L)$ is dense in $\Gamma_c^\infty(\textbf{T} \times \mathfrak{t},\mathbb{E} \otimes L)$ in the graph norm of $D^L$. In particular, the twisted Dolbeault--Dirac operator $D^L$ is essentially self-adjoint on the domain $\Gamma^\infty_c((G \times \mathfrak{g})_\textbf{T},\mathbb{E} \otimes L)$.
\end{prp}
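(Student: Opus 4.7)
The plan is to reduce Proposition~\ref{prp:cotangent_torus} to the untwisted result of Proposition~\ref{prp:princstratess} by exploiting that the pre-quantum line bundle $L$ is globally trivial and that twisting contributes only a zeroth-order perturbation. Under the tautological trivialisation $L = (T^*G)\times\mathbb{C}$, sections of $\mathbb{E}\otimes L$ are identified with sections of $\mathbb{E}$, and the defining formula $\ol{\partial}^L = \ol{\partial} + 2\pi i\,\pi^{(0,1)}\theta\wedge(\cdot)$, together with its formal adjoint, shows that $D^L - D$ is a smooth bundle endomorphism of $\mathbb{E}$. This endomorphism fails to be globally bounded on $\mathbf{T}\times\mathfrak{t}$ because $\theta$ grows linearly along $\mathfrak{t}$, but its pointwise operator norm is bounded on every compact subset, which is all that will be needed.

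Given $s \in \Gamma^\infty_c(\mathbf{T}\times\mathfrak{t},\mathbb{E}\otimes L)$, I would first choose a compact set $K\subset\mathbf{T}\times\mathfrak{t}$ whose interior contains $\mathrm{supp}(s)$, together with a cutoff $\psi \in C^\infty_c(\mathbf{T}\times\mathfrak{t})$ with $\psi\equiv 1$ on $\mathrm{supp}(s)$ and $\mathrm{supp}(\psi)\subset K$. Applying Proposition~\ref{prp:princstratess} produces a sequence $(s_m)\subset\Gamma^\infty_c((G\times\mathfrak{g})_{\mathbf{T}},\mathbb{E})$ converging to $s$ in the graph norm of the untwisted operator $D$. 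Replacing $s_m$ by $\psi s_m$, which still lies in $\Gamma^\infty_c((G\times\mathfrak{g})_{\mathbf{T}},\mathbb{E})$ and still converges to $s$ in the graph norm of $D$ by Lemma~\ref{lem:samesupportconvergence}, forces all $s_m$ to be supported in the fixed compact set $K$. On $K$ the bundle endomorphism $D^L - D$ has some finite sup-norm bound $C_K$, so
\begin{align*}
\|D^L s_m - D^L s\| \leq \|D s_m - D s\| + C_K \|s_m - s\|,
\end{align*}
and the right-hand side tends to $0$ as $m\to\infty$. Hence $\psi s_m \to s$ in the graph norm of $D^L$, establishing the claimed density.

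For essential self-adjointness, I would observe that $\mathbf{T}\times\mathfrak{t}$, being isometric to a Riemannian product of a flat torus and a Euclidean space, is geodesically complete, so by the standard finite-propagation-speed criterion cited in the proof of Proposition~\ref{prp:princstratess}, $D^L$ is essentially self-adjoint on $\Gamma^\infty_c(\mathbf{T}\times\mathfrak{t},\mathbb{E}\otimes L)$. Since $\Gamma^\infty_c((G\times\mathfrak{g})_{\mathbf{T}},\mathbb{E}\otimes L)$ is dense in this larger core in the graph norm of $D^L$, the two closures of $D^L$ coincide, and essential self-adjointness on the smaller domain follows at once. The main technical obstacle in the plan is the control of supports during the approximation so that the (non-uniformly bounded) perturbation $D^L - D$ can be tamed; once Lemma~\ref{lem:samesupportconvergence} is used to confine the approximating sequence to a fixed compact set, the zeroth-order nature of the twisting makes the remaining step essentially mechanical.
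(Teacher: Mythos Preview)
Your proposal is correct and follows essentially the same approach as the paper: identify $\mathbb{E}\otimes L$ with $\mathbb{E}$ via the trivialisation of $L$, note that $D^L-D$ is a zeroth-order bundle endomorphism, use Proposition~\ref{prp:princstratess} together with Lemma~\ref{lem:samesupportconvergence} to confine the approximating sequence to a fixed compact set on which the perturbation is bounded, and invoke geodesic completeness of $\mathbf{T}\times\mathfrak{t}$ for essential self-adjointness. Your write-up is in fact slightly more explicit than the paper's (you spell out why the perturbation need not be globally bounded and why this is harmless), but the argument is the same.
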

\begin{proof}
Since $L$ is trivial and has the standard hermitian structure, one can identify $\mathbb{E} \otimes L$ and $\mathbb{E}$ as hermitian vector bundles. As a differential operator on $\mathbb{E}$, the difference $D^L - D$ is of order zero. In particular, $D^L$ has the same principal symbol as $D$, so that $D^L$ still has finite propagation speed and is therefore still essentially self-adjoint on $\Gamma_c^\infty(\textbf{T} \times \mathfrak{t},\mathbb{E})$. It remains to show that $\Gamma_c^\infty((G \times \mathfrak{g})_\textbf{T},\mathbb{E})$ is dense in $\Gamma^\infty_c(\textbf{T} \times \mathfrak{t},\mathbb{E})$ in the graph norm of $D^L$. For simplicity we denote $D^L = D + B$, where $B$ is an element of $\Gamma^\infty(\textbf{T} \times \mathfrak{t}, \End(\mathbb{E}))$. 

Suppose that $s \in \Gamma_c^\infty(\textbf{T} \times \mathfrak{t},\mathbb{E})$. By Proposition \ref{prp:princstratess} there exists a sequence $(s_m)_m$ in $\Gamma_c^\infty((G \times \mathfrak{g})_\textbf{T},\mathbb{E})$ such that $s_m \rightarrow s$ in the graph norm of $D$. According to Lemma \ref{lem:samesupportconvergence} one can assume that there exists a compact set $K$ such that $(\cup_m \text{supp }s_m) \cup \text{supp }s \subset K$. Consequently,
\begin{align*}
 \| (D+B)(s_m -s)\| &\leq \|D(s_m -s)\| + \|B(s_m - s)\| \\ &\leq \| D(s_m -s)\| + \sup_{x \in K}\{ \|B(x)\| \} \|s_m -s \|,
\end{align*}
which approaches $0$ as $m$ goes to infinity. So $\Gamma^\infty_c(G \times \mathfrak{g})_\textbf{T},\mathbb{E})$ is dense in $\Gamma_c^\infty(\textbf{T} \times \mathfrak{t},\mathbb{E})$ in the graph norm of $D^L=D+B$.
\end{proof}

As an immediate consequence of Proposition \ref{prp:cotangent_torus}, one has
\begin{cor}
\label{cor:cotangent_torus}
The inclusion 
\begin{align*}
 \iota: \Gamma^\infty_c((G\times\mathfrak{g})_\textbf{T}, \mathbb{E} \otimes L) \rightarrow \Gamma^\infty_c(\textbf{T} \times \mathfrak{t}, \mathbb{E} \otimes L)
\end{align*}
extends to an identification of 
\begin{align*}
 \iota: L^2((G \times \mathfrak{g})_\textbf{T},\mathbb{E} \otimes L) \rightarrow L^2(\textbf{T} \otimes \mathfrak{t}, \mathbb{E} \otimes L).
\end{align*}
Moreover, if $D^L_{princ}$ denotes the twisted Dolbeault--Dirac operator on $(G \times \mathfrak{g})_\textbf{T}$  and $D^L$ is the twisted Dolbeault--Dirac operator on $\textbf{T} \times \mathfrak{t}$, then 
\begin{align*}
 \ol{D}^L_{princ} = \ol{D}^L,
\end{align*}
as operators on $L^2(\textbf{T} \times \mathfrak{t}, \mathbb{E} \otimes L)$, where the identification of the Hilbert spaces through $\iota$ is implicit. In particular,
\begin{align*}
 \ker \ol{D}_{princ} = \ker \ol{D}^L.
\end{align*}
\end{cor}

To study the Dolbeault--Dirac quantization on $(G \times \mathfrak{g})_\textbf{T} / W(G,\textbf{T})$ we first need some other facts. The proof of the following lemma is straightforward. 
\begin{lem}
\label{lem:essentiallyselfadjointprojection}
 Let $D: \mathcal{D}(D) \rightarrow \mathcal{H}$ be a closable operator, and denote the closure of $D$ by $\ol{D}$. Let $p \in \mathcal{B}(\mathcal{H})$ be a projection such that $p (\mathcal{D}(D)) \subset \mathcal{D}(D)$ and $pD = Dp$ on $\mathcal{D}(D)$. Then $D$ restricts to a densely defined closable operator $D_p: p(\mathcal{D}(D)) \rightarrow p\mathcal{H}$ on $p\mathcal{H}$. 
Moreover, $p\ol{D}= \ol{D}p$ on $\mathcal{H}$ and $\ol{D_p} = \ol{D}|_{p\mathcal{H}}$ on $p\mathcal{H}$. If $D$ is essentially self-adjoint, then so is $D_p$.
\end{lem}

If a Lie group acts on an oriented Riemannian manifold, we always assume that the action preserves the metric as well as the orientation.
\begin{prp}
\label{prp:esap}
 Let $\Gamma$ be a \emph{finite} group acting on an arbitrary oriented Riemannian manifold $M$. Suppose that $D$ is a symmetric $\Gamma$-invariant differential operator on a $\Gamma$-equivariant hermitian vector bundle $E$. Then 
\begin{align*}
\ol{D}|_{L^2(M,E)^\Gamma} = \ol{D|_{L^2(M,E)^\Gamma}}.
\end{align*}
Moreover, if $D$ is essentially self-adjoint on the domain $\Gamma_c^\infty(M,E)$, then $D|_{L^2(M,E)^\Gamma}$ is essentially self-adjoint on the domain $\Gamma_c^\infty(M,E)^\Gamma$.
\end{prp}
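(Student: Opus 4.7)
The plan is to reduce Proposition \ref{prp:esap} to Lemma \ref{lem:essentiallyselfadjointprojection} by exhibiting the orthogonal projection onto $L^2(M,E)^\Gamma$ and showing it commutes with $D$ on $\Gamma^\infty_c(M,E)$. Since $\Gamma$ acts by orientation- and metric-preserving diffeomorphisms and lifts to unitary operators on $E$, the induced action of $\Gamma$ on $L^2(M,E)$ is unitary. Because $\Gamma$ is \emph{finite}, the averaging operator
\begin{equation*}
p = \frac{1}{|\Gamma|} \sum_{\gamma \in \Gamma} \gamma
\end{equation*}
is a bounded orthogonal projection on $L^2(M,E)$ whose range is precisely $L^2(M,E)^\Gamma$.

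Next, I would verify the two hypotheses of Lemma \ref{lem:essentiallyselfadjointprojection} for the dense domain $\mathcal{D}(D)=\Gamma^\infty_c(M,E)$. First, $p$ preserves $\Gamma^\infty_c(M,E)$: for $s \in \Gamma^\infty_c(M,E)$, each $\gamma \cdot s$ is smooth with support $\gamma\cdot\mathrm{supp}(s)$, and $ps$ is a finite sum of such sections (so its support lies in the compact set $\bigcup_\gamma \gamma \cdot \mathrm{supp}(s)$); finiteness of $\Gamma$ is essential here. Second, $pD = Dp$ on $\Gamma^\infty_c(M,E)$: $\Gamma$-invariance of $D$ means $\gamma D \gamma^{-1}=D$ for each $\gamma$, hence $\gamma D = D \gamma$ on $\Gamma^\infty_c(M,E)$, and averaging over $\gamma$ gives $pD=Dp$. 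Lemma \ref{lem:essentiallyselfadjointprojection} then delivers two conclusions: $p \ol{D} = \ol{D} p$, so $\ol{D}$ restricts to $p L^2(M,E)=L^2(M,E)^\Gamma$, and the closure of the restriction $D_p := D|_{p\mathcal{D}(D)}$ coincides with $\ol{D}|_{L^2(M,E)^\Gamma}$.

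To finish, I need to identify $p\mathcal{D}(D) = p\,\Gamma^\infty_c(M,E)$ with $\Gamma^\infty_c(M,E)^\Gamma$. The inclusion $\subseteq$ is clear since $p$ lands in the $\Gamma$-invariants and we already showed it preserves $\Gamma^\infty_c(M,E)$. The reverse inclusion is immediate: any $s\in\Gamma^\infty_c(M,E)^\Gamma$ satisfies $ps=s$. So $\ol{D|_{\Gamma^\infty_c(M,E)^\Gamma}} = \ol{D_p} = \ol{D}|_{L^2(M,E)^\Gamma}$, which is the first assertion. For the second, if $D$ is essentially self-adjoint on $\Gamma^\infty_c(M,E)$, Lemma \ref{lem:essentiallyselfadjointprojection} also states that $D_p$ is essentially self-adjoint on its domain, i.e., $D|_{L^2(M,E)^\Gamma}$ is essentially self-adjoint on $\Gamma^\infty_c(M,E)^\Gamma$.

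No serious obstacle is expected, since Lemma \ref{lem:essentiallyselfadjointprojection} carries all the operator-theoretic weight; the only subtlety worth flagging explicitly is the finiteness of $\Gamma$, which is precisely what guarantees that $p$ is both bounded (a finite sum of unitaries) and support-preserving on compactly supported sections. Neither property would hold in general for an infinite (even compact Lie) group without invoking an integral construction and a separate density/smoothing argument.
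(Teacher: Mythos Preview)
Your proposal is correct and follows essentially the same approach as the paper: both define the averaging projection $p=\frac{1}{|\Gamma|}\sum_{\gamma\in\Gamma}\gamma$ onto $L^2(M,E)^\Gamma$, verify that $p(\Gamma^\infty_c(M,E))=\Gamma^\infty_c(M,E)^\Gamma$ and that $p$ commutes with $D$, and then invoke Lemma~\ref{lem:essentiallyselfadjointprojection}. Your write-up is simply more explicit about why finiteness of $\Gamma$ is needed (boundedness of $p$ and preservation of compact supports), which the paper leaves implicit.
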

\begin{proof}
 The inclusion of the closed subspace $L^2(M,E)^\Gamma$ into $L^2(M,E)$ has a left-inverse $p: L^2(M,E) \rightarrow L^2(M,E)^\Gamma$, given by
\begin{align*}
ps(x) = \frac{1}{|\Gamma|} \sum_{g \in\Gamma}  g(s(g^{-1}x)), \quad (x \in M).
\end{align*}
The map $p$ is easily verified to be the projection onto $L^2(M,E)^\Gamma$. Furthermore, $p(\Gamma^\infty_c(M,E) ) = \Gamma^\infty_c(M,E)^\Gamma$. Since $D$ is assumed to be $\Gamma$-invariant, it commutes with the projection $p$. Now apply Lemma \ref{lem:essentiallyselfadjointprojection}.
\end{proof}

Suppose that $\Gamma$ acts freely on $M$ and write $\pi: M \rightarrow M/\Gamma$ for the quotient map. Let $E$ be a $\Gamma$-equivariant vector bundle over $M$. Then $E/\Gamma$ is a vector bundle on $M/\Gamma$ with the obvious projection map $E/ \Gamma \rightarrow M/ \Gamma$. Moreover, the map sending a section $s \in \Gamma(M,E)^\Gamma$ to the section $\widetilde{s} \in \Gamma^\infty(M/\Gamma, E/\Gamma)$ that is given by
\begin{align*}
 \widetilde{s}([x]) := [s(x)], \quad (x \in M),
\end{align*}
 is an isomorphism of $C^\infty(M/\Gamma) \cong C^\infty(M)^\Gamma$-modules.

Because $\Gamma$ is discrete, there is natural identification of $(TM)/\Gamma$ with $T(M/\Gamma)$. Moreover, if $M$ is endowed with a $\Gamma$-invariant K\"ahler structure, then $M/\Gamma$ inherits a K\"ahler structure from $M$, and if $L$ is a $\Gamma$-equivariant pre-quantum line bundle over $M$ for the symplectic structure $\omega$ on $M$, then $L/\Gamma$ is a pre-quantum line bundle over $M/ \Gamma$ (with connection $\nabla^{L/\Gamma}$ induced by the $\Gamma$-invariant connection $\nabla^L$ on $L$) for the inherited symplectic structure $\omega_\Gamma$ on $M/\Gamma$.  In fact, the pull-back of $\omega_\Gamma$ along $\pi$ is $\omega$, and the pull-back of $(L/\Gamma, \nabla^{L/\Gamma})$ is $(L,\nabla^L)$.

Because $\pi: M \rightarrow M / \Gamma$ is a covering map, $M$ can be covered by open subsets $U_i$ on which $\pi$ is a diffeomorphism onto the open subset $\pi(U_i)$ in $M / \Gamma$. The map $\pi$ then identifies the K\"ahler structure on $U_i$ with the one on $\pi(U_i)$. It follows that the isomorphism
\begin{align*}
\Gamma_c^\infty(M,\mathbb{E}\otimes L)^\Gamma \rightarrow \Gamma_c^\infty(M/\Gamma, \mathbb{E}_{M/\Gamma} \otimes L/\Gamma), \quad  s \mapsto \widetilde{s}
\end{align*}
intertwines the Dolbeault--Dirac operators on $M$ and $M/\Gamma$.

Let $\varepsilon$ and $\widetilde{\varepsilon}$ be the Liouville measures on $M$ and $M/\Gamma$, respectively. Then 
\begin{align*}
 \int_M (\pi^*\tilde{f}) \varepsilon = |\Gamma| \int_{M/\Gamma} \tilde{f} \tilde{\varepsilon},
\end{align*}
for all $\widetilde{f} \in C^\infty(M/\Gamma)$. Therefore, for any $\Gamma$-equivariant hermitian vector bundle $E \rightarrow M$, the map $u: \Gamma^\infty_c(M,E)^\Gamma \rightarrow \Gamma^\infty_c(M/\Gamma,E/\Gamma)$ given by
\begin{align*}
u: s\mapsto |\Gamma|^{\frac{1}{2}} \widetilde{s}, \quad s \in \Gamma^\infty_c(M,E)^\Gamma
\end{align*}
is unitary. Taking $E = \mathbb{E} \otimes L$ and using Proposition \ref{prp:esap}, we now obtain:
\begin{prp}
\label{prp:discretequotient}
Let $\Gamma$ be a finite group acting \emph{freely} on a K\"ahler manifold $M$, such that the K\"ahler structure is $\Gamma$-invariant. Suppose that $(L,\nabla^L)$ is an equivariant pre-quantization. Then the map
\begin{align*}
u: L^2(M,\mathbb{E} \otimes L)^\Gamma \rightarrow L^2(M/\Gamma,\mathbb{E}_{M/\Gamma} \otimes L/\Gamma), \quad s^\Gamma \mapsto |\Gamma|^\frac{1}{2} \widetilde{s}
\end{align*}
is a unitary isomorphism that intertwines the Dolbeault--Dirac operators $D^L$ and $D^{L/\Gamma}$. Moreover,
with $\mathcal{H} = L^2(M,\mathbb{E} \otimes L)$ we have
\begin{align*}
( \ker \ol{D}^L_+)^\Gamma &= \ker ( ( \ol{D}^L_+ )|_{\mathcal{H}^\Gamma}) =   \ker\ol{(D^L_+)|_{\mathcal{H}^\Gamma}} \stackrel{u}{\cong} \ker \ol{D}^{L/\Gamma}_+, \\
( \ker \ol{D}^L_-)^\Gamma &= \ker (( \ol{D}^L_- )|_{\mathcal{H}^\Gamma} ) =   \ker \ol{(D^L_-)|_{\mathcal{H}^\Gamma}} \stackrel{u}{\cong} \ker \ol{D}^{L/\Gamma}_-.
\end{align*}
\end{prp}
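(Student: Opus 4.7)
The plan is to proceed in three stages: unitarity of $u$, intertwining with the Dolbeault--Dirac operators, and the kernel identifications. First I would establish that $u$ is unitary by invoking the integration formula $\int_M (\pi^*\tilde f)\varepsilon = |\Gamma|\int_{M/\Gamma}\tilde f\,\tilde\varepsilon$ recalled just before the proposition, applied to pointwise inner products of $\Gamma$-invariant sections: this shows that $s\mapsto |\Gamma|^{1/2}\tilde s$ is an isometry on $\Gamma^\infty_c(M,\mathbb{E}\otimes L)^\Gamma$, and surjectivity onto $\Gamma^\infty_c(M/\Gamma,\mathbb{E}_{M/\Gamma}\otimes L/\Gamma)$ is clear since every compactly supported smooth section on the quotient pulls back along $\pi$ to a $\Gamma$-invariant one on $M$. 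Extending by continuity gives the claimed unitary. The intertwining of the Dolbeault--Dirac operators on the smooth compactly supported cores is already remarked just before the proposition (because $\pi$ is a local diffeomorphism of K\"ahler manifolds identifying the pre-quantum bundles and their Chern connections), and by continuity $u$ intertwines the closures as well.

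For the kernel identifications, let $p:\mathcal{H}\to\mathcal{H}^\Gamma$ denote the averaging projection used in the proof of Proposition \ref{prp:esap}. Because the K\"ahler structure, Chern connection, and hermitian structure are all $\Gamma$-invariant, each graded piece $D^L_\pm$ commutes with the $\Gamma$-action, and therefore with $p$ on the common core $\Gamma^\infty_c(M,\mathbb{E}\otimes L)$. I would then apply Lemma \ref{lem:essentiallyselfadjointprojection} to each of $D^L_+$ and $D^L_-$ separately (its proof only uses closability, density of the domain, and the commutation $pD=Dp$; symmetry is invoked solely in the final sentence about essential self-adjointness) to conclude simultaneously that $p\ol{D}^L_\pm = \ol{D}^L_\pm p$ and that $\ol{(D^L_\pm)|_{\mathcal{H}^\Gamma}} = (\ol{D}^L_\pm)|_{\mathcal{H}^\Gamma}$. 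The first identity yields $\ker((\ol{D}^L_\pm)|_{\mathcal{H}^\Gamma}) = (\ker\ol{D}^L_\pm)\cap\mathcal{H}^\Gamma = (\ker\ol{D}^L_\pm)^\Gamma$, while the second gives the equality with $\ker\ol{(D^L_\pm)|_{\mathcal{H}^\Gamma}}$.

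Finally, since $u$ is a unitary intertwiner on the dense cores it intertwines the operator closures, and hence carries $\ker\ol{(D^L_\pm)|_{\mathcal{H}^\Gamma}}$ isomorphically onto $\ker\ol{D}^{L/\Gamma}_\pm$, completing the chain of identifications. The only delicate point in this plan is the application of Lemma \ref{lem:essentiallyselfadjointprojection} to the non-symmetric operators $D^L_\pm$: this calls for a brief inspection of the lemma's proof to confirm that it uses only closability of $D$ and the commutation with $p$, rather than any self-adjointness hypothesis; once this is acknowledged, the remaining steps are essentially bookkeeping.
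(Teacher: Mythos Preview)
Your proposal is correct and follows essentially the same approach as the paper: the paper's proof is the discussion immediately preceding the proposition (the covering-map argument for intertwining, the integration formula for unitarity) together with the sentence ``Taking $E = \mathbb{E} \otimes L$ and using Proposition~\ref{prp:esap}, we now obtain.'' Your only refinement is to bypass Proposition~\ref{prp:esap} (stated for symmetric $D$) and instead invoke Lemma~\ref{lem:essentiallyselfadjointprojection} directly for the closable graded pieces $D^L_\pm$, correctly observing that the lemma's proof requires only closability and the commutation $pD=Dp$; this is a legitimate and slightly more transparent way to handle the graded statements, since $D^L_\pm$ are operators between the even and odd summands rather than symmetric operators on a single space.
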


Let us apply Proposition \ref{prp:discretequotient} to our situation, where the finite group $\Gamma := W(G,\textbf{T})$ acts freely on $M = (G\times \mathfrak{g})_\textbf{T}$.  Composition of the map $u^{-1}$ of Proposition \ref{prp:discretequotient} and $\iota$ in Corollary \ref{cor:cotangent_torus} yields an identification
\begin{align*}
 \iota \circ u^{-1}: L^2((G\times\mathfrak{g})_\textbf{T} / \Gamma, \mathbb{E}_{(G\times \mathfrak{g})_\textbf{T}/\Gamma} \otimes L/ \Gamma)  \stackrel{\cong}{\rightarrow} L^2(\textbf{T} \times \mathfrak{t}, \mathbb{E} \otimes L) ^\Gamma.
\end{align*}
If $D^{L/\Gamma}$ denotes the Dolbeault--Dirac operator on $(G \times \mathfrak{g})_\textbf{T} / \Gamma$ and if $D^L$ now denotes the Dolbeault--Dirac operator on $\textbf{T} \times \mathfrak{t}$, then the isomorphism $\iota \circ u^{-1}$ restricts to isomorphisms
\begin{align}
\label{eq:kern_id}
\iota \circ u^{-1}: \ker \ol{D}_+^{L/\Gamma} &\stackrel{\cong}{\rightarrow} (\ker \ol{D}^L_+)^\Gamma, \\
\iota \circ u^{-1}: \ker \ol{D}_-^{L/\Gamma} &\stackrel{\cong}{\rightarrow} (\ker \ol{D}^L_-)^\Gamma. \nonumber
\end{align}
From (the proof of) Theorem \ref{thm:quantDD_cot}, we know that $\ker \ol{D}^L_- = \{0\}$ and that $\ker \ol{D}_+^L$ is concentrated in degree $0$ on $\textbf{T} \times \mathfrak{t}$. By the above results, the same  is true for the Dolbeault--Dirac operators on $(G\times \mathfrak{g})_\textbf{T}$ and on the principal stratum $(G\times \mathfrak{g})_\textbf{T} / \Gamma$ of the Marsden-Weinstein quotient $j^{-1}(0) / \Ad G$. Thus we are in the setting of Remark \ref{rmk:quantization}, so Definition \ref{dfn:quantization} of Dolbeault--Dirac quantization again remains close to the definition of quantization as an index.

This brings us to one of our main results:
\begin{thm}
\label{thm:princstrateq}
 Defining the Dolbeault--Dirac quantization of $j^{-1}(0) /\Ad G$ as the Dolbeault--Dirac quantization of its principal stratum, one has
\begin{align*}
 \mathcal{Q}_{DD}(j^{-1}(0) /\Ad G) :=  \ker \ol{D}_+^{L/W(G,\textbf{T})} \cong \mathcal{Q}_{DD}(T^*\textbf{T})^{W(G,\textbf{T})}\cong L^2(\textbf{T})^{W(G,\textbf{T})},
\end{align*}
where each unitary isomorphism is natural (i.e., independent of a choice of basis).
\end{thm}

\begin{proof}
By definition, $\mathcal{Q}_{DD}(T^*\textbf{T}):=\ker \ol{D}_+^L$. The first isomorphism is then a consequence of (\ref{eq:kern_id}). The second follows from Theorem \ref{thm:quantspin} and Proposition \ref{prp:Hall_iso}.
 \end{proof}

\begin{rmk}
\label{rmk:princstrateq}
Rephrased in terms of holomorphic sections, Proposition \ref{prp:discretequotient} states that the square-integrable holomorphic sections of the bundle $L/W(G,\textbf{T})$ on the principal stratum $(G\times\mathfrak{g})_\textbf{T} / W(G,\textbf{T})$ can be identified with the $W(G,\textbf{T})$-invariant square-integrable holomorphic sections of the $W(G,\textbf{T})$-equivariant bundle $L$ on $(G \times \mathfrak{g})_\textbf{T}$. Corollary \ref{cor:cotangent_torus}, on the other hand, says that \emph{square-integrable} holomorphic sections  of $L$ on the open, dense submanifold $(G \times \mathfrak{g})_\textbf{T} \subset \textbf{T} \times \mathfrak{t}$ can always be extended to a square-integrable \emph{holomorphic} section on the entire manifold $\textbf{T} \times \mathfrak{t}$. The latter result explains  why the quantization of the principal stratum is equal to the quantization of the full manifold. Since in general holomorphic sections on a dense, open neighborhood do not always have holomorphic extensions to the full space, this extension-result is decidedly non-trivial. It relies on the facts that the codimensions of the other strata are at least $2$ and that the sections are square-integrable. It would be worthwhile to investigate if a similar results holds in a more general setting of symplectic stratitification, where codimensions of the strata other than the principal one are automatically $2$ or higher. The definition of quantization of the singular space as the quantization of its principal stratum might then be extented to this setting as well. 
\end{rmk}

\subsection{Quantization commutes with reduction}
 We discuss some facts concerning the Guillemin-Sternberg conjecture for the coadjoint action of $G$ on $T^*G$. The following proposition is well known.
\begin{prp}
\label{prp:qar}
Let $G$ be a compact connected Lie group and $\textbf{T}$ a maximal torus. Write $\delta: \textbf{T} \rightarrow \mathbb{C}$ for the Weyl denominator function. Then there exists $c>0$ such that $f \mapsto  c|\delta| \cdot f|_\textbf{T}$ defines a unitary map
\begin{align*}
 L^2(G)^{\Ad G} \rightarrow L^2(\textbf{T})^{W(G,\textbf{T})}.
\end{align*}
\end{prp}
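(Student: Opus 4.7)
The plan is to deduce the statement directly from the Weyl integration formula. Recall that, for a suitably normalized Haar measure on $G$ and the Lebesgue-Haar measure on $\textbf{T}$, every continuous class function $f$ on $G$ satisfies
\begin{equation*}
 \int_G f(g)\, dg \;=\; \frac{1}{|W(G,\textbf{T})|} \int_\textbf{T} f(t)\, |\delta(t)|^2\, dt,
\end{equation*}
where $|\delta(t)|^2 = \prod_{\alpha \in R^+} |e^{i\alpha(t)/2} - e^{-i\alpha(t)/2}|^2$. Setting $c := |W(G,\textbf{T})|^{-1/2}$ and applying this to $|f|^2$ (which is again $\Ad G$-invariant) immediately yields, for every $f \in L^2(G)^{\Ad G}$,
\begin{equation*}
 \|f\|_{L^2(G)}^2 \;=\; c^2 \int_\textbf{T} |f|_\textbf{T}(t)|^2\, |\delta(t)|^2\, dt \;=\; \|c|\delta|\cdot f|_\textbf{T}\|_{L^2(\textbf{T})}^2,
\end{equation*}
so the map $U: f \mapsto c|\delta|\cdot f|_\textbf{T}$ is an isometry (once one has checked well-definedness up to null sets, which reduces to the classical fact that every $\Ad G$-invariant $L^2$ function on $G$ has a well-defined restriction to $\textbf{T}$ as an element of $L^2(\textbf{T}, |\delta|^2 dt)$; this can be obtained by approximating in $C(G)^{\Ad G}$, on which the Weyl integration formula holds, and passing to the limit).

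Next I would verify that $Uf$ actually lies in $L^2(\textbf{T})^{W(G,\textbf{T})}$. Since $f$ is $\Ad G$-invariant, $f|_\textbf{T}$ is $W(G,\textbf{T})$-invariant. The Weyl denominator $\delta$ transforms by the sign character under $W(G,\textbf{T})$, so $|\delta|$ is $W(G,\textbf{T})$-invariant, and hence so is $c|\delta|\cdot f|_\textbf{T}$.

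For surjectivity, given $h \in L^2(\textbf{T})^{W(G,\textbf{T})}$ I would define $\varphi(t) := h(t)/(c|\delta(t)|)$ on the set $\textbf{T}_{\mathrm{reg}}$ of regular elements, where $|\delta|>0$. The identity $\int_\textbf{T} |\varphi|^2 |\delta|^2\, dt = c^{-2}\|h\|_{L^2(\textbf{T})}^2 < \infty$ shows that $\varphi \in L^2(\textbf{T}, |\delta|^2 dt)^{W(G,\textbf{T})}$. Since the conjugation map $G/\textbf{T} \times \textbf{T}_{\mathrm{reg}} \to G_{\mathrm{reg}}$, $(g\textbf{T}, t) \mapsto gtg^{-1}$, is a $|W(G,\textbf{T})|$-to-one covering onto the open dense set of regular elements of $G$, the prescription
\begin{equation*}
 \tilde{\varphi}(gtg^{-1}) := \varphi(t), \qquad (g \in G,\ t \in \textbf{T}_{\mathrm{reg}}),
\end{equation*}
is unambiguous and defines a measurable $\Ad G$-invariant function on the full-measure subset $G_{\mathrm{reg}}$; Weyl integration again gives $\|\tilde{\varphi}\|_{L^2(G)}^2 = c^2\int_\textbf{T} |\varphi|^2|\delta|^2\, dt = \|h\|_{L^2(\textbf{T})}^2 < \infty$, so $\tilde{\varphi} \in L^2(G)^{\Ad G}$ with $U\tilde{\varphi}=h$.

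The only real subtlety is making sense of the restriction $f|_\textbf{T}$ and of the division $h/|\delta|$ for $L^2$ (rather than continuous) objects, i.e., keeping track of null sets across the singular locus of $\textbf{T}$; this is handled by approximating from the dense subspace of continuous class functions and using that $\textbf{T}\setminus \textbf{T}_{\mathrm{reg}}$ and $G\setminus G_{\mathrm{reg}}$ have measure zero.
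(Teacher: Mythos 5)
Your argument is correct. Note that the paper itself does not prove this proposition at all: it simply cites \cite[Corollary 3.14.2]{DK00}, so what you have written is essentially the standard proof underlying that reference, made self-contained. The forward direction (isometry) is exactly the Weyl integration formula for class functions, with $c=|W(G,\textbf{T})|^{-1/2}$ for normalized Haar measures (the proposition only asserts existence of some $c>0$, so the normalization dependence is harmless), and your surjectivity argument via the $|W(G,\textbf{T})|$-to-one covering $G/\textbf{T}\times \textbf{T}_{\mathrm{reg}}\to G_{\mathrm{reg}}$ is the right mechanism; unambiguity of $\tilde\varphi$ uses that regular elements of $\textbf{T}$ are $G$-conjugate iff they are $W(G,\textbf{T})$-conjugate, which you implicitly invoke and which is standard. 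The one point you correctly flag but treat somewhat briskly is the meaning of $f|_\textbf{T}$ for an $L^2$ class function, since $\textbf{T}$ is a null set in $G$: the clean way to say it is that the unitary is first defined on the dense subspace $C(G)^{\Ad G}$ (dense, e.g., by Peter--Weyl, since characters span it) and extended by continuity, or equivalently that the same covering map identifies $\Ad G$-invariant measurable functions on $G_{\mathrm{reg}}$ with $W(G,\textbf{T})$-invariant measurable functions on $\textbf{T}_{\mathrm{reg}}$ up to null sets, so that the Weyl formula (valid for measurable class functions by the change-of-variables formula for the covering) applies directly. With that gloss your proof is complete; compared with the paper's bare citation, it has the advantage of exhibiting the constant and the inverse map explicitly, which is in the spirit of the paper's insistence elsewhere on canonical, explicitly described isomorphisms.
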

\begin{proof}
See e.g. \cite[Corollary 3.14.2]{DK00}.
\end{proof}
In Theorem \ref{thm:quantspin} we showed that the Dolbeault--Dirac quantization of $T^*G$ with its standard K\"ahler structure is $\mathcal{H}L^2(T^*G, e^{-|Y|^2} \varepsilon)$, which is $G \times G$-equivariantly isomorphic to $L^2(G)$ via the isomorphism of Proposition \ref{prp:Hall_iso}. This isomorphism composed with the Weyl integration formula produces a canonical isomorphism between the reduction after quantization and $L^2(\textbf{T})^{W(G,\textbf{T})}$.
On the other hand, in Theorem \ref{thm:princstrateq} we applied the isomorphism $u$ in Proposition \ref{prp:discretequotient}, and again Proposition \ref{prp:Hall_iso} to also identify $\mathcal{Q}_{DD}(T^*G /\!/ \Ad G)$ with $L^2(\textbf{T})^{W(G,\textbf{T})}$. Hence:
\begin{thm}
\label{thm:qr0}
Defining the Dolbeault--Dirac quantization of $T^*G /\!/ \Ad G$ to be the Dolbeault--Dirac quantization of its principal stratum,  (Dolbeault--Dirac) quantization after reduction and reduction after quantization are both canonically isomorphic to $L^2(\textbf{T})^{W(G,\textbf{T})}$, and hence quantization commutes with reduction.
\end{thm}
\section{Discussion and outlook}\label{Outlook}
To summarize the main results in this paper, we have:
\begin{itemize}
\item  Shown that the Dolbeault--Dirac  quantization of $T^*G$ (with its standard K\"ahler structure) yields the same Hilbert space (\ref{eq:Hallq})  that Hall found in \cite{Hal02} (who used geometric quantization based on a holomorphic polarisation). The main point we proved in this light  is the fact that the kernel of $D^L_-$ is trivial and that the kernel of $D^L_+$ is precisely the space of holomorphic square-integrable sections of $L$ (Theorem \ref{thm:quantDD_cot});
\medskip

\item Formulated a quantization procedure for the singular Marsden--Weinstein quotient $T^*G /\!/ \Ad G$, where quantization is performed by taking the kernel of the (twisted) Dolbeault--Dirac operator on the principal stratum. It turned out that it is sufficient to consider the principal stratum because the Dolbeault--Dirac operator is still essentially self-adjoint there (Propositions \ref{prp:princstratess} and \ref{prp:discretequotient}). The essential self-adjointness relies on the fact that the singular strata are all of a codimension greater than 2;\medskip

\item Shown that quantization commutes with reduction in the sense that both reduction after quantization and quantization after reduction yield the same Hilbert space $L^2(\textbf{T})^{W(G,\textbf{T})}$ (Theorems \ref{thm:quantspin}, \ref{thm:princstrateq} and Proposition \ref{prp:qar}).  
\end{itemize}
\medskip

Although  the above methods for quantizing the cotangent bundle $T^*G$ and the Marsden--Weinstein reduced space $T^*G /\!/ \Ad G$ are perfectly natural, the `quantization commutes with reduction theorem' would get more body if there were a way to identify quantization after reduction with reduction after quantization differently from mere unitary isomorphism of Hilbert spaces. Now there is a natural map  $$\mathcal{H}L^2(T^*G, e^{-|Y|^2}\varepsilon)^{\Ad G}\rightarrow \mathcal{H}L^2(  (j^{-1}(0)/ \Ad G)_{princ} , e^{-|\tilde{Y}|^2} \widetilde{\varepsilon}),$$ where a holomorphic function $$f \in \mathcal{H}L^2(T^*G, e^{-|Y|^2}\varepsilon)^{\Ad G}$$ is pulled back to the space  $(G\times \mathfrak{g})_\textbf{T} \subset T^*\textbf{T}$ and then projected to a square-integrable holomorphic function on the quotient $(G \times \mathfrak{g})_\textbf{T} / W(G,\textbf{T})$, but unfortunately this map is not unitary \cite{HK07}. 
To solve this problem one could look for 
either other natural maps going from the one space to the other, or  some more flexible  \emph{framework} in which the quantization-commutes-with-reduction problem can be formulated. 
Unfortunately, neither the $K$-theoretic framework for quantization-commutes-with-reduction  used in \cite{Lan05,HL08,HM15,MZ10}
nor the approaches in  \cite{MZ09,Par11} apply here, since our momentum map fails to be proper and the reduced space is non-compact, so that
the quantization of the latter, viz.\ the Hilbert space  $L^2(G)^{\Ad G}$, is infinite-dimensional, so that, in particular, the multiplicity of the trivial representation in the $\Ad G$-equivariant quantization of $T^*G$ is infinite. Consequently,  $L^2(G)$ can neither be interpreted as an element in the generalised representation ring, nor as an element in $K(C^*(\{e\})) = K_0(\mathbb{C}) \cong \mathbb{Z}$.
 It would  therefore be desirable to find a new framework in which the `quantization commutes with reduction' problem can be studied for singular cases like the one studied in this paper.
 \newpage
 \section*{Acknowledgements}
The research of the first author was funded by the Radboud University Nijmegen and  the \emph{Netherlands Organisation for Scientific Research}
 (NWO),  both through the \emph{Geometry and Quantum Theory} (GQT) cluster  and through a VIDI grant to the third author.
  Part of this work was  carried out during  visits
 by the first author  to the \emph{Centre for Symmetry and Deformation} at the University of Copenhagen and the \emph{Institute for Analysis} at the Leibniz Universit\"at Hannover. JB's  visit to Copenhagen was funded by the \emph{European Science Foundation} (ESF) as part of its program \emph{Interactions of Low-Dimensional Topology and Geometry with Mathematical Physics} (ITGP). His visit to Hannover was funded by the host institute. 
  
We would like to thank Ryszard Nest and Elmar Schrohe for their support and hospitality, and we are also
 grateful to Heiko Gimperlein, Gerd Grubb, Brian Hall, Eli Hawkins, Nigel Higson, Peter Hochs, Johannes Huebschmann, William Kirwin,  Reyer Sjamaar, and last but not least the outstanding  referee for \emph{Reviews in Mathematical Physics} for very helpful comments and suggestions.
\bigskip
 
\noindent \textbf{Note added in proof}. This paper  was originally posted as \verb#arXiv:1508.06763#. While it was under review, an important contribution to the subject appeared, namely  ``A unitary `quantization commutes with reduction' map for the adjoint action of a compact Lie group" by B.C. Hall and B.D. Lewis (\verb#arXiv:1709.08531#), subsequently published in \emph{The Quarterly Journal of Mathematics} 00, 1--35 (2018). This paper addresses the same problem as ours, but instead of using an index-theoretic definition of quantization (which is really K-theoretic, \textit{cf.}\ the Introduction), they use the traditional geometric quantization approach, amended with the half-form correction \cite{Hal2013,Woodhouse}. In an impressive \emph{tour de force}, Hall and Lewis  succeed in proving a `quantization commutes with reduction' result, too, including a (geometrically) more natural unitary map from the `quantization before reduction' Hilbert space to the `quantization after reduction' Hilbert space than ours. 

Let us therefore close by explaining the  connection of this paper to ours. 
In the original version of our paper we already tried to extend our result for the 
 Spin$\mbox{}^{\mathbb{C}}$--Dirac operator =  Dolbeault--Dirac operator (\textit{cf.}\ Remark \ref{5r}) to a possible Spin Dirac operator
 $\ol{\sD}$ on $M=T^*G$, in terms of which, analogous to Definition \ref{dfn:quantization}, the  \emph{spin quantization} of $M$ would be defined as the Hilbert space
\begin{equation}
 \mathcal{Q}^L_S(M) := \ker (\ol{\sD}_+^L). \label{spinDq}
\end{equation}
 It is shown in e.g. \cite[Chapter 6]{Dui96} that for a fixed spin structure on any K\"ahler manifold $M$, the spinor bundle $S$ is isomorphic to the bundle $\Lambda^\bullet(T^{*(0,1)}M) \otimes K_\frac{1}{2}$, where $K_\frac{1}{2}$ is the line bundle of half-forms corresponding to the chosen spin structure (which satisfies $K_\frac{1}{2} \otimes K_\frac{1}{2} = K$).  Furthermore,
the canonical line bundle on $T^*G$ is trivial. Let $\{\beta_i\}_{i=1}^n$ be a linearly independent system of left-invariant holomorphic $(1,0)$-forms on $G^\mathbb{C}$. Then $\beta := \beta_1 \wedge \cdots \wedge \beta_n$ is a left $G^\mathbb{C}$-invariant holomorphic trivialising section of $K$. The section $\beta$ is also invariant under the right action of $G$ on $G^\mathbb{C}$. To see this, note that 
\begin{align*}
 (T_h R_g)^*  \beta(hg) = \det( \Ad^*_{g^{-1}}) \beta(h), \quad (h \in G^\mathbb{C}, g \in G),
\end{align*}
where $\Ad^*_{g^{-1}}$ is viewed as a real-linear map $\mathfrak{g}^* \rightarrow \mathfrak{g}^*$. The function 
\begin{align*}
G \rightarrow \mathbb{R}^{\times}, \quad g \mapsto \det(\Ad^*_{g^{-1}})
\end{align*}
is a group homomorphism. Since $G$ is compact and connected, the image of this homomorphism is a compact and connected subgroup of $\mathbb{R}^\times$, hence this image is $\{1\}$. Consequently, $\beta$ is also invariant under the right action of $G$ on $G^\mathbb{C}$. Since $\beta$ is both left- and right- invariant, it is invariant under the action of $G \times G$ on $G^\mathbb{C}$.

Let $h$ denote the hermitian structure on $K$. Now choose $K_\frac{1}{2}$ to be the trivial holomorphic line bundle with the trivialising holomorphic section $\alpha$ that satisfies $\alpha^2 := \alpha \otimes \alpha= \beta$ and with hermitian structure determined by $h(\alpha , \alpha)= h( \beta,\beta)^\frac{1}{2}$,
so that $R(K_{\frac{1}{2}}) = \tfrac{1}{2} R(K)$. The group action on $K_{\frac{1}{2}}$ is such that $\alpha$ is $G \times G$-invariant.
\begin{thm}
\label{prp:quantspin_cot}
The spin quantization  $\ker \ol{\sD}^L$  of $T^*G$ (with its standard K\"ahler structure)
 is $G \times G$-equivariantly isomorphic to $\mathcal{H}L^2(T^*G, e^{-2\pi|Y|^2} \eta \varepsilon)$.
\end{thm}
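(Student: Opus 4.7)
The plan is to reduce the spin quantization to a twisted Dolbeault--Dirac quantization and then reuse the argument of Theorem~\ref{thm:quantDD_cot}. On any K\"ahler manifold with spin structure the spinor bundle decomposes as $S\cong\Lambda^{(0,\bullet)}T^*M\otimes K^{\frac{1}{2}}$, and under this identification the twisted spin Dirac operator $\sD^L$ coincides with the twisted Dolbeault--Dirac operator $D^{L\otimes K^{\frac{1}{2}}}$; see e.g.\ \cite{BGV92,Dui96}. In particular $\ker\ol{\sD}^L=\ker\ol{D}^{L\otimes K^{\frac{1}{2}}}$, so the whole problem reduces to checking the Kodaira step and the identification of holomorphic sections for this new twisting bundle.

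First I would verify the positivity hypothesis of Theorem~\ref{thm:kvt} for $L\otimes K^{\frac{1}{2}}$, namely that $K^*\otimes(L\otimes K^{\frac{1}{2}})\cong L\otimes K^{-\frac{1}{2}}$ is positive. In the frame $\{W^k\}$ of \S\ref{sct:can_semi-neg} its curvature matrix is $2\pi g_{kl}-\tfrac{1}{2}R^K_{kl}$: the first term is positive definite since $R^L=2\pi i\omega$ gives $R^L_{kl}=2\pi g_{kl}$, and the second term is positive semi-definite by Theorem~\ref{thm:can_semineg}. Combining Theorem~\ref{thm:kvt} with Lemma~\ref{lem:disdif} and the geodesic completeness of $T^*G$ (Theorem~\ref{thm:TGgeodesicallycomplete}), the kernel will be contained in the smooth \emph{holomorphic} sections of $L\otimes K^{\frac{1}{2}}$.

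Next I would write those sections down explicitly. With the $G\times G$-invariant holomorphic trivialisation $\alpha$ of $K^{\frac{1}{2}}$ fixed above the statement, and the holomorphic trivialisation $e^{-\phi}$ of $L$ coming from Proposition~\ref{prp:holoL} (with $\phi=\pi|Y|^2$), every holomorphic section is of the form $s=f\,\alpha\otimes e^{-\phi}$ for some holomorphic function $f$ on $T^*G$, and its pointwise norm squared is $|f|^2\,h(\alpha,\alpha)\,e^{-2\pi|Y|^2}$. The key computation is the identity $h(\alpha,\alpha)=c\,\eta$: by construction $h(\alpha,\alpha)^2=h(\beta,\beta)$, while $h(\beta,\beta)=\det\bigl(h(W^k,W^l)\bigr)=1/\det(g_{kl})=c^2\eta^2$ using the expression for $\det(g_{kl})$ derived at the start of \S\ref{sct:can_semi-neg}. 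Consequently, $L^2$-integrability of $s$ against the Liouville measure is equivalent to $f\in\mathcal{H}L^2(T^*G,e^{-2\pi|Y|^2}\eta\,\varepsilon)$, and the bijection $f\mapsto f\,\alpha\otimes e^{-\phi}$ is (up to the positive constant $c$) a unitary isomorphism.

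The $G\times G$-equivariance is then immediate because both $\alpha$ and $e^{-\phi}$ are $G\times G$-invariant. The main obstacle I foresee is the explicit identification $h(\alpha,\alpha)=c\,\eta$: although it follows formally from the computations already done for the canonical bundle in \S\ref{sct:can_semi-neg}, one must track the normalisation of the hermitian metric on $K$ (hence on $K^{\frac{1}{2}}$) carefully and make sure the overall constant matches the appearance of $\eta$ (and not some power of $\eta$) in the weighting of $\mathcal{H}L^2(T^*G,e^{-2\pi|Y|^2}\eta\,\varepsilon)$; everything else is a direct transcription of the proof of Theorem~\ref{thm:quantDD_cot} with $L$ replaced by $L\otimes K^{\frac{1}{2}}$.
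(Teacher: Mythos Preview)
Your proposal is correct and follows essentially the same route as the paper: identify $\sD^L$ with the Dolbeault--Dirac operator twisted by $L\otimes K^{\frac{1}{2}}$, use semi-negativity of $K$ to get positivity of $K^*\otimes L\otimes K^{\frac{1}{2}}$ and hence apply the Kodaira vanishing theorem, then trivialise holomorphic sections via $f\,e^{-\phi}\alpha$ and invoke $h(\alpha,\alpha)\propto\eta$ together with the $G\times G$-invariance of $\alpha$ and $e^{-\phi}$. The paper simply asserts $h(\alpha,\alpha)=\eta$ without tracking the constant you flag; your extra step deriving this from $\det(g_{kl})=(c\eta)^{-2}$ is a welcome clarification, and the overall positive constant is harmless for the unitary isomorphism.
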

\begin{proof}
The twisted operator $\sD^L$ is equal to the twisted Dolbeault--Dirac operator on $\Lambda^ \bullet(T^{*(0,1)}M) \otimes (K_\frac{1}{2} \otimes L)$ \cite{Dui96, Fri00}. Because the canonical line bundle $K$ is semi-negative, the line bundle $K^* \otimes (K_\frac{1}{2} \otimes L)$ is positive so that, by Theorem \eqref{thm:kvt}, the kernel of $\ol{\sD}^L$ is contained in $\Gamma^\infty(M, K_\frac{1}{2} \otimes L)$, and as in Theorem \ref{thm:quantDD_cot} this kernel is
\begin{align*}
 \ker \ol{\sD}^L &= \mathcal{H}L^2(T^*G,K_\frac{1}{2} \otimes L) \cong \mathcal{H}L^2(T^*G, e^{-2\pi|Y|^2} \eta \varepsilon),\end{align*}
where the last isomorphism is given by the map
\begin{align}
\label{eq:isometa}
\mathcal{H}L^2(T^*G,K_\frac{1}{2} \otimes L) \stackrel{\cong}{\rightarrow} \mathcal{H}L^2(T^*G, e^{-2\pi|Y|^2} \eta \varepsilon), \quad  fe^{-\pi |Y|^2}\alpha \mapsto f.
\end{align} 
This is an isomorphism, because the trivialising section $\alpha$ of $K_\frac{1}{2}$ satisfies 
$ h(\alpha,\alpha)= \eta$. 
Because $\alpha$ and $e^{-\pi|Y|^2}$ are invariant under $G \times G$, the isomorphism of \eqref{eq:isometa} is $G \times G$-equivariant.
\end{proof}

As in Theorem \ref{thm:quantspin}, the equivariant spin quantization  (\ref{spinDq}) of $T^*G$ is   $G \times G$-equivariantly  equal to (\ref{eq:Hallqhalf}), and hence by Hall's result
  \cite[Theorem 2.6]{Hal02} also this quantizations of $T^*G$ is $G \times G$-equivariantly isomorphic to $L^2(G)$. 
  
 Thus the Spin Dirac operator quantization of the unreduced space $T^*G$ is well under control.  
However, in attempting to pass to the reduced space we stumbled on the following problem.  The curvature of the canonical line bundle on $T^*\textbf{T}$ is $0$, so that the Dolbeault--Dirac quantization is equal to the spin quantization for $T^*\textbf{T}$. When we also consider the $W(G,\textbf{T})$-action on the canonical bundle, then the left-invariant holomorphic $(n,0)$-form $\beta_1 \wedge \cdots \wedge \beta_n$, where each $\beta_k$ is a left-invariant $(1,0)$-form on $\textbf{T}^\mathbb{C}$, is not $W(G,\textbf{T})$-invariant. Indeed, an element $w\in W(G,\textbf{T})$ sends $\beta_1 \wedge \cdots \wedge \beta_n$ to $\det(w) \beta_1 \wedge \cdots \wedge \beta_n$, where $\det_{\mathfrak{t}}(w)$ is the determinant of the action of $w$ as a real-linear map on $\mathfrak{t}$, or equivalently, the determinant of its complex linear extension to $\mathfrak{t}_\mathbb{C}$, considered as a complex-linear map. This determinant is $\pm 1$, depending on whether $w$ is a rotation or a reflection of $\mathfrak{t}$, and at the time we did not know  if there existed an equivariant half-form bundle whose square is  the equivariant canonical line bundle. Now Hall and Lewis (\S5) show that this bundle  indeed exists,  and in this light we conjecture that  quantization defined as the index of the ensuing Spin Dirac operator coincides with geometric quantization using the half-form correction, as constructed by Hall and Lewis. We do not see how to prove this, but if this conjecture is true, their result would show that, on the definition (\ref{spinDq}), at least if $G$ is simply connected, quantization commutes with reduction also for Spin Dirac operators (and not only for Spin$\mbox{}^{\mathbb{C}}$--Dirac operators, as we prove here).
\appendix
\section{Equivariance of Hall's isomorphisms}
\label{sct:Hall_iso}
In this appendix we show that Hall's isomorphism in \cite{Hal94} is equivariant. We first recall this isomorphism.
Define an  entire function $\phi: G^\mathbb{C} \rightarrow \mathbb{C}$  by
\begin{align*}
\phi(t) =  \sum_{\pi \in \hat{G}} \frac{\text{dim }V_\pi}{\sqrt{\sigma(\pi)}}\Tr(\pi(t^{-1})),
\end{align*}
where the sum is over all irreducible representations of $G$, each of which is extended holomorphically to an irreducible representation of $G^\mathbb{C}$, and where 
\begin{align*}
\sigma(\pi) = \frac{1}{\text{dim }V_\pi}\int_{G^\mathbb{C}} \|\pi(t^{-1}) \|^2 e^{-2 \pi|Y|^2} \varepsilon.
\end{align*}
Then by \cite[Theorem 10]{Hal94}, the following map is unitary:
\begin{eqnarray}
C_\phi: L^2(G) &\rightarrow &\mathcal{H}L^2(T^*G,e^{-2\pi|Y|^2} \varepsilon);\label{Halliso1}\\
(C_\phi f)(t) &=& \int_G f(x) \phi(x^{-1}t) dx, \quad (f \in L^2(G)).\label{Halliso2}
\end{eqnarray}
By restriction, $C_\phi$ also defines a unitary map
\begin{equation}
C_\phi': L^2(\textbf{T})\rightarrow L^2(T^*\textbf{T}, e^{-2\pi|Y|^2} \varepsilon),
\end{equation}
where $\textbf{T}$ is a maximal torus of $G$ and  $W(G,\textbf{T})$ is the corresponding Weyl group.
\begin{prp}
\label{prp:Hall_iso} Let $G$ be a compact connected Lie group. 
\begin{enumerate}
\item  Hall's unitary isomorphism $C_\phi$ intertwines the natural $G\times G$-actions.
\item   Its restriction $C_\phi'$
intertwines the pertinent $W(G,\textbf{T})$-actions.
\end{enumerate}
\end{prp}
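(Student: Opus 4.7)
The plan is to reduce each statement to a substitution in the integral \eqref{Halliso2} defining $C_\phi$, combined with a ``trace-cyclicity'' identity for the Hall kernel. For (1), the key property I would establish first is that $\phi$ is central on $G^\mathbb{C}$ in the sense that
$$\phi(hs)=\phi(sh)\qquad\text{for all }h,s\in G^\mathbb{C}.$$
This follows term by term from cyclicity of the trace, since each summand $\Tr(\pi((hs)^{-1}))=\Tr(\pi(s^{-1})\pi(h^{-1}))$ equals $\Tr(\pi(h^{-1})\pi(s^{-1}))=\Tr(\pi((sh)^{-1}))$; the only technicality is to invoke the uniform-on-compacta absolute convergence of the defining series established in \cite{Hal94}. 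Granted centrality, I would compute $C_\phi((h_1,h_2)\cdot f)(t)=\int_G f(h_1^{-1}xh_2)\phi(x^{-1}t)\,dx$, substitute $y=h_1^{-1}xh_2$ (valid by bi-invariance of Haar measure on $G$), obtaining $\int_G f(y)\phi(h_2y^{-1}h_1^{-1}t)\,dy$, and then apply centrality to shift $h_2$ past $\phi$ to the right, recognising the result as $(C_\phi f)(h_1^{-1}th_2)=((h_1,h_2)\cdot C_\phi f)(t)$.

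For (2), the strategy runs entirely in parallel with $G$ replaced by $\textbf{T}$. The Weyl conjugation action on $T^*\textbf{T}\cong \textbf{T}\times\mathfrak{t}$, namely $(x,Y)\mapsto(wxw^{-1},\Ad_wY)$ for a representative $w\in N_G(\textbf{T})$, corresponds on $\textbf{T}^\mathbb{C}$ to $t\mapsto wtw^{-1}$. The analogue of centrality I would prove is
$$\phi(wtw^{-1})=\phi(t),\qquad t\in\textbf{T}^\mathbb{C},$$
where now $\phi$ denotes the Hall kernel for $\textbf{T}$. Since $\textbf{T}$ is abelian, $\hat{\textbf{T}}$ consists of one-dimensional characters, and the Weyl action $\pi\mapsto\pi\circ\Ad_{w^{-1}}$ is a permutation of $\hat{\textbf{T}}$ preserving $\dim V_\pi\equiv 1$. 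The normalising constants $\sigma(\pi)$ are $W$-invariant because the $\Ad G$-invariant inner product on $\mathfrak{g}$ restricts to a $W$-invariant inner product on $\mathfrak{t}$, making the Gaussian measure $e^{-2\pi|Y|^2}\varepsilon$ on $T^*\textbf{T}$ invariant under conjugation by $w$. Re-indexing the series by this permutation then yields the claimed invariance, and a change of variables $y=w^{-1}xw$ in the integral defining $C_\phi'(w\cdot f)$ (which preserves Haar measure on $\textbf{T}$) together with one application of this invariance produces Weyl equivariance of $C_\phi'$.

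The principal obstacle I foresee is the convergence bookkeeping: both the term-by-term trace cyclicity in (1) and the re-indexing of the sum by the Weyl permutation in (2) rearrange a series that is not obviously absolutely summable, so one must check that the estimates of \cite{Hal94} survive each manipulation (in particular, that $\sigma(w\cdot\pi)=\sigma(\pi)$, which does follow from orthogonality of the $W$-action on $\mathfrak{t}$). Once this is dispatched, the two intertwining identities are direct substitutions in the defining integral.
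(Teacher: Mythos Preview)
Your proposal is correct and, for Part~(2), essentially identical to the paper's own argument: the paper also proves $\phi(w\cdot t)=\phi(t)$ by first checking $\sigma(\pi\circ w)=\sigma(\pi)$ from Weyl-invariance of the Gaussian measure, then re-indexing the sum over $\hat{\textbf{T}}$ by the Weyl permutation, and finally substituting in the integral.

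For Part~(1) there is a difference in emphasis worth noting. The paper devotes nearly all of its Part~(1) proof to verifying that the measure $e^{-2\pi|Y|^2}\varepsilon$ satisfies the two hypotheses of Hall's Theorem~10 (that its density with respect to Haar measure on $G^\mathbb{C}$ is locally bounded away from zero, and that each $\sigma(\pi)$ is finite), which is what guarantees $C_\phi$ is a well-defined unitary in the first place; it then concludes equivariance in a single sentence (``from these expressions one can deduce\ldots'') without spelling out the mechanism. You take the unitarity of $C_\phi$ as given---reasonable, since the proposition already names $C_\phi$ a unitary isomorphism---and instead make the equivariance mechanism explicit via the centrality identity $\phi(hs)=\phi(sh)$. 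Your trace-cyclicity argument is exactly what fills the gap the paper leaves implicit; note that no series rearrangement is actually needed there, since cyclicity holds term by term and the two series for $\phi(hs)$ and $\phi(sh)$ are literally identical. The convergence input you cite from \cite{Hal94} is the same one the paper re-verifies in its hypothesis check.
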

\begin{proof}
(1) We apply \cite[Theorem 10]{Hal94}, where unitary isomorphisms are constructed between $L^2(G)$ and $\mathcal{H}L^2(G^\mathbb{C},\nu)$ for a specific class of $G$-bi-invariant measures $\nu$ on $G^\mathbb{C}$. To verify that this theorem applies to the $G\times G$-invariant measure $$\nu:= e^{-2\pi|Y|^2} \varepsilon,$$ we need to check the following:
\begin{enumerate}
\item $\nu$ is given by a density with respect to the Haar measure $d\mu$ on $G^\mathbb{C}$ that is locally bounded away from zero;
\item for each irreducible representation $\pi$ of $G$, analytically continued to $G^\mathbb{C}$, the expression
\begin{align*}
 \int_{G^\mathbb{C}} \|\pi(t)^{-1}\|^2 d\nu_t
\end{align*}
is finite. 
\end{enumerate}
By \cite[Lemma 5]{Hal97}, $d\mu = \eta^2 \varepsilon$, so that $$\nu = e^{-2\pi|Y|^2} \varepsilon = e^{-2\pi|Y|^2} / \eta^2 d\mu.$$ The function $Y\mapsto e^{-2\pi|Y|^2} / \eta^2$ is smooth and strictly positive, in particular it is locally bounded away from zero. Furthermore, since any irreducible representation of $G$ is finite-dimensional and hence can be assumed to be unitary, we have
\begin{align*}
\int_{G^\mathbb{C}} \| \pi(t)^{-1} \|^2 e^{-2\pi|Y|^2} \varepsilon &= \int_{G \times \mathfrak{g}} \| \pi(ge^{iY})^{-1} \|^2 e^{-2\pi|Y|^2} dxdY \\ &= \int_{G \times \mathfrak{g}} \| \pi(g^{-1})\pi(e^{iY})^{-1} \|^2 e^{-2\pi|Y|^2} dx dY \\ &= |G| \int_{\mathfrak{g}} \|\pi(e^{iY})^{-1} \|^2 e^{-2\pi|Y|^2} dY,
\end{align*}
where we used the fact that on $G \times \mathfrak{g}$ the Liouville measure is equal to $dxdY$, assuming that $dx$, the Haar measure on $G$,  and $dY$, the Lebesgue measure on $\mathfrak{g}$, are appropriately normalised (see \cite[Lemma 4]{Hal97}). Writing $\ol{\pi}$ for the induced representation of the Lie algebra, we obtain
\begin{align*}
\int_{G^\mathbb{C}} \| \pi(t)^{-1} \|^2 e^{-2\pi|Y|^2} \varepsilon &=|G| \int_{\mathfrak{g}} \|e^{-i \ol{\pi}(Y)}\|^2 e^{-2\pi|Y|^2} dY\\ &\leq |G| \int_\mathfrak{g} e^{2\|\ol{\pi}(Y)\|}e^{-2\pi|Y|^2} dY \\ & \leq |G|\int_\mathfrak{g} e^{C|Y|}e^{-2\pi|Y|^2} dY,
\end{align*}
for some constant $C>0$. This last integral is finite. Thus  the measure $\nu$ satisfies conditions (1) and (2), so we can apply \cite[Theorem 10]{Hal94}. 
From these expressions one can deduce that $C_\phi$ is a $G \times G$-invariant map.

(2) Let $w \in W(G,\textbf{T})$ be given. We show that $$C'_\phi (f \circ w) = (C'_\phi f) \circ w.$$ Using the fact that $w \in W(G,\textbf{T})$ acts on $\textbf{T}^\mathbb{C}$ by homomorphisms, we obtain
\begin{align*}
( C'_\phi f )(wt) = \int_{\textbf{T}} f(x) \phi(x^{-1}wt) dx = \int_\textbf{T} f(x) \phi(w (w^{-1}(x^{-1})t)) dx.
\end{align*}
We claim that $\phi(wt) = \phi(t)$ for all $w \in W(G,\textbf{T})$ and $t \in \textbf{T}^\mathbb{C}$. From the invariance of $e^{-2 \pi|Y|^2} \varepsilon$ under $W(G,\textbf{T}$) we see that
\begin{align*}
 \sigma(\pi \circ w) = \int_{\textbf{T}^\mathbb{C}} \|\pi(w\cdot t^{-1}) \|^2 e^{-2 \pi|Y|^2} \varepsilon = \int_{\textbf{T}^\mathbb{C}} \|\pi(t^{-1}) \|^2 e^{-2 \pi|Y|^2} \varepsilon = \sigma(\pi),
\end{align*}
so that
\begin{align*}
 \phi(wt) &= \sum_{\pi \in \hat{\textbf{T}}} \frac{1}{\sqrt{\sigma(\pi)}}\Tr(\pi(w t^{-1})) = \sum_{\pi \in \hat{\textbf{T}}}  \frac{1}{\sqrt{\sigma(\pi \circ w)}}\Tr(\pi(w t^{-1})) \\ &= \sum_{\pi \in \hat{\textbf{T}}} \frac{1}{\sqrt{\sigma(\pi)}}\Tr(\pi(t^{-1})) = \phi(t),
\end{align*}
where in the third step we used the fact that $\pi \mapsto \pi \circ w$ maps $\hat{\textbf{T}}$ bijectively onto itself. Consequently,
\begin{align*}
( C'_\phi f)(wt) = \int_\textbf{T} f(x) \phi( w^{-1}(x^{-1})t) dx.
\end{align*}
The Haar measure on $G$ is invariant under $W(G,\textbf{T})$, because $W(G,\textbf{T})$ acts by automorphisms. Therefore,
\begin{align*}
( C'_\phi f)(wt) = \int_\textbf{T} f(x) \phi(w^{-1}(x^{-1})t) dg = \int_\textbf{T} f(w \cdot g) \phi(g^{-1} t) dg,
\end{align*}
or in other words, $$C'_\phi (f \circ w) = (C'_\phi f) \circ w.$$ Thus  $C'_\phi$ is $W(G,\textbf{T})$-equivariant.
\end{proof}


\begin{thebibliography}{10}
\bibitem{AM78}
R.~Abraham and J.~E. Marsden.
\newblock {\em Foundations of mechanics}.
\newblock Benjamin/Cummings Publishing Co. Inc. Advanced Book Program, Reading,
  Mass., 1978.
\newblock Second edition, revised and enlarged, With the assistance of Tudor
  Ra{\c{t}}iu and Richard Cushman.

\bibitem{BCHS}
L.~Bates, R.~Cushman, M.~Hamilton, and J.~{\'S}niatycki.
\newblock Quantization of singular reduction.
\newblock {\em Rev. Math. Phys.}, 21:315--371, 2009.

\bibitem{BGV92}
N.~Berline, E.~Getzler, and M.~Vergne.
\newblock {\em Heat kernels and {D}irac operators}, volume 298 of {\em
  Grundlehren der Mathematischen Wissenschaften [Fundamental Principles of
  Mathematical Sciences]}.
\newblock Springer-Verlag, Berlin, 1992.

\bibitem{Ber02}
J.~Bertin, J.-P. Demailly, L.~Illusie, and C.~Peters.
\newblock {\em Introduction to {H}odge theory}, volume~8 of {\em SMF/AMS Texts
  and Monographs}.
\newblock American Mathematical Society, Providence, RI; Soci\'et\'e
  Math\'ematique de France, Paris, 2002.
\newblock Translated from the 1996 French original by James Lewis and Peters.

\bibitem{Bie03}
R.~Bielawski.
\newblock K\"ahler metrics on {$G^{\mathbb{C}}$}.
\newblock {\em J. Reine Angew. Math.}, 559:123--136, 2003.

\bibitem{BSF}
E.~Binz, J.~{\'S}niatycki, and H.~Fischer.
\newblock {\em Geometry of classical fields}, volume 154 of {\em North-Holland
  Mathematics Studies}.
\newblock North-Holland Publishing Co., Amsterdam, 1988.
\newblock Notas de Matem{\'a}tica [Mathematical Notes], 123.

\bibitem{Boe14}
J.~Boeijink.
\newblock {\em Dirac operators, gauge systems and quantisation}.
\newblock PhD thesis, Radboud University Nijmegen, 2014, available at
  \texttt{http://repository.ubn.ru.nl/handle/2066/129492}.

\bibitem{Bou05}
N.~Bourbaki.
\newblock {\em Lie groups and {L}ie algebras. {C}hapters 7--9}.
\newblock Elements of Mathematics (Berlin). Springer-Verlag, Berlin, 2005.
\newblock Translated from the 1975 and 1982 French originals by Andrew
  Pressley.

\bibitem{Che73}
P.~R. Chernoff.
\newblock Essential self-adjointness of powers of generators of hyperbolic
  equations.
\newblock {\em J. Functional Analysis}, 12:401--414, 1973.

\bibitem{Dirac}
P.~A.~M. Dirac.
\newblock {\em Lectures on quantum mechanics}, volume~2 of {\em Belfer Graduate
  School of Science Monographs Series}.
\newblock Belfer Graduate School of Science, New York; produced and distributed
  by Academic Press, Inc., New York, 1967.
\newblock Second printing of the 1964 original.

\bibitem{Dui96}
J.~J. Duistermaat.
\newblock {\em The heat kernel {L}efschetz fixed point formula for the
  spin-{$c$} {D}irac operator}.
\newblock Progress in Nonlinear Differential Equations and their Applications,
  18. Birkh\"auser Boston Inc., Boston, MA, 1996.

\bibitem{DK00}
J.~J. Duistermaat and J.~A.~C. Kolk.
\newblock {\em Lie groups}.
\newblock Universitext. Springer-Verlag, Berlin, 2000.

\bibitem{FMN03}
C.~A. Florentino, J.~M. Mour{\~a}o, and J.~P. Nunes.
\newblock Coherent state transforms and vector bundles on elliptic curves.
\newblock {\em J. Funct. Anal.}, 204:355--398, 2003.

\bibitem{Fri00}
T.~Friedrich.
\newblock {\em Dirac operators in {R}iemannian geometry}, volume~25 of {\em
  Graduate Studies in Mathematics}.
\newblock American Mathematical Society, Providence, RI, 2000.
\newblock Translated from the 1997 German original by Andreas Nestke.

\bibitem{Gru08}
G.~Grubb.
\newblock {\em Distributions and operators}, volume 252 of {\em Graduate Texts
  in Mathematics}.
\newblock Springer, New York, 2009.

\bibitem{GS82b}
V.~Guillemin and S.~Sternberg.
\newblock Geometric quantization and multiplicities of group representations.
\newblock {\em Invent. Math.}, 67:515--538, 1982.

\bibitem{GS82a}
V.~Guillemin and S.~Sternberg.
\newblock Homogeneous quantization and multiplicities of group representations.
\newblock {\em J. Funct. Anal.}, 47:344--380, 1982.

\bibitem{GGK02}
V.~Guillemin, V.~Ginzburg, and Y.~Karshon.
\newblock {\em Moment maps, cobordisms, and {H}amiltonian group actions},
  volume~98 of {\em Mathematical Surveys and Monographs}.
\newblock American Mathematical Society, Providence, RI, 2002.
\newblock Appendix J by Maxim Braverman.

\bibitem{Hal94}
B.~C. Hall.
\newblock The {S}egal-{B}argmann ``coherent state'' transform for compact {L}ie
  groups.
\newblock {\em J. Funct. Anal.}, 122:103--151, 1994.

\bibitem{Hal97a}
B.~C. Hall.
\newblock The inverse {S}egal-{B}argmann transform for compact {L}ie groups.
\newblock {\em J. Funct. Anal.}, 143:98--116, 1997.

\bibitem{Hal97}
B.~C. Hall.
\newblock Phase space bounds for quantum mechanics on a compact {L}ie group.
\newblock {\em Comm. Math. Phys.}, 184:233--250, 1997.

\bibitem{Hal02}
B.~C. Hall.
\newblock Geometric quantization and the generalized {S}egal-{B}argmann
  transform for {L}ie groups of compact type.
\newblock {\em Comm. Math. Phys.}, 226:233--268, 2002.

\bibitem{Hal2013}
B.~C. Hall.
\newblock {\em Quantum Theory for Mathematicians}. 
\newblock Springer, New York, 2013. 

\bibitem{HK07}
B.~C. Hall and W.~D. Kirwin.
\newblock Unitarity in ``quantization commutes with reduction''.
\newblock {\em Comm. Math. Phys.}, 275:401--422, 2007.

\bibitem{Helgason}
S. Helgason.
\newblock {\em Groups and Geometric Analysis}.
 \newblock Academic Press, Orlando, 1984.

\bibitem{HT}
M.~Henneaux and C.~Teitelboim.
\newblock {\em Quantization of gauge systems}.
\newblock Princeton University Press, Princeton, NJ, 1992.

\bibitem{HR00}
N.~Higson and J.~Roe.
\newblock {\em Analytic {$K$}-homology}.
\newblock Oxford Mathematical Monographs. Oxford University Press, Oxford,
  2000.
\newblock Oxford Science Publications.

\bibitem{HL08}
P.~Hochs and N.~P. Landsman.
\newblock The {G}uillemin-{S}ternberg conjecture for noncompact groups and
  spaces.
\newblock {\em J. of K-Theory}, 1:473--533, 2008.

\bibitem{HM14}
P.~{Hochs} and V.~{Mathai}.
\newblock 
Quantising proper actions on Spin$\mbox{}^c$-manifolds.
\newblock  {\em Asian J. Math.} 21:631--686, 2017. 

\bibitem{HM15}
P.~{Hochs} and V.~{Mathai}.
\newblock {Geometric quantization and families of inner products.}
\newblock {\em Adv.\ Math.} 282:362--426, 2015.

\bibitem{HS17a}
P.~{Hochs} and Y. Song. \newblock 
 Equivariant indices of Spin$\mbox{}^c$-Dirac operators for proper moment maps. \newblock 
{\em  Duke Math. J.}  166:1125--1178, 2017.

\bibitem{HS17b}
P.~{Hochs} and Y. Song. \newblock 
 \newblock  On the Vergne conjecture.
{\em   Arch. Math.} 108:99--112, 2017.

\bibitem{HRS09}
J.~Huebschmann, G.~Rudolph, and M.~Schmidt.
\newblock A gauge model for quantum mechanics on a stratified space.
\newblock {\em Comm. Math. Phys.}, 286:459--494, 2009.

\bibitem{Hue2006}
J.~Huebschmann.
\newblock K\"ahler quantization and reduction.
\newblock {\em J. Reine Angew. Math.}, 591:75--109, 2006.

\bibitem{Hue11}
J.~Huebschmann.
\newblock Singular {P}oisson-{K}\"ahler geometry of stratified {K}\"ahler
  spaces and quantization.
\newblock In {\em Geometry and quantization}, volume~19 of {\em Trav. Math.},
  pages 27--63. Univ. Luxemb., Luxembourg, 2011.

\bibitem{Hum90}
J.~E. Humphreys.
\newblock {\em Reflection groups and {C}oxeter groups}, volume~29 of {\em
  Cambridge Studies in Advanced Mathematics}.
\newblock Cambridge University Press, Cambridge, 1990.

\bibitem{JefKir}
L.~C. Jeffrey and F.~C. Kirwan.
\newblock Localization and the quantization conjecture.
\newblock {\em Topology}, 36:647--693, 1997.

\bibitem{Kna96}
A.~W. Knapp.
\newblock {\em Lie groups beyond an introduction}, volume 140 of {\em Progress
  in Mathematics}.
\newblock Birkh\"auser Boston Inc., Boston, MA, 1996.

\bibitem{Lan05}
N.~P. Landsman.
\newblock Functorial quantization and the {G}uillemin-{S}ternberg conjecture.
\newblock In {\em Twenty years of {B}ialowieza: a mathematical anthology},
  volume~8 of {\em World Sci. Monogr. Ser. Math.}, pages 23--45. World Sci.
  Publ., Hackensack, NJ, 2005.

\bibitem{LPS01}
N.~P. Landsman, M.~J. Pflaum, and M.~Schlichenmaier, editors.
\newblock {\em Quantization of singular symplectic quotients}, volume 198 of
  {\em Progress in Mathematics}.
\newblock Birkh\"auser Verlag, Basel, 2001.

\bibitem{MaMa}
X.~Ma and G. Marinescu.
\newblock {\em Holomorphic Morse inequalities and Bergman kernels.}
Birkh\"{a}user, Basel, 2007.

\bibitem{MZ09}
X.~Ma and W.~Zhang.
\newblock Geometric quantization for proper moment maps: the Vergne conjecture. 
\newblock {\em Acta Math.} 212, 11--57, 2014. 

\bibitem{MW1974}
J.~Marsden and A.~Weinstein.
\newblock Reduction of symplectic manifolds with symmetry.
\newblock {\em Rep. Mathematical Phys.}, 5:121--130, 1974.

\bibitem{MarsdenMany}
J.~E. Marsden, G.~Misio{\l}ek, J.-P. Ortega, M.~Perlmutter, and T.~S. Ratiu.
\newblock {\em Hamiltonian reduction by stages}, volume 1913 of {\em Lecture
  Notes in Mathematics}.
\newblock Springer, Berlin, 2007.

\bibitem{MarsdenRatiu}
J.~E. Marsden and T.~S. Ratiu.
\newblock {\em Introduction to mechanics and symmetry}, volume~17 of {\em Texts
  in Applied Mathematics}.
\newblock Springer-Verlag, New York, second edition, 1999.
\newblock A basic exposition of classical mechanical systems.

\bibitem{MZ10}
V.~Mathai and W.~Zhang.
\newblock Geometric quantization for proper actions.
\newblock {\em Adv. Math.}, 225:1224--1247, 2010.
\newblock With an appendix by Ulrich Bunke.

\bibitem{Mei98}
E.~Meinrenken.
\newblock Symplectic surgery and the {${\rm Spin}^c$}-{D}irac operator.
\newblock {\em Adv. Math.}, 134:240--277, 1998.

\bibitem{MeinrenkenSjamaar}
E.~Meinrenken and R.~Sjamaar.
\newblock Singular reduction and quantization.
\newblock {\em Topology}, 38:699--762, 1999.

\bibitem{Neebbook}
K.-H. Neeb. 
\newblock {\em Holomorphy and convexity in Lie theory. Part 1.}
\newblock Birkh\"{a}user, Basel, 1999.

\bibitem{Ohsawa}
T. Ohsawa. \newblock
{\em Vanishing Theorems on complete K\"{a}hler manifolds.}
\newblock {\em Publ.\ RIMS, Kyoto Univ.} 20: 21--38, 1984. 

\bibitem{OR04}
J.-P. Ortega and T.~Ratiu.
\newblock {\em Momentum Maps and Hamiltonian Reduction}.
\newblock Birkh\"{a}user, Basel, 2004.

\bibitem{Par11}
P.-{\'E}. Paradan.
\newblock Formal geometric quantization {II}.
\newblock {\em Pacific J. Math.}, 253:169--211, 2011.

\bibitem{Paradan}
P.-E. Paradan.
\newblock Quantization commutes with reduction in the noncompact setting: the
  case of the holomorphic discrete series.
\newblock http://arxiv.org/abs/1201.5451, 2012.

\bibitem{Sch12}
K.~Schm{\"u}dgen.
\newblock {\em Unbounded self-adjoint operators on {H}ilbert space}, volume 265
  of {\em Graduate Texts in Mathematics}.
\newblock Springer, Dordrecht, 2012.

\bibitem{Schwarz}
G.W. Schwarz. \newblock Smooth functions invariant under the action of a compact Lie group.
\newblock {\em Topology} 14, 63--68, 1975.

\bibitem{Sja90}
R.~Sjamaar.
\newblock {\em Singular orbit spaces in {R}iemannian and symplectic geometry}.
\newblock PhD thesis, Rijksuniversiteit te Utrecht, 1990.

\bibitem{Sja96}
R. Sjamaar. \newblock 
{\em Symplectic reduction and Riemann-Roch formulas for multiplicities}.
\newblock {\em  Bull. Amer. Math. Soc. (N.S.)}, 33: 327--338, 1996.

\bibitem{SL91}
R.~Sjamaar and E.~Lerman.
\newblock Stratified symplectic spaces and reduction.
\newblock {\em Ann. of Math. (2)}, 134:375--422, 1991.

\bibitem{Sny}
J.~{\'S}niatycki.
\newblock {\em Differential geometry of singular spaces and reduction of
  symmetry}, volume~23 of {\em New Mathematical Monographs}.
\newblock Cambridge University Press, Cambridge, 2013.

\bibitem{Sundermeyer}
K.~Sundermeyer.
\newblock {\em Symmetries in fundamental physics}, volume 176 of {\em
  Fundamental Theories of Physics}.
\newblock Springer, Cham, second edition, 2014.

\bibitem{Teleman}
C.~Teleman.
\newblock The quantization conjecture revisited.
\newblock {\em Ann. of Math. (2)}, 152:1--43, 2000.

\bibitem{TianZhang}
Y.~Tian and W.~Zhang.
\newblock An analytic proof of the geometric quantization conjecture of
  {G}uillemin-{S}ternberg.
\newblock {\em Invent. Math.}, 132:229--259, 1998.

\bibitem{Wolf}
J.~A. Wolf.
\newblock Essential self-adjointness for the {D}irac operator and its square.
\newblock {\em Indiana Univ. Math. J.}, 22:611--640, 1972/73.

\bibitem{Woodhouse}
N.M.J. Woodhouse. \newblock
{\em Geometric Quantization, Second Edition}.
\newblock  Clarendon Press, Oxford, 1991.

\bibitem{Zhang}
W.~Zhang.
\newblock Holomorphic quantization formula in singular reduction.
\newblock {\em Commun. Contemp. Math.}, 1:281--293, 1999.

\end{thebibliography}

\end{document}